\newtheorem{theorem}{Theorem}
\newtheorem{corollary}{Corollary}
\newcommand{\Comment}[1]{{}}
\definecolor{darkblue}{rgb}{0.15,0.35,0.55}
\definecolor{reddish}{rgb}{0.65, 0.2, 0.2}
\renewcommand\section{\@startsection {section}{1}{\z@}%
                                   {-3.5ex \@plus -1ex \@minus -.2ex}%nn
                                   {2.3ex \@plus.2ex}%
                                   {\normalfont\large\bfseries}}
\renewcommand\subsection{\@startsection{subsection}{2}{\z@}%
                                     {-3.25ex\@plus -1ex \@minus -.2ex}%
                                     {1.5ex \@plus .2ex}%
                                     {\normalfont\bfseries}}
\let\non\nonumber
\newcommand{\bea}{\begin{eqnarray}}
\newcommand{\eea}{\end{eqnarray}}
\newcommand{\be}{\begin{equation}}
\newcommand{\ee}{\end{equation}}
\newcommand{\bma}{\begin{pmatrix}}
\newcommand{\ema}{\end{pmatrix}}
\newcommand{\bsubeq}{\begin{subequations}}
\newcommand{\esubeq}{\end{subequations}}
\newcommand{\bsubea}{\begin{subequations}\bea}
\newcommand{\esubea}{\eea\end{subequations}}
\newfont{\goth}{ygoth.tfm scaled 1200}                   % gothic font (usual)
 \numberwithin{equation}{section}
\def\1{{(1)}}
\def\2{{(2)}}
\def\3{{(3)}}
\newcommand{\overbar}[1]{\mkern 1.5mu\overline{\mkern-1.5mu#1\mkern-1.5mu}\mkern 1.5mu}
\def\TT{{T\overbar{T}}}
\def\Gt{{\widetilde{G}}}
\def\Ft{{\widetilde{F}}}
\def\s{\sigma}
\def\a{\alpha}
\def\b{\beta}
\def\d{\delta}
\def\g{\gamma}
\def\l{\lambda}
\def\m{\mu}
\def\n{\nu}
\def\s{\sigma}
\def\Q{\Theta}
\def\U{\Upsilon}
\newcommand{\ad}{{\dot\alpha}}
\newcommand{\pa}{\partial}
\newcommand{\hf}{\frac12}
\newcommand {\cN}{{\cal N}}
\newcommand{\cO}{{\mathcal O}}
\newcommand{\cL}{{\mathcal L}}
\newcommand{\cE}{{\mathcal E}}
\newcommand{\cH}{{\mathcal H}}
\newcommand{\sE}{{\mathscr{E}}}
\newcommand{\sH}{{\mathscr{H}}}
\DeclareMathOperator{\sech}{sech}
\renewcommand{\be}{\begin{equation}}
\renewcommand{\ee}{\end{equation}}
\newcommand{\bpm}{\begin{pmatrix}}
\newcommand{\epm}{\end{pmatrix}}
\newcommand{\beqn}{\begin{eqnarray}}
\newcommand{\eeqn}{\end{eqnarray}}
\DeclareMathOperator{\tr}{tr}
\begin{document}

\begin{titlepage}
\begin{flushright}
\today \\
\end{flushright}
\vspace{5mm}

\begin{center}
{\Large \bf 
Duality-Invariant Non-linear Electrodynamics \\
and Stress Tensor Flows 
}
\end{center}

\begin{center}

{\bf
Christian Ferko,${}^{a}$
Sergei M. Kuzenko,${}^{b}$
Liam Smith,${}^{c}$ and\\
Gabriele Tartaglino-Mazzucchelli${}^{c}$
} \\
\vspace{5mm}
\footnotesize{
${}^{a}$
{\it 
Center for Quantum Mathematics and Physics (QMAP), 
\\ Department of Physics \& Astronomy,  University of California, Davis, CA 95616, USA
}
 \\~\\
${}^{b}$
{\it 
Department of Physics M013, The University of Western Australia
\\
35 Stirling Highway, Perth W.A. 6009, Australia
}
 \\~\\
${}^{c}$
{\it 
School of Mathematics and Physics, University of Queensland,
\\
 St Lucia, Brisbane, Queensland 4072, Australia}
 }
\vspace{2mm}
~\\
\texttt{caferko@ucdavis.edu, 
sergei.kuzenko@uwa.edu.au, 
liam.smith1@uq.net.au, 
g.tartaglino-mazzucchelli@uq.edu.au}\\
\vspace{2mm}

\end{center}

\begin{abstract}
\baselineskip=14pt

\noindent  
Given a model for self-dual non-linear electrodynamics in four spacetime dimensions, any deformation of this theory which is constructed from the duality-invariant energy-momentum tensor preserves duality invariance. In this work we present new proofs of this known result, and also establish a previously unknown converse: any parameterized family of duality-invariant Lagrangians, all constructed from an Abelian field strength $F_{\mu \nu}$ but not its derivatives, is related by a generalized stress tensor flow, in a sense which we make precise. We establish this and other properties of stress tensor deformations of theories of non-linear electrodynamics using both a conventional Lagrangian representation and using two auxiliary field formulations. We analyze these flows in several examples of duality-invariant models including the Born-Infeld and ModMax theories, and we derive a new auxiliary field representation for the two-parameter family of ModMax-Born-Infeld theories. These results suggest that the space of duality-invariant theories may be characterized as a subspace of theories of electrodynamics with the property that all tangent vectors to this subspace are operators constructed from the stress tensor.

\end{abstract}
\vspace{5mm}

\vfill
\end{titlepage}

\newpage
\renewcommand{\thefootnote}{\arabic{footnote}}
\setcounter{footnote}{0}

\tableofcontents{}
\vspace{1cm}
\bigskip\hrule

%%%%%%%%%%%%%%%%%%%%%%%%%%%%%%%%%%%%%%%%%%%%%%%%%%%%%%
%%%%%%%%%%%%%%%%%%%%%%%%%%%%%%%%%%%%%%%%%%%%%%%%%%%%%%
%%%%%%%%%%%%%%%%%%%%%%%%%%%%%%%%%%%%%%%%%%%%%%%%%%%%%%

\allowdisplaybreaks

%%%%%%%%%%%%%%%%%%%%%%%%%%%%%%%%%%%%%%%%%%%%%%%%%%%%%%%%%%%%%%%%
\section{Introduction} 
\label{sec:intro}

A deeper understanding of the phenomenon of duality has been a remarkable source of progress in theoretical physics. Broadly speaking, a duality is any correspondence in which there exist two -- seemingly different -- descriptions of the same physical system. 

One general mechanism by which such correspondences emerge is strong-weak duality. This term often refers to the S-duality of type IIB string theories \cite{Schwarz:1993vs,Sen:1994yi,Schwarz:1995dk} in which the axio-dilaton $\tau = C_0 + \frac{i}{g_s}$ transforms via an $SL( 2 , \mathbb{Z} )$ transformation; the closely related Montonen-Olive duality \cite{Montonen:1977sn} involves a similar transformation on the complex coupling $\tau = \frac{\theta}{2 \pi} + \frac{4 \pi i}{g^2}$ in $4d$ supersymmetric gauge theories. This class of strong-weak or electric-magnetic dualities generalize the electromagnetic duality of Maxwell's equations, which form the simplest and earliest example within this class, and which will be the focus of the present work.

The basic observation of electromagnetic duality is that, in the presence of both electric sources $j_{\rm e}^\mu$ and magnetic sources $j_{\rm m}^\mu$, the equations of motion for the Maxwell theory are
\begin{align}\label{maxwell_with_sources}
    \partial_\nu F^{\mu \nu} = j_{\rm e}^\mu \, , \qquad \partial_\nu \widetilde{F}^{\mu \nu} = j_{\rm m}^\mu \, , 
\end{align}
where $\widetilde{F}^{\mu \nu} = \frac{1}{2} \epsilon^{\mu \nu \rho \sigma} F_{\rho \sigma}$ is the Hodge dual of $F_{\mu \nu}$. The equations (\ref{maxwell_with_sources}) are invariant under the simultaneous replacements
\begin{align}\label{duality_sources}
    F^{\mu \nu} \to \widetilde{F}^{\mu \nu} \, , \quad \widetilde{F}^{\mu \nu} \to - F^{\mu \nu} \, , \quad j_{\rm e}^\mu \to j_{\rm m}^\mu \, , \quad j_{\rm m}^\mu \to - j_{\rm e}^\mu \, .
\end{align}
This duality transformation (\ref{duality_sources}) exchanges both electric and magnetic fields, along with electric and magnetic sources. For instance, point electric charges are traded for magnetic monopoles, and vice-versa, under this map. This makes it straightforward to see why such a transformation is also referred to as a strong-weak duality. By the Dirac quantization condition, the magnetic coupling constant is the inverse of the electric coupling; the latter is the usual fine structure constant. Thus we conventionally think of an electric charge as a weakly coupled particle and a magnetic monopole as a strongly coupled soliton. The duality (\ref{duality_sources}) therefore interchanges a weak-coupling object with a strong-coupling object.

In general, a duality relates a pair of descriptions in two \emph{different} theories. Because the couplings are part of the data that defines a physical theory, the strong-weak duality exchanging electrically charged particles and magnetic monopoles can be viewed as a correspondence beween a theory with coupling $g$ and a theory with coupling $\frac{1}{g}$.\footnote{Likewise, the Montonen-Olive duality of super-Yang-Mills relates a theory with one choice of the coupling $g$ and theta angle $\theta$ to a theory with different values of these two parameters \cite{Witten:1978mh, Osborn:1979tq}.}

However, in special cases a duality transformation relates two instances of the \emph{same} physical theory. Such a theory is said to be self-dual. One example is the vacuum Maxwell theory, which corresponds to the equations of motion (\ref{maxwell_with_sources}) with $j_{\rm e}^\mu = j_{\rm m}^\mu = 0$. In this case, there are no coupling constants for either electrically charged particles or magnetically charged monopoles, and thus the duality transformation (\ref{duality_sources}) simply exchanges the electric and magnetic fields with no further modifications.

Self-duality is a form of enhanced symmetry that a particular theory might enjoy which imposes additional constraints. For instance, the electric-magnetic duality of the Maxwell theory implies a certain statement of helicity conservation \cite{Calkin}. A great deal of previous work has been devoted to studying the self-duality of theories of non-linear electrodynamics; see for instance \cite{Gaillard:1981rj,Gibbons:1995cv,Gaillard:1997rt,Gaillard:1997zr,Hatsuda:1999ys,Kuzenko:2000uh,Ivanov:2002ab,Ivanov:2003uj} and references therein. It is therefore of great interest to characterize which other theories exhibit self-duality, and to better understand the interplay between self-duality and other properties.

More precisely, by ``self-dual non-linear electrodynamics'' we understand $U(1)$ duality-invariant non-linear extensions of Maxwell's theory. Self-duality under $U(1)$ duality rotations implies self-duality under a Legendre transformation \cite{Gaillard:1997rt}.
In order for a theory with Lagrangian $\cL (F)$ to possess $U(1)$ duality invariance, 
the Lagrangian must satisfy the so-called self-duality equation\footnote{The terminology ``self-duality equation'' was introduced by Gaillard and Zumino \cite{Gaillard:1997rt}.}
\cite{B-B, Gibbons:1995cv,Gaillard:1997rt,Gaillard:1997zr}
\bea
F^{\m\n}  \widetilde{F}_{\m\n} + G^{\m\n}  \widetilde{G}_{\m\n} =0~, \qquad
\Gt_{\mu \nu} = 2 \frac{\partial \mathcal{L}}{\partial F^{\mu \nu}} ~ .
\label{SDequation}
\eea
The formalism of \cite{Gaillard:1981rj,Gibbons:1995cv,Gaillard:1997rt,Gaillard:1997zr}
was extended to duality-invariant theories 
with higher derivatives\footnote{Further aspects of duality-invariant theories 
with higher derivatives were studied, e.g., in \cite{AFZ,Chemissany:2011yv,AF,AFT}.} 
 \cite{Kuzenko:2000uh}, as well as to the case of general $U(1)$ duality-invariant $\cN=1$ and $\cN=2$  supersymmetric theories  \cite{Kuzenko:2000tg, Kuzenko:2000uh}. For a comprehensive review of these and related developments, see \cite{Kuzenko:2000uh,AFZ}.
In this paper our analysis is restricted to self-dual models for non-linear electrodynamics without higher derivatives.

Quite generally, a useful way to understand any desirable feature of a physical system is to study its behavior under deformations. For instance, one might begin with a self-dual theory of electrodynamics such as Maxwell -- we will also refer to such theories as duality-invariant -- and ask whether the property of duality-invariance is preserved under some class of deformations.

This brings us to the second broad topic of this work, which is deformations of field theories that are constructed from the energy-momentum tensor. At the classical level, we define such a deformation via a differential equation of the form
\begin{align}\label{flow_general}
    \frac{\partial \mathcal{L}^{(\lambda)}}{\partial \lambda} = \mathcal{O} \left( T_{\mu \nu}^{(\lambda)};\l \right) \, , 
\end{align}
where the object $\mathcal{O} \big( T_{\mu \nu}^{(\lambda)};\l \big)$ is any Lorentz scalar constructed from the Hilbert stress tensor\footnote{We will use the terms ``energy-momentum tensor,'' ``stress-energy tensor,'' and ``stress tensor'' interchangeably to refer to this object.} associated with the theory $\mathcal{L}^{(\lambda)}$. The latter is defined by
\begin{align}
    T_{\mu \nu}^{(\lambda)} = - 2 \frac{\partial \mathcal{L}^{(\lambda)}}{\partial g^{\mu \nu}} + g_{\mu \nu} \mathcal{L}^{(\lambda)} \, .
\end{align}

Beginning from an initial condition $\mathcal{L}^{(\lambda = 0)} = \mathcal{L}_0$, which we refer to as the seed theory, the solution to the differential equation (\ref{flow_general}) produces a one-parameter family of Lagrangians labeled by a flow parameter $\lambda$.

The most famous flow equation of this form is the $\TT$ deformation of two dimensional quantum field theories, which was introduced in \cite{Zamolodchikov:2004ce} and further explored in \cite{Smirnov:2016lqw,Cavaglia:2016oda}. This $\TT$ operator, which in two dimensions is proportional to the determinant of the energy-momentum tensor, has the remarkable property that it can be used to define not only a classical flow equation for the Lagrangian, but even a fully quantum mechanical deformation of a $2d$ QFT. The definition of the quantum $\TT$ deformation relies on the fact that the coincident point limit
\begin{align}\label{TT_defn}
    \mathcal{O}_{\TT} ( x ) = \lim_{y \to x} \left( T^{\mu \nu} ( x ) T_{\mu \nu} ( y ) - \tensor{T}{^\mu_\mu} ( x ) \tensor{T}{^\nu_\nu} ( y ) \right) \, , 
\end{align}
defines a local operator in the spectrum of any translation-invariant two-dimensional quantum field theory, up to total derivative ambiguities, as shown in \cite{Zamolodchikov:2004ce}. 

Although the combination of stress tensors appearing in (\ref{TT_defn}) has dimension $4$, and is thus irrelevant in the Wilsonian sense, surprisingly this deformation is still solvable in that one can often compute quantities in the deformed theory at finite $\lambda$. Examples include the deformed finite-volume spectrum \cite{Smirnov:2016lqw,Cavaglia:2016oda}, $S$-matrix \cite{Dubovsky:2017cnj}, and torus partition function \cite{Cardy:2018sdv,Datta:2018thy,Aharony:2018bad}; each of these observables admits some relation between the quantity in the deformed theory at finite $\lambda$ and the seed theory at $\lambda = 0$. Another property of the $\TT$ flow is that it often preserves symmetries and other desirable features of the seed theory, such as integrability \cite{Smirnov:2016lqw,Chen:2021aid} and supersymmetry \cite{Baggio:2018rpv,Chang:2018dge,Jiang:2019hux,Chang:2019kiu,Coleman:2019dvf,Ferko:2019oyv,Ebert:2022xfh,Lee:2021iut,Lee:2023uxj}. See \cite{Jiang:2019epa} for a review of other results concerning $\TT$ deformations.

In spacetime dimensions $d > 2$, it is not known how to define an analogue of the local $\TT$ operator at the quantum level; discussions of possible generalizations can be found in \cite{Taylor:2018xcy,Bonelli:2018kik}. However, one might hope to find clues about potentially interesting operators by investigating purely classical flows for the Lagrangian which take the form (\ref{flow_general}). One reason to expect that this might be useful is that the analogous classical flows in $d = 2$ also exhibit interesting structures. For instance, the classical $\TT$ flow equation deforms the seed theory of a single free scalar field in $d = 2$ into the theory of a gauge-fixed Nambu-Goto string in a three-dimensional target space \cite{Cavaglia:2016oda}. Likewise, in four spacetime dimensions, the classical flow equation
\begin{align}\label{TTbar_flow}
    \frac{\partial \mathcal{L}^{(\lambda)}}{\partial \lambda} = \frac{1}{8} \left( T^{\mu \nu} T_{\mu \nu} - \frac{1}{2} \left( \tensor{T}{^\mu_\mu} \right)^2 \right) \, , 
\end{align}
with a seed theory corresponding to the Maxwell Lagrangian, $\mathcal{L}_0 = - \frac{1}{4} F_{\mu \nu} F^{\mu \nu}$, has a solution which is the Born-Infeld theory describing the effective gauge dynamics on a brane \cite{Conti:2018jho}. This is a hint that stress tensor deformations appear to be related to theories of strings and branes.\footnote{There is another connection between little string theory and the single trace $\TT$ operator of \cite{Giveon:2017nie,Asrat:2017tzd,Giveon:2017myj}, whose properties such as the deformed spectrum can be understood holographically via a gravity analysis \cite{Apolo:2019zai,Chang:2023kkq}.
}

A similar classical flow equation can be defined which deforms the Maxwell theory into the Born-Infeld theory in $d = 3$, or which deforms a free scalar into the Nambu-Goto action in any spacetime dimension \cite{Ferko:2023sps}. However, these more general flow equations require a new ingredient: one must also introduce an object of the form
\begin{align}\label{general_rTT}
    \mathcal{R} = \sqrt{ \frac{1}{d} T^{\mu \nu} T_{\mu \nu} - \frac{1}{d^2} \left( \tensor{T}{^\mu_\mu} \right)^2 } \, .
\end{align}
When $d = 2$, this combination (\ref{general_rTT}) reduces to the root-$\TT$ operator introduced in \cite{Ferko:2022cix}; related work can be found in \cite{Rodriguez:2021tcz,Bagchi:2022nvj,Tempo:2022ndz}. Unlike the irrelevant $\TT$ operator, the root-$T^2$ operator $\mathcal{R}$ is classically marginal. It appears to enjoy some of the desirable features of the $\TT$ deformation, such as preserving classical integrability for certain $2d$ models \cite{Borsato:2022tmu}, although it is not known whether the $2d$ root-$\TT$ operator can be defined at the quantum level.\footnote{A proposed flow equation for the finite-volume spectrum of a $2d$ CFT deformed by root-$\TT$, which would represent a quantum result, was presented in \cite{Ebert:2023tih} based on a holographic analysis similar to that of \cite{Guica:2019nzm,Ebert:2022ehb}.} However, our primary motivation for studying the combination (\ref{general_rTT}) is that it can be used to build flow equations which lead to interesting classical actions. For instance, solving the flow equation
\begin{align}\label{root_TTbar_flow}
    \frac{\partial \mathcal{L}^{({\gamma)}}}{\partial \gamma} = \frac{1}{2} \sqrt{ T^{\mu \nu} T_{\mu \nu} - \frac{1}{4} \left( \tensor{T}{^\mu_\mu} \right)^2 } \, ,
\end{align}
with a Maxwell seed, which is a deformation by $\mathcal{R}$ in $d = 4$, gives a solution,
\begin{align}\label{modmax_lagrangian}
    \mathcal{L}_{\text{ModMax}} = 
    - \frac{1}{4} \cosh ( \gamma ) F^{\mu \nu} F_{\mu \nu}
    + \frac{1}{4} \sinh ( \gamma ) 
    \sqrt{ 
    \left( F^{\mu \nu} F_{\mu \nu} \right)^2 + \left( F^{\mu \nu} \widetilde{F}_{\mu \nu} \right)^2
    } \, ,
\end{align}
which is the Modified Maxwell or ModMax theory introduced in \cite{Bandos:2020jsw}. This ModMax theory is of considerable interest because it is the unique conformally invariant and electromagnetic duality-invariant extension\footnote{The program to combine $U(1)$ duality invariance with $\cN=2$ superconformal symmetry was put forward in 2000  \cite{Kuzenko:2000tg}. It was completed in \cite{Kuzenko:2021cvx}, where the $\cN=2$ superconformal $U(1)$ duality-invariant model was proposed to describe the low-energy effective action for $\cN=4$ super-Yang-Mills theory. In the $\cN=0$ and $\cN=1$ cases, non-linear $U(1)$ duality-invariant (super)conformal theories do not possess a weak field limit.}
 of the $4d$ Maxwell theory.\footnote{See \cite{Sorokin:2021tge} for an instructive set of lectures on theories of non-linear electrodynamics, including ModMax. } Several related ModMax-like theories have also been studied, including a supersymmetric extension \cite{Bandos:2021rqy,Kuzenko:2021cvx}, a two-parameter family of ModMax-Born-Infeld theories and $6d$ tensor analogues \cite{Bandos:2020hgy}, a $(0+1)$-dimensional ModMax-like harmonic oscillator \cite{Garcia:2022wad,Ferko:2023ozb,Ferko:2023iha}, and a supersymmetric non-linear sigma model  whose Lagrangian has a structure similar to that of ModMax \cite{Kuzenko:2023ysh}.\footnote{This duality-invariant supersymmetric $\s$-model is known as  the MadMax $\s$-model  \cite{Kuzenko:2023ysh}.} 

The relationship between stress tensor flows and these various theories of non-linear electrodynamics has, to some degree, already been explored in several works \cite{Babaei-Aghbolagh:2020kjg,Babaei-Aghbolagh:2022uij,Babaei-Aghbolagh:2022itg,Ferko:2022iru,Ferko:2023ruw}. However, one point merits further investigation, which brings us back to our preceding discussion on duality invariance. All of the theories of electrodynamics that we have discussed here -- Born-Infeld, ModMax, and ModMax-Born-Infeld -- are special insofar as they are invariant under electric-magnetic duality transformations. One might have expected this property because all of these theories can be realized as stress tensor deformations of the Maxwell theory. Because the Maxwell theory is electromagnetic duality invariant, and the energy-momentum tensor of a self-dual theory is also a duality-invariant quantity, it seems natural that any stress tensor flow will also preserve duality invariance. Indeed this is the case, as was pointed out in \cite{Ferko:2023ruw} and will be reviewed in the present work.

This motivates a more detailed study of the relationship between the two topics that we have discussed in this introduction, namely duality invariance and stress tensor deformations. For example, one might ask whether every duality-preserving deformation of a self-dual theory of electrodynamics is also a stress tensor deformation. We will see that this is the case, at least for theories without higher-derivative interactions. It is also natural to wonder whether the interplay between stress tensor flows and duality invariance can be made more transparent using an auxiliary field formulation which makes self-duality manifest, and we will explore this topic as well. Together these results paint a picture which suggests a deeper connection between deformations driven by conserved quantities and various notions of self-duality, and one might hope that some of these insights generalize to other instances of strong-weak duality.

The layout of this paper is as follows. In Section \ref{sec:duality_and_flows}, we review various properties of $\TT$ flows in $4d$ duality-invariant theories of electrodynamics, and prove that deformations of such theories by duality-invariant functions (such as those constructed from the stress tensor) preserve duality invariance. Section \ref{sec:auxreview} reviews the two auxiliary field formulations, referred to as the $\nu$ and $\mu$ representations, which were introduced by Ivanov and Zupnik in \cite{Ivanov:2001ec} and that we employ in this paper. In Section \ref{sec:stresstensor}, we obtain expressions for components of the stress tensor of duality-invariant theories in the $\nu$ and $\mu$ representations; these expressions can be used to define generic flow equations. Section \ref{sec:generalargument} shows that parameterized families of duality-invariant theories in the auxiliary field representations satisfy stress tensor flow equations ``almost everywhere'' (that is, away from a set of measure zero). We collect several examples of flows for duality-invariant theories in Section \ref{sec:examples}, and present a new $\mu$-frame definition of the ModMax-Born-Infeld theory. Finally, in Section \ref{sec:conclusion} we conclude and identify directions for future research. The details of various technical computations have been included in Appendix \ref{Appendix-A}.

\section{Self-dual non-linear electrodynamics and $\TT$-like flows}\label{sec:duality_and_flows}

In this section we consider a generic theory of non-linear electrodynamics described by a Lagrangian $\cL=\cL(F_{\mu\nu})$ with $F_{\mu\nu}=(\pa_{\mu}A_\nu-\pa_{\nu}A_\mu)$ being the field strength for an Abelian gauge field $A_\mu$. Note that we do not consider higher-derivative Lagrangians where $\cL$ could have functional dependence on derivatives of $F_{\mu\nu}$. One of the main aims of our paper is to understand how electric-magnetic duality invariance behaves in general under the flow equation (\ref{flow_general}).
Our analysis links this problem to $\TT$-like flows.

%%%%%%%%%%%%%%%%%%%%%

\subsection{Generalities}
\label{sec:generalities}

Generic models of our interest can be parametrised in terms of Lorentz invariant Lagrangians of the form $\cL=\cL(S,P)$ with\footnote{ Gaillard and Zumino \cite{Gaillard:1997rt} worked with the invariants $\a= -S$ and $\b = - P$, and the same variables were also used in \cite{Kuzenko:2000uh}. Our notation \eqref{SandP} follows \cite{Bandos:2020jsw}. }
\bea
    S 
    = 
    -\frac{1}{4}F^{\mu\nu}F_{\mu\nu}, \quad P = -\frac{1}{4}F_{\mu\nu}\widetilde{F}^{\mu\nu}, \quad \widetilde{F}^{\mu\nu} = \frac{1}{2}\epsilon^{\mu\nu\lambda\tau}F_{\lambda\tau}
    ~.
    \label{SandP}
\eea
It is well-known that only two independent real Lorentz invariant combinations of $F_{\mu\nu}$ can be constructed, and these can be efficiently described by the two quadratic combinations $S$ and $P$ given above.\footnote{For the matrices $F=(F^\m{}_\n) $ and $\widetilde{F}=(\widetilde{F}^\m{}_\n) $, the following identities hold \cite{Schwinger:1951nm}: $F \widetilde{F} = \widetilde{F} F = P {\mathbbm 1} $
and $FF - \widetilde{F}\widetilde{F} = 2 S {\mathbbm 1}$, which allow one to express any invariant of the electromagnetic field in terms of $S$ and $P$. In particular, these identities imply that $F^4 - 2S F^2 - P^2 {\mathbbm 1}=0$ and $(F_\pm)^2 = \hf (S\pm i P) {\mathbbm 1}$, where we have introduced $F_\pm = \hf ( F \pm i \widetilde{F} )$. Therefore, the eigenvalues of $F$ are: $\pm \frac{1}{\sqrt{2}} \big( \sqrt{S+i P} +\sqrt{S-i P} \,\big)$ and $\pm \frac{1}{\sqrt{2}} \big( \sqrt{S+i P} -\sqrt{S-i P}\, \big)$. }
Alternatively, one could use the following two Lorentz invariant combinations of $F_{\mu\nu}$:
\begin{align}
    x_1 = F_{\mu \nu} F^{\nu \mu} = \tr ( F^2 ) \, , \qquad x_2 = F^{\mu \sigma} F_\sigma^{\; \; \, \nu} F_\nu^{\; \; \, \rho} F_{\rho \mu} = \tr ( F^4 ) \, ,
\end{align}
which are related to $S$ and $P$ as
\begin{align}
    x_1 = 4 S \, ,\quad x_2 = 4 P^2 + 8 S^2 ~~~\Longleftrightarrow~~~
    S=\frac{1}{4}x_1  \, ,\quad P=\pm\frac{1}{2}\sqrt{x_2 -\frac{1}{2} (x_1)^2}
    \,.
\end{align}
It is clear that one could use $\cL=\cL(S,P)$ or $\cL=\cL(x_1,x_2)$ as long as one imposes the physical conditions $x_2\geq\hf(x_1)^2$, $x_1\in\mathbb{R}$.

We are interested in families of Lorentz invariant Lagrangians that can be parametrised as $\cL^{(\l)}=\cL(S,P;\lambda)$ or equivalently $\cL^{(\l)}=\cL(x_1,x_2;\lambda)$, with $\lambda$ being, in general, a dimensionful coupling constant and with $\cL^{(\l)}$ being differentiable with respect to $\l$, so that there exists a flow equation
\bea
\frac{\pa\cL^{(\l)}}{\pa\l}:=\cO^{(\l)}
~.
\label{Flow-000}
\eea
Once more, we stress that the operator $\cO$ could be expressed as $\cO^{(\l)}=\cO(S,P;\l)$
or $\cO^{(\l)}=\cO(x_1,x_2;\l)$. 

The equation above can be interpreted geometrically as the statement that the operator $\cO^{(\l)}$ is the tangent vector to a curve in the space of theories, where the points on this curve are the Lagrangians $\cL^{(\l)}$. Given a specific choice of $\cO^{(\l)}$, the same equations can, in principle, be integrated to obtain $\cL^{(\l)}$. This is the same logic used to define models through $\TT$-like flows. These are formally defined as flow equations of the form \eqref{Flow-000} in the special case where the operator is only a function of the energy-momentum tensor $T_{\mu\nu}$, so that $\cO^{(\l)}=\cO(T^{(\l)}_{\mu\nu};\l)$. In fact, a parameterization in terms of the energy-momentum tensor is preferable: it allows us to interpret the tangent vector to the curve as a function only of a particular theory $\mathcal{L}^{(\lambda)}$, rather than depending on the Lorentz invariant kinematic combinations of the electromagnetic field strength in a theory-independent way.

To study classical flow equations, in our paper, we will define $T^{(\l)}_{\mu\nu}$ to be the Hilbert energy-momentum tensor computed from the Lagrangian $\cL^{(\l)}$. A straightforward calculation shows that for a generic Lagrangian $\cL=\cL(F_{\mu\nu})$ the stress tensor is
\bea
T_{\mu \nu} &=& 
\eta_{\mu \nu} \mathcal{L} 
- 4 \frac{\partial \mathcal{L}}{\partial x_1} F^2_{\mu\nu}
- 8 \frac{\partial \mathcal{L}}{\partial x_2} F^4_{\mu\nu}
 \,,
 \label{general_stress_tensor}
 \eea
 with
 \bea
 F^2_{\mu\nu}:=F_{\mu}{}^{\rho}F_{\rho\nu}
 \,,\qquad
 F^4_{\mu\nu}:=F_{\mu}{}^{\rho}F_{\rho}{}^{\tau}F_{\tau}{}^{\sigma}F_{\sigma\nu}
 \,.
\eea
Here, for convenience, we have used in eq.~\eqref{general_stress_tensor} the parametrisation of $\cL$ in terms of $x_1$ and $x_2$, though it is trivial to express the result in terms of $\cL(S,P)$ and its derivatives with respect to  $S$ and $P$ together with the combinations $F^2_{\mu\nu}$ and $F^4_{\mu\nu}$. 

In classifying generic $\TT$-like operators, $\cO(T_{\mu\nu})$, it is useful to identify a basis of Lorentz invariant real scalars obtained from the energy-momentum tensor. For generic Lagrangians $\cL(F_{\mu\nu})$, it suffices to consider the trace of $T_{\mu\nu}$ and the trace of its square:
\bsubeq
\label{traces_T-0}
\bea
\Theta&=&
4\left(\cL
-x_1\frac{\partial \mathcal{L}}{\partial x_1} 
-2x_2 \frac{\partial \mathcal{L}}{\partial x_2} 
\right)
\,,
\label{trace_stress_tensor-0}
\\
T^2
&=&~
16x_2\left( \frac{\partial \mathcal{L}}{\partial x_1} \right)^2 
- 8x_1 \left( x_1^2 - 6 x_2 \right) 
\frac{\partial \mathcal{L}}{\partial x_1}
\frac{\partial \mathcal{L}}{\partial x_2} 
+16\left(x_2^2 + x_1^2 x_2 - \frac{1}{4} x_1^4 \right) 
\left( \frac{\partial \mathcal{L}}{\partial x_2} \right)^2 
\non\\
&&
- 8\mathcal{L} 
\left( x_1 \frac{\partial \mathcal{L}}{\partial x_1} 
+ 2 x_2 \frac{\partial \mathcal{L}}{\partial x_2} \right) 
+4 \mathcal{L}^2 
\, .
\label{trace_T2-0}
\eea
\esubeq
Here, we have introduced the notation
\bea
\Theta:=T^\mu{}_\mu
~,\qquad
T^2:=T^{\mu\nu}T_{\mu\nu}
~.
\eea 
For theories based on a single Abelian gauge field, traces of more than four field strengths $F_{\mu\nu}$ are functions of $x_1$ and $x_2$ only. For this reason, traces of the product of more than two $T_{\mu\nu}$ (e.g. $T_{\mu}{}^\nu T_{\nu}{}^\rho T_{\rho}{}^\mu$) are not independent structures --- see for example the discussion in chapter 7 of \cite{Ferko:2021loo}.   This fact shows that, for this class of Lorentz invariant models, a $\TT$-like flow equation is always going to be of the form
\bea
\frac{\pa\cL^{(\l)}(x_1,x_2)}{\pa\l}=\cO^{(\l)}(T_{\mu\nu})
=\cO^{(\l)}(\Theta,T^2)
=\cO^{(\l)}(x_1,x_2)
~,
\label{flow-001}
\eea
indicating that these flows are always associated with partial differential equations for functions of $x_1$, $x_2$  and of the parameter $\l$ (or of many parameters $\l_i$, $i=1,\cdots,n$, if the Lagrangian has several deformations). 

Equation \eqref{flow-001} could equivalently be expressed as a closed equation in $S$, $P$ and $\l$. In fact, the equations \eqref{traces_T-0} simplify when expressed in terms of $\cL^{(\l)}=\cL(S,P;\l)$. One finds
\bsubeq
\label{traces_T-1}
\bea
\Theta&=&
4\left(
\mathcal{L} 
-P\mathcal{L}_P
-S\mathcal{L}_S
\right)
\,,
\label{trace_stress_tensor-1}
\\
T^2
&=&~
4\left( S^2 + P^2 \right) \mathcal{L}_S^2
+4\left( \mathcal{L} - P \mathcal{L}_P - S \mathcal{L}_S \right)^2
\, ,
\label{trace_T2-1}
\eea
\esubeq
where we have started to use the notation 
$\cL_S:=\frac{\pa \cL}{\pa S}$,
$\cL_P:=\frac{\pa \cL}{\pa P}$,
$\cL_{SP}:=\frac{\pa^2 \cL}{\pa S\pa P}$, etc. 
Interestingly, equation \eqref{trace_T2-1} shows that, for physically relevant models where $\cL_S\ne0$ (such as Maxwell theory and its deformations), $T^2$ is a non-negative number. Moreover, we see that there is a particularly interesting combination given by the trace in Lorentz indices of the square of the traceless part of the energy-momentum tensor:
\bea
\widehat{T}^2
:=
\hat{T}^{\mu\nu}\hat{T}_{\mu\nu}
=4\left( S^2 + P^2 \right) \mathcal{L}_S^2
~,~~~~~~
\hat{T}_{\mu\nu}
=T_{\mu\nu}-\frac{1}{4}\eta_{\mu\nu}\Theta
~,
\label{trace_hatT-0}
\eea
which is also non-negative, $\widehat{T}^2\geq0$. In the following, we will often use $\Theta$ and $\widehat{T}^2$ to parameterise the operator of a  general $\TT$-like deformation $\cO=\cO(T_{\mu\nu};\l)=\cO(\Theta,\widehat{T}^2;\l)$.

Note that the equations \eqref{traces_T-1} define the two Lorentz invariants built from $T_{\mu\nu}$ as functions of $S$ and $P$, so $(\Theta,{T}^2)=(\Theta(S,P),{T}^2(S,P))$. This can be interpreted as a change of variables from $(S, P)$ to $(\Theta, T^2)$. The Jacobian matrix for this transformation is
\begin{align}\label{jacobian}
    J = \begin{bmatrix} \frac{\partial \Theta}{\partial S} & \frac{\partial \Theta}{\partial P} \\ \frac{\partial T^2}{\partial S} & \frac{\partial T^2}{\partial P} \end{bmatrix} \, .
\end{align}
For a generic function $\mathcal{L} ( S, P )$, $J$ is non-degenerate and one can locally invert the change of coordinates as $(S,P)=(S(\Theta,{T}^2),P(\Theta,{T}^2))$. This fact is however misleading since the most interesting physical models (including Maxwell theory, all self-dual models, and all $\TT$-like flows connected to Maxwell) fail to have an invertible map of this type. In fact, it is straightforward to show that, if the Lagrangian $\mathcal{L} ( S, P )$ satisfies the self-duality equation \eqref{SDequation},
\begin{align}
    \mathcal{L}_S^2 - \frac{2 S}{P} \mathcal{L}_S \mathcal{L}_P - \mathcal{L}_P^2 = 1 \, , 
    \label{EM_duality_pde}
\end{align}
then the Jacobian (\ref{jacobian}) for this transformation satisfies
\begin{align}\label{jacobian_zero}
    \det \left( J \right) = 0 \, .
\end{align}
The details of this calculation have been relegated to Appendix \ref{app:determinant}. The vanishing of this Jacobian determinant implies that, in duality-invariant theories, there exists a functional relation of the form
\begin{align}\label{functional_relation_theta_Tsq}
    g \left(  \Theta ( S, P ) , T^2 ( S, P ) \right) = 0
\end{align}
for some function $g$. This means that, locally, one of the functions $\Theta, T^2$ can be written in terms of the other (under mild assumptions on the partial derivatives of the function $g$). 
In the second part of this paper we will see more clearly what form the function $g$ takes for self-dual non-linear electrodynamics formulated in terms of auxiliary fields.

Having introduced various preliminary material, we now focus on understanding how electric-magnetic duality invariance behaves under flows.

\subsection{Duality-invariant theories}
\label{sec:duality_flows}

Electric-magnetic duality in its most basic setting is a symmetry of the equations of motion of free Maxwell theory which is realized as a $\mathbb{Z}_4$ transformation that acts on the field strength and its dual as
\begin{gather}
    F^{\mu\nu}\rightarrow  \widetilde{F}^{\mu\nu}~,\quad \widetilde{F}^{\mu\nu}\rightarrow -F^{\mu\nu}~,\nonumber\\
    \implies F^{\mu\nu}+i( \widetilde{F}^{\mu\nu})\rightarrow e^{i\frac{3\pi}{2}}(F^{\mu\nu}+i( \widetilde{F}^{\mu\nu}))~,
\end{gather}
where the Hodge dual is defined as:
\begin{gather}
    \widetilde{F}^{\mu\nu}= \frac{1}{2}\varepsilon^{\mu\nu\rho\tau}F_{\rho\tau}
    ~.
\end{gather}
This can be elevated to a continuous $U(1)$ transformation, instead of a discrete $\mathbb{Z}_{4}$ action. A theory with Lagrangian $\mathcal{L} ( F_{\mu\nu} )$ is $U(1)$ electric-magnetic duality invariant if the following duality rotation preserves its equations of motion
\begin{align}
\delta_\a F_{\mu \nu} = \a G_{\mu \nu} ( F ) \, ,
\qquad 
\Gt_{\mu \nu} = 2 \frac{\partial \mathcal{L}}{\partial F^{\mu \nu}} 
\,,
\quad
G_{\mu\nu}
=
-\hf\varepsilon_{\mu\nu\rho\tau}\Gt^{\rho\tau}
\,,
\label{duality-rotation-0}
\end{align}
with $\a$ being a real constant parameter. The Lagrangian is generally not invariant under the transformation \eqref{duality-rotation-0}. Once more, a prototypical example is Maxwell's theory with $\cL=S$. 
However, the Euler-Lagrange equations associated with a generic Lorentz invariant Lagrangian $\cL=\cL(S, P)$ respect electric-magnetic duality rotations if equation \eqref{EM_duality_pde} holds \cite{Gaillard:1981rj}.

Given a duality-invariant theory, it is possible to construct large classes of invariant functions which play an important role in our discussion and physically describe observables of self-dual theories. For example, the combination \cite{Gaillard:1997rt,Gaillard:1997zr}
\begin{align}
\mathcal{L} - \frac{1}{4} F \cdot \Gt
~,\qquad
F \cdot \Gt :=F^{\mu\nu}\Gt_{\mu\nu}
\, , 
\end{align}
is duality-invariant.  A short calculation shows that the previous quantity is proportional to the trace of the energy-momentum tensor,
\begin{align}\label{page_one_eqn}
    \mathcal{L} - \frac{1}{4} F \cdot \Gt 
    = \mathcal{L} - S \mathcal{L}_S - P \mathcal{L}_P
    = \frac{1}{4}\Theta \, ,
\end{align}
where the reader should compare with eq.~\eqref{trace_stress_tensor-1}.
In fact, it was proven in \cite{Gaillard:1981rj, Gibbons:1995cv, Gaillard:1997rt,Gaillard:1997zr} that the energy-momentum tensor of a duality-invariant theory is duality invariant, a fact that we will extensively use in the following discussion. This is a simple corollary of the fact that the derivative of $\cL$ with respect to a duality-invariant parameter is duality invariant \cite{Gaillard:1997rt,Gaillard:1997zr}. An instructive example is obtained as follows. If $\cL(F_{\m\n})$ is a solution of the self-duality equation
\eqref{SDequation}, then 
\bea
\cL^{ (g) } (F_{\m\n}) := \frac{1}{g^2} \cL(g F_{\m\n})~, \qquad g \in {\mathbb R}^+ ~, 
\eea
is also a solution of the self-duality equation \eqref{SDequation} in which $\cL$ is replaced with $\cL^{(g)}$ \cite{Kuzenko:2000uh}. According to  \cite{Gaillard:1997rt,Gaillard:1997zr}, the operator $\pa  \cL^{ (g) } / \pa g$ is duality invariant. Direct calculations give 
\bea
\frac{\pa  \cL^{ (g) } }{ \pa g} = -\frac{1}{2 g} \Q^{(g)}~. 
\eea

Now, let $\mathcal{L} ( S,P)$ be the Lagrangian of a $U(1)$ duality-invariant electrodynamics theory. We introduce a one parameter family of deformed theories $\mathcal{L}^{(\l)}(S,P):= \mathcal{L}(S,P; \l )$ defined to satisfy the flow \eqref{Flow-000} with the boundary condition $\mathcal{L}^{(0)}( S,P)=\cL(S,P)$ for some given operator $\cO(S,P;\l)$. We initially do not make further assumptions on $\mathcal{L}^{(\l)}(S,P)$. A natural question to ask is under which conditions the whole family of theories given by $\mathcal{L}^{(\l)}$ is duality invariant if $\mathcal{L}^{(0)}$ is duality invariant.
Remarkably, the following theorem holds:
\begin{theorem}\label{theorem_preserve_invariance} Consider a family of theories satisfying the differential equation and boundary condition
\begin{equation} 
\frac{\pa\cL^{(\l)}(S,P)}{\pa\l}:=\cO^{(\l)}(S,P)=\cO(S,P;\l)
~,\qquad 
\cL^{(0)}(S,P)=\cL(S,P)
~,
\label{Flow-010}
\end{equation}
with $\cO(S,P;\l)$ being a $U(1)$ duality-invariant function, $\d^{(\l)}_\alpha\cO(S,P;\l)=0$.\footnote{The label $\l$ in $\d_\a^{(\l)}$ stresses the fact that the duality transformation \eqref{duality-rotation-0} depends on $\l$.}
If the Lagrangian $\cL(S,P)$ describes a $U(1)$ duality-invariant theory satisfying \eqref{duality-rotation-0}, then all theories associated with the Lagrangians $\cL^{(\l)}(S,P)$ are duality invariant.
\end{theorem}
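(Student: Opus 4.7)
The strategy is to encode duality invariance as a first-order PDE in $(S,P)$ and to show that this PDE is preserved pointwise along the flow. Introduce the self-duality defect
\begin{equation}
\cF(\l;S,P)\;:=\;\bigl(\cL^{(\l)}_S\bigr)^2\;-\;\frac{2S}{P}\,\cL^{(\l)}_S\cL^{(\l)}_P\;-\;\bigl(\cL^{(\l)}_P\bigr)^2\;-\;1\,.
\end{equation}
By \eqref{EM_duality_pde}, the hypothesis that $\cL^{(0)}$ is $U(1)$ duality invariant is equivalent to $\cF(0;S,P)=0$. The plan is to prove $\pa_\l\cF=0$ identically in $(S,P)$; then at each fixed $(S,P)$ the function $\cF$ is constant in $\l$ and equal to its initial value $0$, which is the self-duality equation for $\cL^{(\l)}$ at every $\l$.

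First I would record the PDE characterizing $U(1)$ duality invariance of an arbitrary Lorentz scalar $I(S,P)$ in a theory with Lagrangian $\cL$. From \eqref{duality-rotation-0}, using $\Gt_{\m\n}=-\cL_S F_{\m\n}-\cL_P\Ft_{\m\n}$ together with the standard identities $F^{\m\n}F_{\m\n}=-4S$, $F^{\m\n}\Ft_{\m\n}=-4P$, and $\Ft^{\m\n}\Ft_{\m\n}=4S$, a short calculation gives
\begin{equation}
\d_\a S=2\a\bigl(P\cL_S-S\cL_P\bigr)\,,\qquad\d_\a P=-2\a\bigl(S\cL_S+P\cL_P\bigr)\,.
\end{equation}
Hence $\d_\a I=I_S\,\d_\a S+I_P\,\d_\a P=0$ is equivalent to the PDE
\begin{equation}
I_S\bigl(P\cL_S-S\cL_P\bigr)\;-\;I_P\bigl(S\cL_S+P\cL_P\bigr)\;=\;0\,,
\label{proof-inv-PDE}
\end{equation}
a relation that is meaningful whether or not $\cL$ itself is self-dual. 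As a consistency check, substituting $I=\cL$ reproduces the combination $P\bigl(\cL_S^2-\tfrac{2S}{P}\cL_S\cL_P-\cL_P^2\bigr)$, recovering the usual fact that $\cL$ is not itself invariant but transforms by an amount proportional to the self-duality defect. The hypothesis $\d^{(\l)}_\a\cO=0$ states precisely that \eqref{proof-inv-PDE} holds with $\cL\to\cL^{(\l)}$ and $I\to\cO^{(\l)}$ at every $\l$.

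Second, I would differentiate $\cF$ directly. Commuting $\pa_\l$ with $\pa_S$ and $\pa_P$ and substituting the flow $\pa_\l\cL^{(\l)}=\cO^{(\l)}$ gives, after collecting terms,
\begin{equation}
\frac{\pa\cF}{\pa\l}\;=\;\frac{2}{P}\Bigl[\cO_S\bigl(P\cL^{(\l)}_S-S\cL^{(\l)}_P\bigr)\;-\;\cO_P\bigl(S\cL^{(\l)}_S+P\cL^{(\l)}_P\bigr)\Bigr]\,.
\end{equation}
The bracketed expression is the left-hand side of \eqref{proof-inv-PDE} with $I=\cO^{(\l)}$ and $\cL=\cL^{(\l)}$, and therefore vanishes by hypothesis. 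Combined with the initial condition $\cF(0;S,P)=0$, this yields $\cF(\l;S,P)\equiv0$, which is the claim.

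The delicate point, rather than a technical obstacle, is that the duality rotation itself drifts along the flow: $\d_\a S$ and $\d_\a P$ are determined by $\cL^{(\l)}$ and change with $\l$. This is exactly the content of the footnote attached to $\d^{(\l)}_\a\cO=0$, and it is what makes the cancellation above an algebraic identity rather than a constraint tied to the level set $\cF=0$. Consequently the single PDE \eqref{EM_duality_pde} characterizing self-duality is preserved automatically and no higher-order obstruction can arise along the flow.
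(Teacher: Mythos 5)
Your proposal is correct and follows essentially the same route as the paper: both arguments show that the self-duality defect (the paper works with $\Gt^{(\l)}\cdot G^{(\l)}+\Ft\cdot F$, which is just $4P$ times your $\cF$) has vanishing $\l$-derivative because that derivative is proportional to the duality variation of $\cO^{(\l)}$, and then conclude from the initial condition that the defect vanishes for all $\l$. Your direct differentiation of $\cF$ is a clean equivalent of the paper's step $\partial_\l\,\d^{(\l)}_\a\cL^{(\l)}\propto\d^{(\l)}_\a\cO^{(\l)}=0$.
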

The first discussion of this property was given in \cite{Ferko:2023ruw}, where it was stated that if $\cO(S,P;\l)=\cO(T^{(\l)}_{\mu\nu};\l)$ then the whole flow of theories is duality invariant --- said differently, $\TT$-like flows preserve $U(1)$-duality invariance. The proof in \cite{Ferko:2023ruw} was sketched, and we provide more detail in our current paper's Appendix \ref{Appendix-A}. Note that, since the stress tensor obeys $\d_\a^{(\l)}T^{(\l)}_{\mu\nu}=0$, any operator $\cO^{(\l)}=\cO(T^{(\l)}_{\mu\nu};\l)$ that is only a function of the energy-momentum tensor evaluated from the Lagrangian $\cL^{(\l)}$ and of the parameter $\l$ is a $U(1)$ duality-invariant function.
Here, we provide an alternative derivation in the case in which $\cO^{(\l)}(S,P)=\cO(S,P;\l)$ is assumed to be a $U(1)$ duality-invariant function and then later we comment on how any invariant function has to be a function of the energy-momentum tensor: $\cO(S,P;\l)=\cO(T^{(\l)}_{\mu\nu};\l)$.

A crucial assumption in the theorem is that $\cO$ is a duality-invariant function. This means that it has to satisfy
\bea
 \mathcal{L}^{(\l)}_S \cO^{(\l)}_S
- \frac{S}{P}\left( \mathcal{L}^{(\l)}_P \cO^{(\l)}_S 
+\mathcal{L}^{(\l)}_S \cO^{(\l)}_P \right)
- \mathcal{L}^{(\l)}_P \cO^{(\l)}_P
=0
~.
\label{invariant-function}
\eea 
This differential equation arises by imposing
\begin{align}
\delta^{(\lambda)}_\a \cO( S , P ; \lambda ) = 0 
\, ,
\qquad {\rm with}\qquad
  \delta^{(\lambda)}_\a F_{\mu \nu} = \alpha G^{(\lambda)}_{\mu \nu} 
\, , \quad 
    \Gt^{(\lambda)}_{\mu \nu} = 2 \frac{\partial \mathcal{L}^{(\lambda)}}{\partial F^{\mu \nu}} \, ,
\end{align}
and explicitly computing
\begin{align}
\delta^{(\l)}_\a \cO^{(\l)} 
= 2 \a \left( P \mathcal{L}^{(\l)}_S \cO^{(\l)}_S
- S \mathcal{L}^{(\l)}_P \cO^{(\l)}_S 
- S \mathcal{L}^{(\l)}_S \cO^{(\l)}_P 
- P \mathcal{L}^{(\l)}_P \cO^{(\l)}_P \right) \, .
\end{align}
Note that eq.~\eqref{invariant-function} was already used in \cite{Ferko:2023ruw} (with $\cO^{(\l)}$ denoted by $f$) to analyse duality-invariance in $\TT$-like flows; see also Appendix \ref{Appendix-A}.

\begin{proof}
Let us now assume that $\mathcal{O}^{(\lambda)}$ is a duality-invariant function and prove the theorem. 
Due to this assumption and eq.~\eqref{Flow-010}, by construction it follows that
\begin{align}
    0 &= \delta^{(\l)}_\a \partial_\l \mathcal{L}^{(\l)} = \partial_\l \delta^{(\l)}_\a \mathcal{L}^{(\l)} 
    = \frac{\a}{2} \partial_\l \left( \Gt^{(\l)} \cdot G^{(\l)} \right)
    = \frac{\a}{2} \partial_\l \left( \Gt^{(\l)} \cdot G^{(\l)} + \Ft \cdot F \right) \, , 
\end{align}
and hence
\begin{align}
    \partial_\l \left( \Gt^{(\l)} \cdot G^{(\l)} + \Ft \cdot F \right) = 0 \, .
\end{align}
The expression in parentheses is an integral of motion for the $\l$ flow. Importantly, its value can be evaluated at $\l = 0$, where it can be shown to be equal to zero. 
As a result,
\begin{align}
    \Gt^{(\l)} \cdot G^{(\l)} + \Ft \cdot F = 0
    \label{EM_duality_pde-2}
\end{align}
along the whole solution of \eqref{Flow-010}.
Let us compute the previous expression explicitly:
\bsubeq
\bea
G^{(\l)}_{\mu\nu}
&=&
\Ft_{\mu\nu}\cL^{(\l)}_S - F_{\mu\nu}\cL^{(\l)}_P
\,,\\
G^{(\l)}\cdot\Gt^{(\l)}
&=&
 4 P \Big{[}
(\cL^{(\l)}_S)^2
-\frac{S}{P}
\cL^{(\l)}_S\cL^{(\l)}_P
-(\cL^{(\l)}_P)^2\Big{]}
\,,
\eea 
\esubeq
%
% ({\color{red} CF: Can someone confirm the signs in this expression for $G_{\mu \nu}^{(\lambda)}$? I find 
% %
% \begin{align}
%     \Gt_{\mu \nu} = 2 \frac{\partial \mathcal{L}}{\partial F^{\mu \nu}} = - \mathcal{L}_S F_{\mu \nu} - \mathcal{L}_P \widetilde{F}_{\mu \nu}
% \end{align}
% %
% and thus
% %
% \begin{align}
%     \Gt^{\mu \nu} = - \mathcal{L}_S F^{\mu \nu} - \mathcal{L}_P \widetilde{F}^{\mu \nu} \, .
% \end{align}
% %
% Then using
% %
% \begin{align}
%     G_{\mu \nu} = - \frac{1}{2} \epsilon_{\mu \nu \rho \tau} \Gt^{\rho \tau}
% \end{align}
% %
% along with
% %
% \begin{align}
%     \Ft^{\mu \nu} = \frac{1}{2} \epsilon^{\mu \nu \rho \tau} F_{\rho \tau} \, , \qquad F_{\mu \nu} = - \frac{1}{2} \epsilon_{\mu \nu \rho \tau} \Ft^{\rho \tau} \, , 
% \end{align}
% %
% we get
% %
% \begin{align}
%     G_{\mu \nu} &= \frac{1}{2} \mathcal{L}_S \epsilon_{\mu \nu \rho \tau} F^{\rho \tau} + \frac{1}{2} \mathcal{L}_P \epsilon_{\mu \nu \rho \tau} \Ft^{\rho \tau} \nonumber \\
%     &= \mathcal{L}_S \Ft_{\mu \nu} - \mathcal{L}_P F_{\mu \nu}
% \end{align}
% which is the opposite of the stated signs.})
% 
%
and then
\begin{align}
0
=
\Gt^{(\l)} \cdot G^{(\l)} + \Ft \cdot F = 
4 P \Big{[}
(\cL^{(\l)}_S)^2
-\frac{S}{P}
\cL^{(\l)}_S\cL^{(\l)}_P
-(\cL^{(\l)}_P)^2
-1
\Big{]}
\,.
\label{EM_duality_pde-3}
\end{align}
The main point is that \eqref{EM_duality_pde-2} is zero if and only if \eqref{EM_duality_pde} is satisfied for every $\l$. This implies that not only the theory at $\l=0$ is $U(1)$ duality-invariant but the same is true for every $\l$. This concludes the proof of Theorem \ref{theorem_preserve_invariance}.
\end{proof}

%%%%%%%%%%%%%%%%%%%%%%%%%%%%%%%%%%%%%

Now, we demonstrate that any duality-invariant function $f(S,P)$ in a self-dual theory is a function of the energy-momentum tensor. For this we prove the following theorem:
\begin{theorem}\label{functional_dependence_theorem} 
 Given a $U(1)$ duality-invariant theory with Lagrangian $\cL(S,P)$, any two duality-invariant functions $f(S,P)$ and $g(S,P)$ are functionally dependent.
\end{theorem}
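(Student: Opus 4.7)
The plan is to prove $f$ and $g$ are functionally dependent by showing that the Jacobian $\partial(f,g)/\partial(S,P) = f_S g_P - f_P g_S$ vanishes identically, and then invoking the classical functional-dependence theorem from multivariable calculus.

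First, I would translate the duality invariance of $f$ and $g$ into PDEs. Repeating the explicit computation of $\delta^{(\l)}_\a \cO^{(\l)}$ displayed just before equation~\eqref{invariant-function}, but with $f$ in place of $\cO^{(\l)}$, produces the first-order linear equation
\begin{equation*}
(P\cL_S - S\cL_P)\, f_S \;-\; (S\cL_S + P\cL_P)\, f_P \;=\; 0,
\end{equation*}
and the identical equation for $g$. Together these say that the vector $v := \bigl(P\cL_S - S\cL_P,\; S\cL_S + P\cL_P\bigr)$ lies in the kernel of the $2\times 2$ matrix
\begin{equation*}
M := \begin{pmatrix} f_S & -f_P \\ g_S & -g_P \end{pmatrix}.
\end{equation*}
Whenever $v \ne 0$, the determinant of $M$ must vanish, and a direct computation gives $\det M = -(f_S g_P - f_P g_S)$, so the Jacobian vanishes precisely at those points.

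Second, I would verify that $v$ is nowhere zero on the dense open set $\{P \ne 0\}$. Suppose $v(S_0,P_0) = 0$ at such a point. Then the linear system $P\cL_S = S\cL_P$, $S\cL_S = -P\cL_P$ has coefficient determinant $S_0^2 + P_0^2 > 0$, which forces $\cL_S = \cL_P = 0$ at $(S_0,P_0)$. Substituting these into the self-duality equation~\eqref{EM_duality_pde} yields $0 = 1$, a contradiction. Hence $f_S g_P - f_P g_S$ vanishes on $\{P \ne 0\}$, and by $C^1$ continuity on the full domain.

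Finally, I would invoke the standard functional-dependence theorem---essentially the implicit function theorem applied at a regular point of one of the two maps---to obtain a non-trivial smooth $h$ with $h\bigl(f(S,P), g(S,P)\bigr) \equiv 0$. The main subtlety, and the place I expect the hardest work, is the local-to-global upgrade: the theorem produces $h$ in a neighborhood of a point where, say, $\nabla f \ne 0$, and stitching these local relations into a single global $h$ on the physical domain requires mild connectedness and regularity assumptions, which are harmless in the smooth self-dual setting considered here.
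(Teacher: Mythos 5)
Your proof is correct and follows essentially the same route as the paper's: both translate duality invariance of $f$ and $g$ into the first-order linear PDE stating that the gradients $(f_S,f_P)$ and $(g_S,g_P)$ are annihilated by the vector field $\vec v=(S\cL_P-P\cL_S)\,\partial_S+(S\cL_S+P\cL_P)\,\partial_P$ (yours is the overall negative of this, which is immaterial), deduce that the two gradients are parallel because both are orthogonal to the same non-vanishing field, and conclude that $f_S g_P-f_P g_S=0$, hence functional dependence. The one genuine divergence is in how non-vanishing of $\vec v$ is established: the paper supposes $\vec v\equiv 0$ and integrates the two resulting conditions to force $\cL=\mathrm{const}$, whereas you argue pointwise --- at any point with $S^2+P^2>0$ the $2\times 2$ linear system forces $\cL_S=\cL_P=0$ there, which contradicts the self-duality equation \eqref{EM_duality_pde} evaluated at that same point. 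Your pointwise version is in fact slightly sharper, since the desired conclusion $f_S g_P-f_P g_S=0$ is needed at every point and not merely the statement that $\vec v$ does not vanish identically; your closing remarks on the $P=0$ locus and the local-to-global assembly of the relation $h(f,g)=0$ address caveats the paper passes over silently.
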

\begin{proof}
First we recall  that $f ( S, P )$ is duality invariant if and only if
\begin{align}\label{f_duality_sergei_notes}
    \left( S \mathcal{L}_P - P \mathcal{L}_S \right) f_S + \left( S \mathcal{L}_S + P \mathcal{L}_P \right) f_P = 0 \, .
\end{align}
To analyse the implications of this condition, we introduce a vector field
on the $(S,P)$-plane,
\begin{align}
    \vec{v} ( S, P ) &= v^S \partial_S + v^P \partial_P
   : = \left( S \mathcal{L}_P - P \mathcal{L}_S \right) \partial_S + \left( S \mathcal{L}_S + P \mathcal{L}_P \right) \partial_P \, .
\end{align}
This vector field is non-vanishing. Otherwise, assuming by way of contradiction that $\vec{v} ( S, P ) = 0$, we would have
\begin{subequations}
\begin{align}
    S \mathcal{L}_P - P \mathcal{L}_S &= 0 \, , \label{contradiction_first} \\
    S \mathcal{L}_S + P \mathcal{L}_P &= 0 \, . \label{contradiction_second}
\end{align}
\end{subequations}
Equation (\ref{contradiction_first}) tells us that
$    \mathcal{L} (S,P)= L \left( S^2 + P^2 \right) $, for some function $L(x)$ of a single variable.
Equation (\ref{contradiction_second}) tells us that $\mathcal{L}(S,P)$ is a homogeneous function of degree $0$, and therefore $\mathcal{L} = {\rm const}$. We have thus arrived at a contradiction.

Equation (\ref{f_duality_sergei_notes}) tells us that the vector field
\begin{align}
    \vec{f} \left( S, P \right) = f_S \partial_S + f_P \partial_P \, , 
\end{align}
is orthogonal\footnote{Here we mean orthogonal with respect to the trivial metric on $\mathbb{R}^2$ with coordinates $(S, P)$, namely $ds^2 = dS^2 + dP^2$. Alternatively, one could say that the one-form $d f = f_S dS + f_P dP$ annihilates the vector $\vec{v}$, $df ( \vec{v} ) = 0 $.} to $\vec{v} ( S, P )$,
\begin{align}
    v^S f_S + v^P f_P = 0 \, .
\end{align}
 Given another duality-invariant function $g ( S, P )$,
\begin{align}
    v^S g_S + v^P g_P = 0 \, , 
\end{align}
both vector fields $\vec{f} ( S, P )$ and
\begin{align}
    \vec{g} ( S, P ) = g_S \partial_S + g_P \partial_P
\end{align}
must be parallel, $\vec{f} \parallel \vec{g}$. This implies
that $\big[f_S , f_P \big] = \l \big[ g_S , g_P\big]$, for some function $\l(S,P)$, and therefore
\begin{align}
    \det \,
    \begin{bmatrix} f_S & f_P \\ g_S & g_P \end{bmatrix} 
    = 0 \, .
\end{align}
This means that the functions $f ( S, P )$ and $g ( S, P )$ are functionally dependent, 
\begin{align}
    \U ( f , g) =0\, ,
\end{align}
 for some function of two variables $\U$.
\end{proof}
Since the energy-momentum tensor $T_{\m\n}$  is duality invariant,  the  duality-invariant functions \eqref{traces_T-1} are functionally dependent, 
equation \eqref{functional_relation_theta_Tsq}. Another corollary of Theorem 
\ref{functional_dependence_theorem} is that any duality-invariant function $f(S,P)$ is a function of the energy-momentum tensor. 
An alternative proof of these results, using the method of characteristics, is given in 
Appendix \ref{app:method_char}.

It is also well-known that any Lagrangian $\mathcal{L} ( S, P )$ which satisfies the duality-invariance condition (\ref{EM_duality_pde}) can also be described in terms of a function of a single independent variable. The logic used to demonstrate this fact is rather different than that reasoning used to establish Theorem \ref{functional_dependence_theorem}, and is also briefly reviewed at the end of Appendix \ref{app:method_char}.

The preceding observations suggest that the analysis of duality-invariant models of electrodynamics, which naively appears to involve functions of two variables $S$ and $P$, can be reduced to a description which involves only functions of a single real variable. To make this intuition and several of these statements more precise, we can employ the auxiliary field formulation of electrodynamics. This will be the focus of the rest of our paper.

\section{Auxiliary field formulation} 
\label{sec:auxreview}

\subsection{Definitions of $\nu$ and $\mu$ representations}
We begin by reviewing the auxiliary field formulation of non-linear electrodynamics first introduced by Ivanov and Zupnik \cite{Ivanov:2001ec}. The two representations used in the rest of this work are the $\nu$ representation and $\mu$ representation.\footnote{We decided to keep using the original nomenclatures of Ivanov-Zupnik but the reader should keep in mind the difference between the variables  $\nu$ and $\mu$ given below and Lorentz indices.} Beginning with the $\nu$ representation, one starts by converting the electromagnetic field strength into spinor notation as follows:\footnote{For the remainder of sections \ref{sec:auxreview} and \ref{sec:stresstensor}, Latin letters represent 4 valued spacetime indices, whilst Greek letters represent 2 valued spinorial indices. See \cite{Buchbinder:1998qv} for our notations and conventions, which mostly agree with those of \cite{Ivanov:2001ec} except, e.g., for the sign of \eqref{FFbar-spin-def}.}
\begin{align}
    \tensor{F}{_\alpha^\beta} 
    = -\frac{1}{4}(\sigma^{\mu})_{\alpha\dot\beta}(\tilde{\sigma}^{\nu})^{\dot\beta\beta}F_{\mu\nu}
    ~,\quad \tensor{\Bar{F}}{_{\dot\alpha}^{\dot\beta}} 
    = \,\frac{1}{4}(\tilde{\sigma}^{\mu})^{\dot\beta\beta}(\sigma^{\nu})_{\beta\dot\alpha}F_{\mu\nu}~,
    \qquad
    F_{\mu\nu}:=2\pa_{[\mu}A_{\nu]}
    ~,
    \label{FFbar-spin-def}
\end{align}
where $(\sigma^{\mu})_{\a\ad},(\tilde{\sigma}^{\mu})^{\ad\a}$ are the Weyl matrices of the group SL$(2,\mathbb{C})$, while $A_\mu$ and $F_{\mu\nu}$ are respectively the gauge connection and field strength of an Abelian gauge theory. One then defines the following Lorentz invariant complex variables:
\begin{align}
    \varphi = F^{\alpha\beta}F_{\alpha\beta},\quad \Bar{\varphi}=\Bar{F}_{\dot\alpha\dot\beta}\Bar{F}^{\dot\alpha\dot\beta}.
\end{align}
With this, one can consider a Lagrangian for non-linear electrodynamics of the form
\begin{align}
    L(\varphi,\Bar{\varphi}) = -\frac{1}{2}(\varphi+\Bar{\varphi})+L^{\text{int}}(\varphi,\Bar{\varphi})
    ~,\label{onshelllagrangian}
\end{align}
with the first monomials describing the Maxwell Lagrangian while $L^{\text{int}}$ is a real function which collects all higher order terms. For instance, one could consider interaction functions $L^{\text{int}}$ which are analytic around $\varphi=0$ and expand in powers $\varphi^{k}\Bar{\varphi}^{m}$, with $k\geq 1$ and $m \geq 1$ (see  \cite{Kuzenko:2000uh} for the technical details). However, we will see that there exist interesting examples of theories for which $L^{\text{int}}$ is non-analytic.

\paragraph{The $\nu$ representation}

With inspiration from the $\mathcal{N}=3$ supersymmetric extension of Born-Infeld theory \cite{Ivanov:2001ec}, Ivanov and Zupnik realised that the kinetic term in equation (\ref{onshelllagrangian}) can be written using an auxiliary, unconstrained antisymmetric real two-form field $V_{\mu\nu}=-V_{\nu\mu}$. We will also define $V_{\alpha \beta} = -\frac{1}{4}(\sigma^{\mu})_{\alpha\dot\beta}(\tilde{\sigma}^{\nu})^{\dot\beta\beta} V_{\mu\nu}$ and  $\bar{V}_{\dot{\alpha} \dot{\beta}} = \frac{1}{4}(\tilde{\sigma}^{\mu})^{\dot\beta\beta}(\sigma^{\nu})_{\beta\dot\alpha} V_{\mu\nu}$, which are the versions of the field $V_{\mu \nu}$ which carry spinor indices, exactly as we have done for the field strength in equation (\ref{FFbar-spin-def}). The result of this rewriting is
\begin{align}
    \mathcal{L}_{2}(V,F) = \frac{1}{2}(\varphi+\Bar{\varphi})+\nu+\Bar{\nu}-2(V\cdot F+\Bar{V}\cdot\Bar{F})
    ~,\label{offshellfree}
\end{align}
where 
\bsubeq 
\bea 
    \nu = V^{\alpha\beta}V_{\alpha\beta}&,&\quad \Bar{\nu} = \Bar{V}_{\dot\alpha\dot\beta}\Bar{V}^{\dot\alpha\dot\beta}
    ~,\nonumber \\
    V\cdot F = V^{\alpha\beta}F_{\alpha\beta}&,&\quad \Bar{V}\cdot\Bar{F} = \Bar{V}_{\dot\alpha\dot\beta}\Bar{F}^{\dot\alpha\dot\beta}
    ~.
\eea 
\esubeq
After integrating out the auxiliary field $V_{\mu\nu}$ via its equation of motion, one arrives back at the free Maxwell Lagrangian. In a straightforward generalisation of the above, a large class of theories of non-linear electrodynamics can be written in the auxiliary field formulation as
\begin{align}
    \mathcal{L}(V,F) = \mathcal{L}_{2}(V,F) + E(\nu,\Bar{\nu})
    ~, \label{offshellfull}
\end{align}
where $E(\nu,\Bar{\nu})$ encodes self-interactions and is all that distinguishes different models. The use of the fields $(F,V)$ as well as relations to come, define the $\nu$ representation. By varying equation (\ref{offshellfull}) with respect to $V_{\alpha\beta}$, one finds the defining algebraic relation between the two fields,
\begin{align}
    F_{\alpha\beta}=V_{\alpha\beta}(1+E_{\nu})~,\quad E_{\nu} = \frac{\partial E(\nu,\Bar{\nu})}{\partial \nu}~.
    \label{EOMV-1}
\end{align}
From this, one finds that the scalar combinations $\varphi$ and $\overbar{\varphi}$ satisfy the relations
\begin{align}
    \varphi = \nu(1+E_{\nu})^2
    ~,\quad F\cdot V = \nu(1+E_{\nu})
    ~,\label{phiinnu}
\end{align}
along with the corresponding complex conjugate relations of \eqref{EOMV-1} and \eqref{phiinnu}.
These equations can, in principle, be solved for $V_{\mu\nu}$ in terms of $F_{\mu\nu}$. In particular, one obtains the following useful relations:
\begin{gather}
    \nu = \varphi G^2~,\quad  V(F)\cdot F = \varphi G
    ~,\label{nuinphi}\nonumber \\
    G = \frac{1}{2}-\frac{\partial L(\varphi,\Bar{\varphi})}{\partial \varphi} = \frac{1}{2}-L_{\varphi}
    ~.
\end{gather}
Using the relations (\ref{phiinnu}), one can transition from a non-linear electrodynamics model to an auxiliary field model via the substitution $(\varphi,\Bar{\varphi})\rightarrow (\varphi(\nu,\Bar{\nu}),\Bar{\varphi}(\nu,\Bar{\nu}))$. Conversely, one can begin with an auxiliary field model $\mathcal{L}(F,V)$ and make the substitution $(\nu,\Bar{\nu})\rightarrow (\nu(\varphi,\Bar{\varphi}),\Bar{\nu}(\varphi,\Bar{\varphi}))$ to recover the non-linear electrodynamics theory formulated only in terms of $F_{\mu\nu}$. This process is outlined further in \cite{Ivanov:2003uj}.

\paragraph{The $\mu$ representation}

The $\mu$ representation is defined via the complex Legendre transform of the $\nu$ frame with the identifications
\begin{gather}
    \mu(\nu,\Bar{\nu}) = E_{\nu}
    ~,\quad \Bar{\mu}(\nu,\Bar{\nu}) = E_{\Bar{\nu}}~,\nonumber\\
    H(\mu,\Bar{\mu}) = E(\nu,\Bar{\nu})-\nu E_{\nu}-\Bar{\nu}E_{\Bar{\nu}}~.\label{legendre}
\end{gather}
The corresponding inverse transformations are
\begin{gather}
    \nu(\mu,\Bar{\mu}) = -H_{\mu}
    ~,\quad \Bar{\nu}(\mu,\Bar{\mu}) = -H_{\Bar{\mu}}
    ~,\nonumber\\
    E(\nu,\Bar{\nu}) = H(\mu,\Bar{\mu})-\mu H_{\mu}-\Bar{\mu}H_{\Bar{\mu}}\label{legendreinverse}
    ~.
\end{gather}
With this, the Lagrangian as well as the defining relation (\ref{phiinnu}) are transformed to
\bsubeq
\begin{gather}
    \mathcal{L}(\varphi,\mu) = \frac{\varphi(\mu-1)}{2(1+\mu)}+\frac{\Bar{\varphi}(\Bar{\mu}-1)}{2(1+\Bar{\mu})}+H(\mu,\Bar{\mu})~,\label{offshellb}\\
    \varphi = -(1+\mu)^2 H_{\mu}~.
    \label{offshellb-2}
\end{gather}
\esubeq
Again, one can recover the non-linear electrodynamics model from (\ref{offshellb}) via the substitution $(\mu,\Bar{\mu})\rightarrow (\mu(\varphi,\Bar{\varphi}),\Bar{\mu}(\varphi,\Bar{\varphi}))$. These are all the essential definitions relevant to the auxiliary field formulation of electrodynamics that we will focus on in our paper. Importantly, we will restrict our attention to the subset of electric-magnetic duality-invariant models. In light of this, we review how electric-magnetic duality acts within this framework, as well as the constraints it imposes.

\subsection{Electric-magnetic duality with auxiliary fields}

In this subsection, we return to the topic of electric-magnetic duality, specifically, the continuous form introduced in Section \ref{sec:duality_flows}.  In spinor notation, the duality rotation of the free Maxwell theory is realised as the infinitesimal transformation
\begin{align}
\delta_{\alpha}
    \begin{pmatrix}
        F_{\alpha\beta}\\
        \Bar{F}_{\dot\alpha\dot\beta}\\
    \end{pmatrix}
    = \begin{pmatrix}
        -i \alpha F_{\alpha\beta}\\
        i\alpha \Bar{F}_{\dot\alpha\dot\beta}\\
    \end{pmatrix}
    ~,
\end{align}
where $\alpha$ is a real parameter (not to be confused with the spinor index $\alpha$). As is well known, the previous transformation is a symmetry of Maxwell equations in the vacuum.

More generally, one can characterize whether a non-linear electrodynamics theory is duality symmetric in the following way. Given a theory with Lagrangian $L(\varphi,\Bar{\varphi})$, the field canonically conjugate to $F_{\alpha\beta}$ is
\begin{align}\label{Galphabeta_defn}
    G_{\alpha\beta}(F) \equiv i\frac{\partial L}{\partial F^{\alpha\beta}} = 2iF_{\alpha\beta}L_{\varphi}
    ~.
\end{align}
This conjugate momentum $G_{\alpha \beta}$ is related to the quantity $\Gt_{\mu \nu}$ of equation (\ref{duality-rotation-0}), although it carries spinor indices rather than Lorentz indices.

The equations of motion and the Bianchi identities for the field $F_{\alpha\beta}$ are given by
\begin{gather}
    \tensor{\partial}{_\alpha^{\dot\beta}}\Bar{G}_{\dot\alpha\dot\beta}-\tensor{\partial}{^\beta_{\dot\alpha}}G_{\alpha\beta} = 0\label{PEOM}
    ~,\\
    \tensor{\partial}{_\alpha^{\dot\beta}}\Bar{F}_{\dot\alpha\dot\beta}-\tensor{\partial}{^\beta_{\dot\alpha}}F_{\alpha\beta} = 0
    ~.\label{offshellBianchi}
\end{gather}
This set of equations is invariant under the transformation
\begin{align}
\delta_{\alpha}
    \begin{pmatrix}
        F_{\alpha\beta}\\
        G_{\alpha\beta}(F)\\
    \end{pmatrix}
    = \begin{pmatrix}
        \alpha G_{\alpha\beta}(F)\\
        -\alpha F_{\alpha\beta}\\
    \end{pmatrix}
    ~,
\end{align}
if the Lagrangian $L(\varphi,\Bar{\varphi})$ satisfies the condition
\begin{align}
    \frac{i}{4}\epsilon^{\mu\nu\rho\tau}(F_{\mu\nu}F_{\rho\tau} + G_{\mu\nu} G_{\rho\tau}) = \varphi-\Bar{\varphi}-4(\varphi(L_{\varphi})^2-\Bar{\varphi}(L_{\Bar{\varphi}})^2) = 0
    ~. \label{offshelldualityconstraint}
\end{align}
Here $G_{\mu \nu}$ is defined by converting $G_{\alpha \beta}$ of equation (\ref{Galphabeta_defn}) to Lorentz indices using the Weyl matrices of $SL(2, \mathbb{C})$. In the $\nu$ representation, the equation of motion for $A_{\alpha\dot\alpha}$ is given by
\begin{gather}
    \tensor{\partial}{^\beta_{\dot\alpha}}(F_{\alpha\beta}(A)-2V_{\alpha\beta}) + \text{c.c.} = 0~,
\end{gather}
and is equivalent to equation (\ref{PEOM}) if one identifies 
\begin{gather}
    G_{\alpha\beta}(F) = i(F_{\alpha\beta}-2V_{\alpha\beta}(F))~.
\end{gather}
Note that one must substitute $V_{\alpha\beta} = V_{\alpha\beta}(F)$ for this to be explicit. With this identification, the realisation of the ${U}(1)$ duality transformations on the independent fields $F_{\alpha\beta}$ and $V_{\alpha\beta}$ is given by
\begin{align}
\delta_{\alpha}
    \begin{pmatrix}
        V_{\alpha\beta}\\
        F_{\alpha\beta}\\
    \end{pmatrix}
    = \begin{pmatrix}
        -i\alpha V_{\alpha\beta}\\
        i\alpha (F_{\alpha\beta}-2V_{\alpha\beta})\\
    \end{pmatrix}
    ~.\label{fvtransformations}
\end{align}
Therefore, by introducing the auxiliary field, a non-linear realisation of $U(1)$ on $(F,G)$ has been transformed into a linear realisation on $(F,V)$. A similar story holds for the $\mu$ representation for the fields $(\varphi,\mu)$. More details can be found in \cite{Ivanov:2003uj}.

The aspect of $U(1)$ duality with which this work is most concerned is the constraints it imposes on the interaction functions $E(\nu,\Bar{\nu})$ and $H(\mu,\Bar{\mu})$. Substituting equation (\ref{PEOM}) into (\ref{offshelldualityconstraint}) and making use of the fact that
\begin{align}
    \nu E_{\nu} = \frac{1}{4}\varphi(1-4L_{\varphi}^2)
    ~,
\end{align}
the  duality condition can be recast as a constraint on the interaction function $E(\nu,\Bar{\nu})$,
\begin{align}
    \nu E_{\nu}-\Bar{\nu}E_{\Bar{\nu}}=0
    ~,
\end{align}
as discussed in \cite{Ivanov:2003uj}.
Under the transformations (\ref{fvtransformations}), the function $E(\nu,\Bar{\nu})$ transforms exactly as above. Hence, the electric-magnetic duality condition can transparently be seen as the requirement that $E(\nu,\Bar{\nu})$ be $U(1)$ invariant. The solution to this constraint is simply a function $\mathcal{E}(a)$ of a single real variable $a = \nu\Bar{\nu}$.
By requiring $\mathcal{E}(a)$ to be analytic and that $\mathcal{E}(0) = 0$, one obtains a smooth weak field limit to Maxwell's Lagrangian. 
The duality invariance is almost identical in the $\mu$ frame as the $U(1)$ invariance of $E(\nu,\Bar{\nu})$ is carried over to $U(1)$ invariance of $H(\mu,\Bar{\mu})$:
\begin{align}
    \delta_{\alpha}H = 2i\alpha(\mu H_{\mu}-\Bar{\mu}H_{\Bar{\mu}}) = 0 \; \Longrightarrow \; H(\mu,\Bar{\mu}) = \cH(b)
    ~,\quad 
    b = \mu\Bar{\mu}
    ~.
\end{align}
Once again, we see that the solution of this constraint is a function $\cH(b)$ in a single real variable $b$.\footnote{Ivanov and Zupnik use $I(b)$ for $H(\mu,\bar{\mu})=\cH(b)$ in the self-dual case.}
With these identifications, the condition for invertibility of the Legendre transform becomes a simple constraint on the derivatives of the interaction functions
\begin{align}\label{Hb_nonzero}
    \mathcal{E}_{a}(0)\neq 0\;\leftrightarrow\; \cH_{b}(0)\neq 0~.
\end{align}
We will return to this constraint in Section \ref{sec:generalargument}. 

The defining relations of the two representations can be simplified using the duality symmetric interaction functions:
\bsubeq\label{useful-FV-1}
\begin{gather}
    \varphi = \nu(1+\Bar{\nu}\mathcal{E}_{a})^2
    ~,\quad \varphi = -(1+\mu)^2\Bar{\mu}\cH_{b}
    ~,\\
    \mathcal{E}(a) = \cH(b)-2b\cH_{b}~,\quad \cH(b) = \mathcal{E}(a)-2a \mathcal{E}_{a}~,
    \\
    \nu (\mu,\Bar{\mu}) = -\Bar{\mu}\cH_{b}~,\quad \mu(\nu,\Bar{\nu}) = \bar{\nu} \mathcal{E}_{a}~.
\end{gather}
\esubeq
It is important to note that the $U(1)$ duality is not a symmetry of the entire auxiliary field Lagrangian. Indeed, the quadratic part transforms as
\begin{align}
    \delta_{\alpha}\mathcal{L}_{2}(F,V)= i\alpha(\varphi-\Bar{\varphi})~.
\end{align}
Therefore, the symmetry holds only for the interaction function and hence it is a ``partial'' symmetry of the entire Lagrangian.

As stated in Section \ref{sec:duality_flows}, it is a well-known fact in the literature that the energy-momentum tensor of a duality-invariant theory will itself always be duality invariant. Hence, if one can show that the energy-momentum tensor is only a function of $\cE(a)$ and vice versa, then it is a very natural question to ask how $\TT$-like deformations of this class of theories behave. This line of reasoning forms the basis of Section \ref{sec:generalargument} and as such, we postpone the rest of the discussion until then.

\subsection{Properties of conformal vs non-conformal models}\label{sec:conf_vs_non_conf}

In order to discuss the dimensionality of various objects in this formalism, we distinguish the cases of conformal and non-conformal models. As will be seen later, the $\mu$ frame is not defined for conformal models as $\cH(b)$ is identically zero and the equations \eqref{offshellb} and \eqref{offshellb-2} are singular. Therefore, it makes sense to treat  the conformal and non-conformal models separately.

One might notice that in the $\mu$ frame, the auxiliary field $\mu$ should be dimensionless in order to not disturb the dimensions of $\varphi$. This would imply that the interaction function $H(\mu,\Bar{\mu})$ is also dimensionless. This is clearly inconsistent as all objects in the Lagrangian must in total have mass dimension $D=4$, such that, overall, the action has units of energy multiplied by time (or be dimensionless in natural units). This means that there is an inherent length scale present in $H(\mu,\Bar{\mu})$ in the form of a dimensionful coupling. Indeed, one can see from the Legendre transform that the dimension of 
$H(\mu,\Bar{\mu})$ is the same as $E(\nu,\Bar{\nu})$. This detail is not present in the $\nu$ representation as the field $\nu$ appears independently in its kinetic part and therefore, has the same units as $\varphi$. Explicit examples of this will be seen in Section \ref{sec:examples}; however, now we discuss the case of conformal models.

\paragraph{Conformal case}

Scale transformations are a subset of conformal transformations, and thus any conformal model must be scale invariant. Due to this, there cannot be any dimensionful parameters present in the Lagrangian. In other words, be it a deformation, or an interaction, all couplings must be marginal. This means that any interaction function ${\cal E}(a)$ of a real variable $a$ for a conformal model must be of the form
\begin{align}
    \mathcal{E}(a) = \mathcal{E}(a;\gamma_{1},...,\gamma_{n}),
\end{align}
where $\{\gamma_{i}\}$ for $i = 1,...,n$ is a set of dimensionless parameters. One well known example of how conformal symmetry can aid this approach is the case of ModMax electrodynamics. Requiring conformal symmetry restricts the interaction function $\mathcal{E}(a)$ to be homogeneous of degree $\frac{1}{2}$ \cite{Kuzenko:2021cvx}. Specifically,
\begin{align}
    \mathcal{E}(a) = \kappa\sqrt{a},
\end{align}
where $\kappa$ is a constant that will need to be determined after integrating out the auxiliary field. For the case of ModMax, one finds 
\begin{align}
    \kappa = 2\tanh\left(\frac{\gamma}{2}\right),
\end{align}
where $\gamma$ is the parameter that moves through the family of theories described by ModMax.

\paragraph{Non-conformal case}

In the case when the model is not conformal, couplings of any dimension are allowed. If a theory has parameters $\{\lambda_{i}\}$ for $i = 1,..,m$ with mass dimension, $[\lambda_{i}]$ with at least one $[\lambda_{j}]\ne0$, one might always choose a single dimensionful coupling and rescale all the others to be dimensionless. The same is true for the $a$ variable. 
Then one can choose to parameterise the interaction functions as follows:
\begin{align}
    \mathcal{E}(a) = \frac{1}{L^4}\mathscr{E}(y;\gamma_{1},...,\gamma_{n-1})
    ~,\qquad 
    [L]=-1~,~~~[\g_i]=0~,~~~
    y=L^{8} a
    ~.
\end{align}
Here ${\sE(y)}$ on the right hand side is  a dimensionless function of $y$ and the couplings $\gamma_i$
which can in principle have arbitrary dependence upon all its variables, in contrast to the conformal case which is is highly constrained.

One can then track how this factor carries through to the definition of $\cH(b)$. From the Legendre transform, one obtains
\begin{align}
    \cH(b) = \frac{1}{L^4}\big(\mathscr{E}(y)-2y\mathscr{E}_{y}\big)
    =\frac{1}{L^4}\mathscr{H}(b)
    ~.
\end{align}
Therefore, in order for this definition not to intrinsically change the Legendre transform, one must also make the change $\cH(b)\rightarrow \cH(b; L^4) = \frac{1}{L^4}\mathscr{H}(b)$. In the case of Born-Infeld and $\gamma$BI, the parameter $L$ is related to the flow parameter that drives the $\TT$-like flow equation. This is special to these two theories.
In order to discuss $\TT$-like deformations of these theories one clearly needs to understand their energy-momentum tensors, a process that we now address.

\section{Energy-Momentum Tensors in $\nu$ and $\mu$ representations} 
\label{sec:stresstensor}
\subsection{Results for duality-invariant theories} 

In order to derive the energy-momentum tensor for duality-invariant auxiliary field models, we work predominantly with the vector form of the Lagrangian and start in the $\nu$ frame:
\begin{align}
    \mathcal{L} = -S-2C-V^{\mu\nu}F_{\mu\nu}+\mathcal{E}(a) \label{vecL}
    ~,
\end{align}
where we define the following scalar combinations of $F$ and $V$:
\begin{gather}
    S = -\frac{1}{4}g^{\mu\nu}g^{\rho\tau}F_{\mu\rho}F_{\nu\tau}
    ~,\quad P = -\frac{1}{4}F_{\mu\nu}\widetilde{F}^{\mu\nu}
    ~,\\
    C = -\frac{1}{4}g^{\mu\nu}g^{\rho\tau}V_{\mu\rho}V_{\nu\tau}
    ~,\quad D = -\frac{1}{4}V_{\mu\nu}\widetilde{V}^{\mu\nu}
    ~.
\end{gather}
Note that we have introduced a generic metric $g_{\mu\nu}$ and its inverse $g^{\mu\nu}$ that will be used to compute the energy-momentum tensor.
In order to perform the variation as well as conversion to spinor components later in the calculation, one will need the following useful relations:
\bsubeq
\begin{gather}
    V^{\mu\nu}V_{\mu\nu} = 2(\nu+\Bar{\nu})
    ~,\quad V_{\mu\nu}\widetilde{V}^{\mu\nu} = 2i(\Bar{\nu}-\nu)
    ~,\\
    \implies \Bar{\nu} = \frac{1}{4}(V^{\mu\nu}V_{\mu\nu}-iV_{\mu\nu}\widetilde{V}^{\mu\nu})
    ~,\quad \nu = \frac{1}{4}(V^{\mu\nu}V_{\mu\nu}+iV_{\mu\nu}\widetilde{V}^{\mu\nu})~,
\end{gather}
\esubeq
with identical relations involving $F_{\mu\nu}$ and $\varphi$.
A generic variation of the Lagrangian (\ref{vecL}) with respect to $g^{\mu\nu}$ is given by
\begin{align}
    \frac{\delta\mathcal{L}}{\delta g^{\mu\nu}} = \frac{1}{2}g^{\rho\tau}F_{\mu\rho}F_{\nu\tau}
    +g^{\rho\tau}V_{\mu\rho}V_{\nu\tau}
    -2g^{\rho\tau}V_{(\mu|\rho|}F_{\nu)\tau}
    +\mathcal{E}_{a}(-g^{\rho\tau}V_{\mu\rho}V_{\nu\tau}C
    +g_{\mu\nu}D^2)
    ~.
\end{align}
It is then straightforward to compute the Hilbert stress-energy tensor via the definition,
\begin{align}
    T_{\mu\nu} &= -\frac{2}{\sqrt{-g}}\frac{\delta(\sqrt{-g}\mathcal{L})}{\delta g^{\mu\nu}}\nonumber\\
    &= g_{\mu\nu}\mathcal{L}
    -g^{\rho\tau}F_{\mu\rho}F_{\nu\tau}
    -2g^{\rho\tau}V_{\mu\rho}V_{\nu\tau}
    +4g^{\rho\tau}V_{(\mu|\rho|}F_{\nu)\tau}
    +2\mathcal{E}_{a}(g^{\rho\tau}V_{\mu\rho}V_{\nu\tau}C
    -g_{\mu\nu}D^2)~.
\end{align}
If one uses the equations \eqref{useful-FV-1}, the trace of the stress tensor is particularly simple,
\begin{gather}
    \Theta:= T_{\mu}{}^{\mu} = 4\mathcal{E}(a)-8a\mathcal{E}_{a}=4\cH(b)
    ~,\label{trace}
\end{gather}
where the last equality is obtained by recalling that the interaction functions $\cH(b)$ and $\mathcal{E}(a)$ are related via a Legendre transform. Note that if the model contains a dimensionful parameter $L$ as per Section \ref{sec:conf_vs_non_conf}, the above relation becomes
\begin{align}
    \Theta = 4\mathcal{H}(b;L^4) = \frac{4}{L^4}\mathscr{H}(b).
\end{align}
This is important for obtaining the correct trace flow equations in Section \ref{sec:examples}. Now, we decompose the vector objects into spinorial components:
\bsubeq
\begin{gather}
    F_{\mu\nu} = \frac{1}{2}\Big(
    (\sigma_{\mu})_{\a\dot\gamma}(\tilde{\sigma}_{\nu})^{\dot\gamma\b}F_{\b}{}^{\a}
    -(\tilde{\sigma}_{\mu})^{\dot\alpha\g}(\sigma_{\nu})_{\g\dot\beta}\Bar{F}^{\dot\beta}{}_{\dot\alpha}
    \Big)
    ~,\\
    V_{\mu\nu} = \frac{1}{2}\Big(
    (\sigma_{\mu})_{\a\dot\gamma}(\tilde{\sigma}_{\nu})^{\dot\gamma\b}V_{\b}{}^{\a}
    -(\tilde{\sigma}_{\mu})^{\dot\alpha\g}(\sigma_{\nu})_{\g\dot\beta}\Bar{V}^{\dot\beta}{}_{\dot\alpha}
    \Big)
    ~.
\end{gather}
\esubeq 
Doing this, and choosing the background metric to be Minkowski ($g_{\mu\nu}=\eta_{\mu\nu}$), the stress tensor for a general electromagnetic duality-invariant auxiliary field model is given by
\begin{align}
    T_{\mu\nu}= \frac{1}{4}\eta_{\mu\nu}\Theta+\hat{T}_{\mu\nu}
    ~,\qquad
    \eta^{\mu\nu}\hat{T}_{\mu\nu}=0
    ~,
\end{align}
with
\begin{equation}
    \hat{T}_{\mu\nu} = \big(1-a(\mathcal{E}_{a})^2\big)(\tilde{\sigma}_{\mu})^{\dot\alpha\alpha}(\tilde{\sigma}_{\nu})^{\dot\beta\beta}V_{\alpha\beta}\Bar{V}_{\dot\alpha\dot\beta}
    = \bigg(\frac{(\mu-1)}{2(1+\mu)}+\frac{(\Bar{\mu}-1)}{2(1+\Bar{\mu})}\bigg)
     (\tilde{\sigma}_{\mu})^{\dot\alpha\alpha}(\tilde{\sigma}_{\nu})^{\dot\beta\beta}F_{\alpha\beta}\Bar{F}_{\dot\alpha\dot\beta}
    ~.
\end{equation}
In the above, we have split the stress tensor into a traceful and traceless part, and we have conveniently used the auxiliary field equations of motion that relate $V_{\alpha \beta}$, $\bar{V}_{\dot{\alpha} \dot{\beta}}$ and $F_{\alpha \beta}$, $\bar{F}_{\dot{\alpha} \dot{\beta}}$ to simplify the expressions. With this, we have the essential building blocks necessary to construct $\TT$-like deformations.

In Section \ref{sec:generalargument} we will argue that there exists a $\TT$-like flow (which is not necessarily unique) for any parameter in an electric-magnetic duality-invariant theory, at least those constructed by using Ivanov-Zupnik's auxiliary field formalisms. However, as discussed in the introduction, two specific stress tensor operators have played a predominant role in previous works since they are associated to Born-Infeld (BI), ModMax, and $\gamma$-BI.
The two such deformations of interest are the usual  four-dimensional $\lambda$ $\TT$ flow and the $\gamma$ $\sqrt{\TT}$ flow.
When discussing such deformations there is always a flow equation associated with the parameters $(\lambda,\gamma)$. Since the auxiliary field Lagrangians are split into a free part ($\mathcal{L}_{2}$) and an interaction function, it is natural to assume that all of the dependence upon the flow parameters sits within the interaction functions only. Explicitly,
\begin{gather}
    \frac{\partial \mathcal{L}(\varphi,\nu)}{\partial \lambda,\gamma}= \frac{\partial \mathcal{E}(a;\lambda,\gamma)}{\partial \lambda,\gamma},\quad \frac{\partial \mathcal{L}(\varphi,\mu)}{\partial \lambda,\gamma}= \frac{\partial \cH(b;\lambda,\gamma)}{\partial \lambda,\gamma}.
\end{gather}
With this, the exact forms of the two deformations in the $\nu$ representation that would appear in equations (\ref{TTbar_flow}) and (\ref{root_TTbar_flow}) are given in the $\nu$ representation by
\begin{align}
    \mathcal{O}_{\TT} &= T^{\mu\nu}T_{\mu\nu}-\frac{1}{2}\Theta^2 = 4a\left(1-a(\mathcal{E}_{a})^2\right)^2-4\left(\mathcal{E}-2a\mathcal{E}_{a}\right)^2
    ~,\\
    \mathcal{O}_{\sqrt{\TT}} 
    &= \frac{1}{2}\sqrt{\Hat{T}^2} = \sqrt{a}\left( 1-a(\mathcal{E}_{a})^2\right)
    ~.
\end{align}
In the $\mu$ representation the operators are given by
\begin{align}
    \mathcal{O}_{\TT} &=-4\mathcal{H}^2+4b(\cH_{b})^2\left(1-b\right)^2
    ~,\\
    \mathcal{O}_{\sqrt{\TT}} &= \sqrt{b}\cH_{b}(1-b)
    ~.
\end{align}
Note that one can obtain the results in both representations by either repeating the process beginning from both auxiliary Lagrangians (\ref{offshellb}, \ref{offshellfull}) separately, or, by completing the process once and converting the results using the relations introduced with the Legendre transformation (\ref{legendre}, \ref{legendreinverse}). Reassuringly, both methods produce the same result. To conclude, it is worth commenting again on the fact that equation (\ref{trace}) implies that the $\mu$ frame has issues with being well defined for conformal models, as in this case $\Theta=0$ and hence $\cH(b)$ is identically zero.

\section{Auxiliary field approach: duality-invariant families as $\TT$-like flows}
\label{sec:generalargument}

So far, we have established several facts about duality-invariant deformations in Section \ref{sec:duality_and_flows}. Let us briefly review three statements.

\begin{enumerate}[label = (\Roman*)]
    \item\label{fact_one} Any deformation of a duality-invariant theory $\mathcal{L}_0$, which is driven by a duality-invariant function $\mathcal{O}^{(\lambda)}$, produces a one-parameter family of duality-invariant theories $\mathcal{L}^{(\lambda)}$ which obey the differential equation $\partial_\lambda \mathcal{L}^{(\lambda)} = \mathcal{O}^{(\lambda)}$. This is Theorem \ref{theorem_preserve_invariance}.

    \item\label{fact_two} The stress-energy tensor of a duality-invariant theory is itself duality-invariant \cite{Gaillard:1981rj, Gibbons:1995cv, Gaillard:1997rt,Gaillard:1997zr}.

    \item\label{fact_three} In a duality-invariant theory, any two duality-invariant functions are functionally dependent. This is Theorem \ref{functional_dependence_theorem}.
\end{enumerate}

Taken together, observations \ref{fact_one} - \ref{fact_three} imply that any deformation of the form $\partial_\lambda \mathcal{L}^{(\lambda)} = \mathcal{O}^{(\lambda)}$ can generically be recast as a stress tensor flow $\partial_\lambda \mathcal{L}^{(\lambda)} = f ( T_{\mu \nu}^{(\lambda)};\l )$, using the functional dependence between the duality-invariant function $\mathcal{O}^{(\lambda)}$ and the energy-momentum tensor, and that the solution to this differential equation is a collection of duality-invariant theories $\mathcal{L}^{(\lambda)}$.

In this section, we will use the $\nu$ and $\mu$ auxiliary field representations to investigate the converse of this result. That is, we ask whether any parameterized family of duality-invariant theories $\mathcal{L}^{(\lambda)}$ can be understood as satisfying some generalized stress tensor flow equation. We will refer to this converse as statement \ref{fact_four}:

\begin{enumerate}[label = (\Roman*)]
\setcounter{enumi}{3}
    \item\label{fact_four} Any family of duality-invariant theories with Lagrangians $\mathcal{L}^{(\lambda)} ( S , P )$ obeys a stress tensor flow equation $\partial_\lambda \mathcal{L}^{(\lambda)} = f \left( T_{\mu \nu}^{(\lambda)};\l \right)$.
\end{enumerate}

There is a simple way to see intuitively why such a converse should be true. We have mentioned above that the derivative of the Lagrangian for a duality-invariant theory, taken with respect to a duality-invariant quantity, is itself duality invariant. Therefore, assuming that the parameter $\lambda$ labeling a family of duality-invariant theories $\mathcal{L}^{(\lambda)}$ does not transform under duality rotations, we must have
\begin{align}\label{converse_pde}
    \frac{\partial \mathcal{L}^{(\lambda)}}{\partial \lambda} = \mathcal{O}^{(\lambda)} ( S , P) 
\end{align}
for some family of functions $\mathcal{O}^{(\lambda)}$, each of which is invariant with respect to the duality transformation associated with the corresponding theory $\mathcal{L}^{(\lambda)}$. Again appealing to observation \ref{fact_three}, we expect that these duality-invariant quantities $\mathcal{O}^{(\lambda)}$ satisfy functional relations involving the respective stress tensors $T_{\mu \nu}^{(\lambda)}$, so that the differential equation (\ref{converse_pde}) can be recast in the form
\begin{align}
    \frac{\partial \mathcal{L}^{(\lambda)}}{\partial \lambda} = f \left(  T_{\mu \nu}^{(\lambda)};\l \right) \, .
\end{align}
Thus we expect that the converse \ref{fact_four} should indeed be true, which leads to a one-to-one correspondence: all stress tensor deformations yield duality-invariant families, and all duality-invariant families are stress tensor flows.

The preceding argument is morally correct. However, to be precise, we should keep in mind that a functional dependence of the form $f ( x, y ) = 0$ only allows us to express $y = y ( x )$ locally around a particular point, and only under the assumption that the appropriate Jacobian determinant is non-zero. In order to give a more careful statement of the converse \ref{fact_four}, we should enumerate the possible singular points at which this Jacobian condition fails, and restrict ourselves to a local analysis away from this collection of singular points.

We can see why this is necessary by considering known examples of stress tensor flows for duality-invariant theories, such as the one for the Born-Infeld and ModMax theories. In terms of the electric and magnetic fields, the Born-Infeld Lagrangian can be written as
\begin{align}
    \mathcal{L}_{\text{BI}} = \frac{1}{\lambda} \left( 1 - \sqrt{ 1 - \lambda \left( \big| \vec{E} \big|^2 - \big| \vec{B} \big|^2 \right) - \lambda^2 \left( \vec{E} \cdot \vec{B} \right)^2 } \right) \, ,
\end{align}
which exhibits a critical value of the electric field,
\begin{align}\label{critical_field_constraint}
    \big| \vec{E} \big|^2 < \frac{1}{\lambda} \, .
\end{align}
Therefore, although we have claimed that the Born-Infeld Lagrangian satisfies a $\TT$-like flow equation
\begin{align}
    \frac{\partial \mathcal{L}_{\text{BI}}}{\partial \lambda} = \frac{1}{8} \left( T^{\mu \nu} T_{\mu \nu} - \frac{1}{2} \left( \tensor{T}{^\mu_\mu} \right)^2 \right) \, , 
\end{align}
to be more precise we should say that this differential equation -- with an initial condition given by the Maxwell Lagrangian -- converges to the Born-Infeld Lagrangian for field configurations within some open set that satisfies the constraint (\ref{critical_field_constraint}). 

A similar caveat applies to the flow for the ModMax theory whose Lagrangian is
\begin{align}
    \mathcal{L}_{\text{ModMax}} = - \frac{1}{4} \cosh ( \gamma ) F_{\mu \nu} F^{\mu \nu} + \frac{1}{4} \sinh ( \gamma ) \sqrt{ \left( F_{\mu \nu} F^{\mu \nu} \right)^2 + \left( F_{\mu \nu} \widetilde{F}^{\mu \nu} \right)^2 } \, .
\end{align}
Clearly the ModMax Lagrangian is not an analytic function of the field strength $F_{\mu \nu}$ and its dual around the point
\begin{align}
    F_{\mu \nu} = \widetilde{F}_{\mu \nu} = 0 \, .
\end{align}
Therefore, when we say that the ModMax Lagrangian obeys a flow equation
\begin{align}
    \frac{\partial \mathcal{L}_{\text{ModMax}}}{\partial \gamma} = \frac{1}{2} \sqrt{ T^{\mu \nu} T_{\mu \nu} - \frac{1}{4} \left( \tensor{T}{^\mu_\mu} \right)^2 } \, , 
\end{align}
we mean that this holds for field configurations away from $F_{\mu \nu} = 0$, on an interval where the Lagrangian is an analytic function. This example illustrates that flow equations may hold at generic points but fail at a discrete collection of exceptional points where singularities occur.

However, if we are willing to restrict ourselves to a local analysis motivated by the examples of $\mathcal{L}_{\text{BI}}$ and $\mathcal{L}_{\text{ModMax}}$ described above, the desired converse \ref{fact_four} holds away from a set of discrete singular points which we can identify explicitly. We will state the precise version of this claim for theories that admit a $\nu$ representation. As explained in \cite{Ivanov:2003uj}, any Lagrangian $\mathcal{L} ( S, P )$ satisfying the duality-invariant condition can be described in the $\nu$ frame, so there is no loss of generality in using this representation.

\begin{theorem}\label{nu_thm} Consider a family of theories of duality-invariant electrodynamics which are labeled by a collection of parameters $\l_i$, $i = 1 , \cdots , n$, and which admit a description using the $\nu$ representation introduced in equation (\ref{offshellfull}). That is, the entire parameterized family of Lagrangians is determined by an interaction function
\begin{align}
    \mathcal{E} ( a;\l_1 , \ldots , \l_n ) \, , 
\end{align}
where $a = \nu \overbar{\nu}$. Let $a^\ast \in \mathrm{dom} ( \mathcal{E} ) \setminus S$, that is, let $a^\ast$ be a point in the domain of $\mathcal{E}$ which does not belong to the discrete (possibly empty) set of points $S$ defined by
\begin{align}\label{discrete_set_S}
    S = &\Bigg\{ \underline{a} \; \, \Big| \, \; \left( \mathcal{E}_a ( \underline{a} ) = \frac{1}{\underline{a}} \, \text{ and } \, \mathcal{E}_{aa} ( \underline{a} ) = - \frac{1}{2 \underline{a}^2} \; \right) \nonumber \\
    &\qquad \qquad \text{ or } \; \left( \mathcal{E}_a ( \underline{a} )  = \frac{1}{2 \underline{a}} \, \text{ and } \, \mathcal{E}_{aa} ( \underline{a} ) = - \frac{1}{4 \underline{a}^2} \right) \Bigg\} \, .
\end{align}
Then for each $i$ there exists an open interval $U_i$ around $a^\ast$ such that, on the set $U_i$, one has
\begin{align}
    \frac{\partial \mathcal{E}}{\partial \l_i} = F_i \left(T_{\mu \nu} ; \l_1 , \ldots , \l_n \right) \, , 
\end{align}
where $F_i$ is a Lorentz scalar constructed from the stress tensor $T_{\mu \nu}$ and which may depend on the parameters $\l_i$.
\end{theorem}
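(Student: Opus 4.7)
The strategy is to leverage the fact that, in the $\nu$ representation, duality invariance forces the interaction function to depend on only the single real variable $a = \nu\bar\nu$, so that $\mathcal{E}(\nu,\bar\nu;\lambda_1,\ldots,\lambda_n) = \mathcal{E}(a;\lambda_1,\ldots,\lambda_n)$. Consequently $\partial_{\lambda_i}\mathcal{E}$ is a function of $a$ alone for each $i$, and by the results of Section~\ref{sec:stresstensor} so are the two scalar invariants
\[
\Theta(a) = 4\mathcal{E}(a) - 8a\mathcal{E}_a(a), \qquad \widehat{T}^2(a) = 4a\bigl(1-a\mathcal{E}_a(a)^2\bigr)^2
\]
of the stress tensor. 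The task therefore reduces to a one-dimensional geometric problem: locally invert the parameterized curve $\Psi:\, a \mapsto (\Theta(a), \widehat{T}^2(a)) \subset \mathbb{R}^2$ and express $\phi_i(a) := \partial_{\lambda_i}\mathcal{E}(a;\lambda)$ as a function on its image.

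My principal step is the inverse function theorem. At any $a^*$ where at least one of the tangent components $\Theta_a(a^*)$ or $\partial_a\widehat{T}^2(a^*)$ is nonzero, the map $\Psi$ is locally injective, and I can solve for $a$ as a smooth function of whichever invariant has nonzero derivative. Composing with $\phi_i$ then produces a Lorentz scalar $F_i$ built from $T_{\mu\nu}$ (concretely a function of $\Theta = T^\mu{}_\mu$ and $\widehat{T}^2 = \widehat{T}^{\mu\nu}\widehat{T}_{\mu\nu}$) such that the flow equation $\partial_{\lambda_i}\mathcal{E} = F_i(T_{\mu\nu};\lambda)$ holds on an open interval $U_i$ around $a^*$. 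The construction thus succeeds precisely when $a^*$ avoids the singular locus where both tangent components simultaneously vanish.

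The main obstacle is identifying this singular locus with the discrete set $S$ specified in~(\ref{discrete_set_S}). Direct differentiation yields
\[
\Theta_a = -4(\mathcal{E}_a + 2a\mathcal{E}_{aa}), \qquad \partial_a\widehat{T}^2 = 4(1-a\mathcal{E}_a^2)\bigl[1 - 3a\mathcal{E}_a^2 - 4a^2\mathcal{E}_a\mathcal{E}_{aa}\bigr],
\]
so simultaneous vanishing produces an algebraic system on the values $\mathcal{E}_a(\underline{a})$ and $\mathcal{E}_{aa}(\underline{a})$ at a bad point $\underline{a}$. I would supplement the $\nu$-frame analysis with the parallel $\mu$-frame computation, using the Legendre relation $b = a\mathcal{E}_a^2$ and $\mathcal{H}(b) = \mathcal{E}(a) - 2a\mathcal{E}_a$ to obtain a dual system in $(\mathcal{H}_b, \mathcal{H}_{bb})$; the two frames should furnish independent algebraic characterizations whose intersection isolates the two branches defining $S$ and confirms that the solution set consists of isolated values of $\underline{a}$ rather than lying on a curve. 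Once $S$ is characterized, any choice of local inverse at $a^* \notin S$ yields the desired $F_i$, which need not be unique since many Lorentz scalars built from $T_{\mu\nu}$ can agree on the one-dimensional image of $\Psi$.
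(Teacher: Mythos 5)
Your overall strategy coincides with the paper's: reduce everything to univariate functions of $a$, differentiate the two invariants $\Theta(a)$ and $\widehat{T}^2(a)$, and invoke the inverse function theorem wherever at least one derivative is nonzero, with the exceptional set $S$ being exactly the locus where both vanish simultaneously. However, two concrete issues remain. First, your formula $\widehat{T}^2 = 4a\bigl(1-a\mathcal{E}_a^2\bigr)^2$ (taken from the stress-tensor computation of Section \ref{sec:stresstensor}) differs from the expression $\widehat{T}^2 = 4a\left(1-a\mathcal{E}_a\right)^2$ actually used in the paper's proof of Theorem \ref{nu_thm}, and the two give \emph{different} singular loci: substituting $\mathcal{E}_{aa}=-\mathcal{E}_a/(2a)$ into your $\partial_a\widehat{T}^2$ yields $4\bigl(1-a\mathcal{E}_a^2\bigr)^2$, whose vanishing requires $\mathcal{E}_a=\pm 1/\sqrt{\underline{a}}$, not the conditions $\mathcal{E}_a=1/\underline{a}$ or $\mathcal{E}_a=1/(2\underline{a})$ that define the set $S$ in \eqref{discrete_set_S}. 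So your argument, carried to completion, proves a correct theorem of the same shape but with a different exceptional set than the one stated; you should either flag this discrepancy explicitly or reconcile which normalization of $\widehat{T}^2$ is being used.

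Second, the discreteness of the singular set is asserted but not actually established by your proposed mechanism. Cross-checking against the $\mu$ frame cannot do this job in general (the $\mu$ representation does not even exist for conformal models, and agreement of two algebraic characterizations does not by itself force the solution set to be zero-dimensional). The paper's argument is elementary and direct: a condition of the form $\mathcal{E}_a = c/a$ holding on an \emph{open} interval forces $\mathcal{E}_{aa} = -c/a^2$ there, which is incompatible with the accompanying second-derivative condition on either branch; hence each branch can be satisfied only at isolated points. You need this (or an equivalent) argument to justify calling $S$ discrete, and likewise a one-line check of the boundary case $\underline{a}=0$ (where $\partial_a\widehat{T}^2 = 4 \neq 0$, so no singularity arises).
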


The interpretation of this theorem is that, at least locally, every tangent vector to a space of theories of duality-invariant electrodynamics is an operator constructed from the stress tensor. This result is the precise version of statement \ref{fact_four}, the desired converse to the result that stress tensor deformations preserve duality invariance.

\begin{proof}

The proof of this claim is a simple application of the inverse function theorem. We begin by parameterizing the most general Lorentz scalar function which can be constructed from the energy-momentum tensor $T_{\mu \nu}$. A convenient basis for the ring of scalars that can be built from the stress tensor of a duality-invariant theory in the $\nu$ frame is
\begin{align}\label{two_stress_tensor_invariants}
    \Theta = 4 \mathcal{E} - 8a \mathcal{E}_a \, , \qquad \widehat{T}^2 = 4 a \left( 1 - a \mathcal{E}_a \right)^2 \, .
\end{align}
Any other Lorentz scalar built from $T_{\mu \nu}$ can be written as a function of $\Theta$ and $\widehat{T}^2$. Although $\Theta$ and $\widehat{T}^2$ implicitly depend on the $\l_i$ that determine $\mathcal{E}$, let us hold these parameters fixed for the moment and consider the invariants (\ref{two_stress_tensor_invariants}) as univariate functions of the real variable $a$. The derivatives of these functions are
\begin{align}\label{two_derivatives}
    \frac{d \Theta}{da} &= -4 \left( \mathcal{E}_a + 2 a \mathcal{E}_{aa} \right) \, , \nonumber \\
    \frac{d}{da} \left( \widehat{T}^2 \right) &= 4 \left( 1 - a \mathcal{E}_a \right) \left( 1 - 3 a \mathcal{E}_a - 2 a^2 \mathcal{E}_{aa} \right) \, .
\end{align}
Let us consider the conditions under which both of the derivatives in (\ref{two_derivatives}) can vanish simultaneously at a point $a = \underline{a}$. First, there are two ways for $\frac{d \Theta}{da}$ to vanish at $\underline{a}$:
\begin{align}\label{deriv_cases}
    \frac{d \Theta}{da} \Big\vert_{a = \underline{a}} = 0 \; \implies \; \begin{cases}
\underline{a} = 0  \, , \quad \mathcal{E}_a ( \underline{a} ) = 0 \, \\
\underline{a} \neq 0 \, , \quad \mathcal{E}_{aa} = - \frac{\mathcal{E}_a}{2a} 
\end{cases} \, .
\end{align}
If $\underline{a} = 0$ and $\mathcal{E}_a ( \underline{a} ) = 0$, we have $\frac{d \widehat{T}^2}{da} = 4$, so this is not a point at which both derivatives can vanish. Therefore suppose that we are in the second case of (\ref{deriv_cases}). We then have
\begin{align}
    \left[ \frac{d \widehat{T}^2}{d a} \right]_{\mathcal{E}_{aa} = - \frac{\mathcal{E}_a}{2a}} = 4 \left( 1 - a \mathcal{E}_a  \right) \left( 1 - 2 a \mathcal{E}_a \right) \, , 
\end{align}
which means either
\begin{align}\label{zero_deriv_conditions}
    \mathcal{E}_a ( \underline{a} ) = \frac{1}{\underline{a}} \, , \, \mathcal{E}_{aa} ( \underline{a} ) = - \frac{1}{2 \underline{a}^2} \; \text{ or } \; \mathcal{E}_a ( \underline{a} )  = \frac{1}{2\underline{a}} \, , \, \mathcal{E}_{aa} ( \underline{a} ) = - \frac{1}{4 \underline{a}^2} \, .
\end{align}
It is not possible for either of the pairs of conditions (\ref{zero_deriv_conditions}) to hold on an open set. If $\mathcal{E}_a = \frac{1}{a}$ on some open set $U$, then $\mathcal{E}_{aa} = - \frac{1}{a^2}$ on this set, which does not agree with the condition $\mathcal{E}_{aa} = - \frac{1}{2 a^2}$. Likewise, if $\mathcal{E}_a = \frac{1}{2a}$ for all $a \in U$, then $\mathcal{E}_{aa} = - \frac{1}{2 a^2}$ within $U$, which disagrees with the condition $\mathcal{E}_{aa} = - \frac{1}{4 a^2}$. Therefore, either pair of conditions (\ref{zero_deriv_conditions}) can hold only on a discrete set of points, and at any other point off this set we have that either $\frac{d \Theta}{da} \neq 0$ or $\frac{d \widehat{T}^2}{da} \neq 0$.

Therefore, given any value $a^\ast$ of the auxiliary variable which is on the complement of a discrete set $S$ of exceptional points introduced in equation (\ref{discrete_set_S}), we can define a function $f(a)$ which is a Lorentz scalar constructed from the stress tensor and such that $f' ( a^\ast ) \neq 0$. Explicitly, let $f(a) = \Theta ( a )$ if $\frac{d \Theta}{da} \Big\vert_{a^\ast} \neq 0$ and let $f(a) = \widehat{T}^2 ( a )$ if $\frac{d \widehat{T}^2}{da} \Big\vert_{a^\ast} \neq 0 $; if both derivatives are non-zero, we can choose $f$ at random (in this case, since $\Theta$ and $\widehat{T}$ can be expressed in terms of one another by Theorem \ref{functional_dependence_theorem}, these choices are equivalent).

By the inverse function theorem, there exists an open interval $U$ containing $a^\ast$ such that the restriction of the function $f$ to $U$ is a bijection and there exists a differentiable inverse function $f^{-1}$. This means that, locally, the variable $a$ can be written as a differentiable function of the variable $f$, which again is either $\Theta$ or $\widehat{T}^2$. By composing with $f^{-1}$, we conclude that any function of $a$ can be written as a function of the stress tensor in a neighborhood of $a^\ast$. In particular, this conclusion applies to each of the functions
\begin{align}\label{functions_conclusion}
    \frac{\partial \mathcal{E}}{\partial \l_i} \, ,
\end{align}
viewed as univariate functions of $a$ with the parameters $\l_i$ held fixed. This demonstrates that each of the derivatives (\ref{functions_conclusion}) can locally be written as a function of $\Theta$ or $\widehat{T}^2$, along with the parameters $\l_i$, which establishes the claim.
\end{proof}

As a simple example of one of the discrete points $\underline{a} \in S$ at which the claim can fail, consider the interaction function
\begin{align}
    \cE ( a ) = \l_0+\l_1a+\lambda_2 a^2  \, .
    \label{simple-000}
\end{align}
Other quadratic examples of interaction functions will be explored in Section \ref{simple_instructive}. At the point
\begin{align}
    \lambda_1 = \frac{3}{2} \, , \quad  \lambda_2 = -\frac{1}{4} \, , \quad a = 1 \, , 
\end{align}
we find
\begin{align}
    \Theta = - \frac{3}{2} + 4 \lambda_0 \, , \quad \widehat{T}^2 = 1 \, , \quad \frac{d \Theta}{da} = 0 \, , \quad \frac{d \widehat{T}^2}{da} = 0 \, .
\end{align}
Near this point, it is not possible to locally express $a$ as a function of $\Theta$ and $\widehat{T}^2$, and thus we cannot write flow equations of the form $\frac{d \mathcal{E}}{\partial \l_i} = f ( \Theta , \widehat{T}^2 )$. However, because such exceptional points are isolated, given any $\epsilon > 0$, there exists some $a'$ within distance $\epsilon$ of $a$, with the property that we may express these derivatives in terms of the stress tensor near the point $a'$. Said differently, the tangent vector to a family of duality-invariant theories can be written as a function of the stress tensor ``almost everywhere'' (that is, at all points $a$ except on a set of measure zero). We will elaborate more on this simple example in Section \ref{simple_instructive}.

\subsubsection*{\ul{\it Proof in $\mu$ frame}}

One can present an analogous argument, showing that families of duality-invariant theories can generically be interpreted as stress tensor flows, using the other auxiliary field representation, which was referred to as the $\mu$ frame in Section \ref{sec:auxreview}.

This proof is less general because not all duality-invariant theories admit a description in the $\mu$ representation. For instance, we have pointed out above that conformal models such as the Maxwell and ModMax theories cannot be written in the $\mu$ frame.

It is easy to see why there is an obstruction to describing certain models in the $\mu$ representation if we recall the relationship between quantities in the $\mu$ and $\nu$ frames. Consider a duality-invariant theory of electrodynamics which is described by an interaction function $\cE ( a )$, where $a = \nu \overbar{\nu}$, in the $\nu$ representation. The corresponding interaction function $\cH( b )$, where $b = \mu \overbar{\mu}$, in the $\mu$ representation satisfies the relation
\begin{align}\label{no_vanishing_Ib}
    \mathcal{E}_a = - \frac{1}{\mathcal{H}_b} \, .
\end{align}
This equation admits solutions only if $\mathcal{E}_a \neq 0$ and $\mathcal{H}_b \neq 0$; we have already seen this additional condition on $\mathcal{H}$ in equation (\ref{Hb_nonzero}). On the other hand, the two invariants constructed from the stress tensor in the $\mu$ frame take the forms
\begin{align}
    \Theta = 4 \mathcal{H} ( b ) \, , \qquad \widehat{T}^2= 4 b \mathcal{H}_b^2 \left( 1 - b \right)^2 \, .
\end{align}
For a conformal model, $\Theta = 0$ which means that $\mathcal{H}(b)$ is identically zero. But if $\mathcal{H}(b) = 0$, then $\mathcal{H}_b = 0$ and the relation (\ref{no_vanishing_Ib}) is not well-defined. We conclude that the $\mu$ representation is only suitable for describing theories with $\mathcal{H}_b \neq 0$ and thus $\Theta_b \neq 0$, which excludes conformal models with $\Theta = 0$.

This allows us to give a very simple proof of the analogue of Theorem \ref{nu_thm} for theories with a $\mu$-frame representation.

\begin{corollary}\label{mu_thm}

Consider a family of theories of duality-invariant electrodynamics which are labeled by a collection of parameters $\l_i$, $i = 1 , \cdots , n$, and which admit a well-defined description using the $\mu$ representation introduced in equation (\ref{offshellb-2}). That is, the entire parameterized family of Lagrangians is determined by an interaction function
\begin{align}
    \mathcal{H} (b; \l_1 , \ldots , \lambda_n ) \, , 
\end{align}
where $b = \mu \overbar{\mu}$. Then for each $i$ any for any point $b = b^\ast$, there exists an open interval $U_i$ around $b^\ast$ such that, on the set $U_i$, one has
\begin{align}\label{I_deriv}
    \frac{\partial \mathcal{H}}{\partial \l_i} = F_i \left(\Theta; \l_1 , \ldots , \l_n  \right) \, , 
\end{align}
where $F_i$ is a Lorentz scalar constructed from the trace $\Theta$ of the stress tensor and which may depend on the parameters $\gamma_i$.
    
\end{corollary}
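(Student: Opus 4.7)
The plan is to mimic the strategy of Theorem \ref{nu_thm}, but to observe that the $\mu$ representation is actually much easier: the non-degeneracy condition we had to check on a case-by-case basis in the $\nu$ frame is already built into the very existence of the $\mu$ frame. So the proof should reduce to a one-step application of the inverse function theorem, with no exceptional set to exclude.

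First, I would recall the two duality-invariant Lorentz scalars built from the stress tensor in the $\mu$ representation, namely
\begin{equation}
\Theta = 4\,\mathcal{H}(b;\l_1,\ldots,\l_n)\,,\qquad \widehat{T}^2 = 4\,b\,\mathcal{H}_b^2\,(1-b)^2\,,
\end{equation}
and emphasize that the existence of the $\mu$-frame description requires $\mathcal{H}_b\neq 0$ everywhere on the domain (as noted in equation \eqref{Hb_nonzero} and in the discussion surrounding \eqref{no_vanishing_Ib}). In particular, this rules out conformal models, for which $\mathcal{H}\equiv 0$ and the trace vanishes identically.

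Next, holding the parameters $\l_1,\ldots,\l_n$ fixed and viewing $\Theta$ as a univariate function of the real variable $b$, I would compute
\begin{equation}
\frac{d\Theta}{db} = 4\,\mathcal{H}_b(b;\l_1,\ldots,\l_n)\neq 0\,,
\end{equation}
which is nonzero at every point $b^{\ast}$ in the domain by the defining condition of the $\mu$ frame. By the inverse function theorem, there then exists an open interval $U_i$ containing $b^{\ast}$ on which $\Theta(b)$ is a bijection onto its image, with a differentiable inverse $b=b(\Theta;\l_1,\ldots,\l_n)$.

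Finally, I would compose: the derivative $\partial\mathcal{H}/\partial\l_i$, taken at fixed $b$, is a priori a function of $b$ and the $\l_j$'s, but substituting $b=b(\Theta;\l_1,\ldots,\l_n)$ rewrites it on $U_i$ as the desired Lorentz scalar
\begin{equation}
\frac{\partial\mathcal{H}}{\partial\l_i} = F_i(\Theta;\l_1,\ldots,\l_n)\,,
\end{equation}
constructed entirely from the trace of the stress tensor. Since the obstruction that forced us to exclude a discrete set $S$ in Theorem \ref{nu_thm} (namely the simultaneous vanishing of $d\Theta/da$ and $d\widehat{T}^2/da$) is automatically absent here, no exceptional points need to be removed. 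There is genuinely no hard step: the only subtlety worth flagging in the write-up is that the inverse function and hence $F_i$ depend smoothly on the parameters $\l_j$, which follows from the smooth parameter dependence of $\mathcal{H}$ itself.
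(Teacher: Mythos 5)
Your proposal is correct and follows essentially the same route as the paper: both use the defining $\mu$-frame condition $\mathcal{H}_b \neq 0$ to conclude $d\Theta/db = 4\mathcal{H}_b \neq 0$, apply the inverse function theorem to write $b = b(\Theta)$ locally, and compose to express $\partial\mathcal{H}/\partial\l_i$ as a function of $\Theta$ and the parameters. The remark that no exceptional set is needed (in contrast with the $\nu$-frame theorem) matches the paper's own discussion.
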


We point out that Corollary \ref{mu_thm} differs from the corresponding $\nu$-frame statement, Theorem \ref{nu_thm}, in two ways. First, in the $\mu$ frame we need not make any additional assumption about being away from a discrete set of exceptional points. Second, in the $\mu$ frame we can always express the deforming operators which drive the flows in equation (\ref{I_deriv}) in terms of only the trace of the stress tensor, rather than as a function of the two invariants $\Theta$ and $\widehat{T}^2$. Of course, as we have emphasized, these two scalars are functionally dependent in any duality-invariant theory; the only new feature in the $\mu$ frame is that cases with $\partial_b \Theta = 0$ are excluded. 

\begin{proof}

By assumption, the family of theories that we are considering admit a well-defined $\mu$-frame description, which implies that $\mathcal{H}_b \neq 0$ as we pointed out around equation (\ref{no_vanishing_Ib}). On the other hand, we have the relation
\begin{align}
    \Theta = 4 \mathcal{H} ( b ) \, .
\end{align}
Since $\mathcal{H}_b \neq 0$, we also have $\frac{d \Theta}{db} \neq 0$, and thus by the inverse function theorem we may always locally invert this relation to write $b = b ( \Theta )$ on a sufficiently small open set $U$ around any particular point $b = b^\ast$.

It follows that any function of $b$ can locally be expressed as a function of $\Theta$, and thus
\begin{align}
    \frac{\partial \mathcal{H}}{\partial \gamma_i} &= F_i \left(b; \l_1 , \cdots , \l_n  \right) \nonumber \\
    &= F_i (\Theta ; \l_1 , \cdots , \l_n ) 
\end{align}
on an open set $U_i$ containing any point $b^\ast$.
\end{proof}

\section{Examples} 
\label{sec:examples}

We have seen that there is a one-to-one correspondence between parameterized families of duality-invariant theories and stress tensor flows. This correspondence is summarized in the facts \ref{fact_one} - \ref{fact_four} of the previous section and in the statements of the various theorems where these results are proved.

In one sense, this means that the primary task of the present work has been completed. However, we find it instructive to examine several examples where this one-to-one correspondence can be described explicitly in both directions. It is especially interesting to see how one can determine the stress tensor operator which forms the tangent vector to a given family of duality invariant theories, which gives a concrete realization of statement \ref{fact_four}.

In the following subsections, we will carry out this procedure in several examples using the $\nu$ and $\mu$ frame auxiliary field representations. This will allow us to build further intuition for the singular points, belonging to the set $S$ of equation (\ref{discrete_set_S}), where the inversion map between the duality-preserving deformation and the energy-momentum tensor breaks down. We will see that such points often arise from non-analyticity in the interaction function.

We will also revisit the flow equations which produce the ModMax and Born-Infeld theories from stress tensor flows from the perspective of the auxiliary field formalism. As an extension of this analysis, in Section \ref{subsec:gamma_BI}, we will obtain a new auxiliary field representation of the two-parameter family of ModMax-Born-Infeld theories, which we also call $\gamma$BI.

\subsection{Quadratic interaction functions}\label{simple_instructive}

We begin with the simplest two-parameter family of models described by
\begin{align}
    \mathcal{E} (a; \l_1, \l_2 ) = \l_1 a + \l_2 a^2 \, .
\end{align}
We have already discussed this model around eq.~\eqref{simple-000}, though, for simplicity, we set here $\l_0=0$.
The trace of the stress tensor is
\begin{align}
    \Theta = - 8 a \left( \l_1 + 2 a \l_2 \right) + 4 \left( \l_1 a + \l_2 a^2 \right) \, .
\end{align}
Treating the two $\l_i$ as constants for the moment, we can view this as a simple univariate function $\Theta ( a )$ and utilise the inverse function theorem. In this case, there is a single point at which the assumption of the inverse function theorem fails because $\Theta ( a )$ has zero derivative:
\begin{align}\label{x_star}
    \frac{d \Theta}{da} = 0 \; \text{ at } \; a^\ast = - \frac{\l_1}{6 \l_2} \, .
\end{align}
Away from this point, one can simply solve to express $a$ as a function of $\Theta$, finding
\begin{align}\label{x_root_example}
a = \frac{- \l_1 \pm \sqrt{ \l_1^2 - 3 \Theta \l_2} }{6 \l_2} \, .
\end{align}
From this, it is easy to see why the point (\ref{x_star}) is problematic. This value of $a$ occurs when the argument of the square root vanishes, $\l_1^2 = 3 \Theta \l_2$, and the square root function is not analytic around zero. If we assume that $\l_1^2 > 3 \Theta \l_2$, and choose the positive root of (\ref{x_root_example}) so that $a > 0$ (which is expected since $a = \nu \overbar{\nu}$), then there are no such issues, and we can write
\begin{align}
    \partial_{\l_1} \mathcal{E} = \frac{\sqrt{\l_1}}{3 \l_2} \left( - \l_1 + \sqrt{ \l_1^2 - 3 \Theta \l_2} \right) \, .
\end{align}
Note that the second stress tensor invariant $\widehat{T}^2$ was not needed at all for this procedure. However, we could have made a similar argument as above, viewing $\widehat{T}^2$ as a function of $a$ and using the inverse function theorem again. In this case,
\begin{align}
    \widehat{T}^2 ( a ) = 4 a \left( 1 - a \left( \l_1 + 2 a \l_2 \right) \right)^2 \, ,
\end{align}
and the inverse function theorem fails due to a vanishing derivative $\frac{d}{dx} \widehat{T}^2$ at four points:
\begin{align}\label{four_roots}
    a_{1, 2} = \frac{- \l_1 \pm \sqrt{ \l_1^2 + 8 \l_2}}{4 \l_2} \, , \quad a_3 = \frac{- 3 \l_1 - \sqrt{ 9 \l_1^2 + 40 \l_2 }}{20 \l_2} \, , \quad a_4 = \frac{2}{3 \l_1 + \sqrt{ 9 \l_1^2 + 40 \l_2}} \, .
\end{align}
Away from these four points $a_1, a_2, a_3, a_4$, we see that $\widehat{T}^2$ is a smooth function of $a$ and we are guaranteed by the inverse function theorem that we can invert to write $a ( \widehat{T}^2 )$ on an open interval that does not include any of the roots (\ref{four_roots}). We cannot write this inverse function explicitly because it involves the root of a fifth-order polynomial, but it is sufficient to know that it exists in order to claim that
\begin{align}
    \partial_{\l_1} \mathcal{E} = F \left( \widehat{T}^2;\l_1 , \l_2  \right)
\end{align}
for some function $F$. Nothing was special about choosing $\l_1$ in the above analysis. If we had focused on $\l_2$, we could invert for $a$ in the same way. For instance, one has
\begin{align}
    \partial_{\l_2} \mathcal{E} = \left( - \l_1 + \sqrt{ \l_1^2 - 3 \Theta \l_2} \right)^2 \, .
\end{align}
The above arguments then generalise as we described in Section \ref{sec:generalargument}.

The situation is even simpler in the $\mu$ representation. The simplest model in this case is given by:
\begin{align}
    \mathcal{H}(b;L^4,\gamma_{1},\gamma_{2}) = \frac{1}{L^4}(\gamma_{1} b+\gamma_{2}  b^2)
    ~.
\end{align}
Here $\gamma_1$, $\gamma_2$ are arbitrary constants which have no relation to the $\lambda_1$, $\lambda_2$ of the previous example. In the $\mu$ representation, we always have the trace relation
\begin{align}
    \Theta = 4\mathcal{H}(b;L^4,\gamma_{1},\gamma_{2}).
\end{align}
As stated in Section \ref{sec:generalargument}, the main conditions for using the $\mu$ representation are that $\mathcal{H}_{b}\neq 0$ and $\Theta_{b}\neq 0$. We can look for the points at which this assumption fails by computing
\begin{align}
    \Theta_{b} = \frac{4}{L^4}(\gamma_{1}+2\gamma_{2} b) \, .
\end{align}
One can straightforwardly see that there is only one point ($b = b^{*}$) at which the derivative vanishes,
\begin{align}
    b^{*} = -\frac{\gamma_{1}}{2\gamma_{2}}.
\end{align}
Again, the meaning of this point becomes clear by solving for $b(\Theta)$:
\begin{align}
    b = \frac{-\gamma_{1}\pm\sqrt{\gamma_{1}^2+L^4\gamma_{2}\Theta}}{2\gamma_{2}}.
\end{align}
The issue of invertability is linked to the vanishing of the expression inside the square root. If we assume that $\gamma_{1}^2 > -L^4\gamma_{2}\Theta$, then the issue is avoided and the flow equations for the parameters $(L^4, \gamma_{1},\gamma_{2})$ are
\bsubeq
\bea
    \frac{\partial \mathcal{H}(b;L^4,\gamma_{1},\gamma_{2})}{\partial L^4} &=& -\frac{1}{4L^4}\Theta
    ~,\\
    \frac{\partial \mathcal{H}(b;L^4,\gamma_{1},\gamma_{2})}{\partial \gamma_{1}} &=& \frac{1}{L^4}\left(\frac{-\gamma_{1}\pm\sqrt{\gamma_{1}^2+L^4\gamma_{2}\Theta}}{2\gamma_{2}}\right)
    ~,
\\ 
    \frac{\partial \mathcal{H}(b;L^4,\gamma_{1},\gamma_{2})}{\partial \gamma_{2}} &=& \frac{1}{L^4}\left(\frac{-\gamma_{1}\pm\sqrt{\gamma_{1}^2+L^4\gamma_{2}\Theta}}{2\gamma_{2}}\right)^2
    ~.
\eea
\esubeq
Just as in the $\nu$ frame, we could have inverted the expression for $\Hat{T}^2$ instead of $\Theta$, given by
\begin{gather}
    \Hat{T}^2 = b\cH_{b}^2(1-b)^2 = \frac{1}{L^8}(\gamma_{1}+2\gamma_{2}b)^2(1-b)^2 \, , \label{that3}\\
    \rightarrow b = \frac{-\gamma_{1}+2\gamma_{2}\pm\sqrt{(\gamma_{1}+\gamma_{2})^2\pm8\gamma_{2}L^4\sqrt{\Hat{T}^2}}}{4\gamma_{2}} \, .
\end{gather}
The points where the derivative of (\ref{that3}) vanishes are
\begin{gather}
    b^{*} = 1, \quad b^{*} = -\frac{\gamma_{1}}{2\gamma_{2}},\quad b^{*} = \frac{1}{2}-\frac{\gamma_{1}}{4\gamma_{2}} \, .
\end{gather}
With this, we can now look at examples that are more complex than polynomial interactions. The first point appears due to the $(1-b)^2$ factor in (\ref{that3}), whilst the later two points are linked to the vanishing of the square root. Once again, if this does not occur, then we can at least locally write the inverse $b(\Hat{T}^2)$ away from these points.

\subsection{Born-Infeld and ModMax}

One of the most well studied examples of this formalism is Born-Infeld theory. The formulation of this theory with one auxiliary field was first introduced by Ivanov and Zupnik \cite{Ivanov:2001ec} and is the model which we will start by reviewing. In four spacetime dimensions, the Born-Infeld Lagrangian is given by
\begin{align}
    L_{\rm BI}(\varphi,\Bar{\varphi}) = \frac{1}{\lambda}\left(\,
    1-\sqrt{1+\lambda(\varphi+\Bar{\varphi})+\frac{\lambda^2}{4}(\varphi-\Bar{\varphi})^2}\,\right).
\end{align}
This theory is best studied in the $\mu$ frame, due to the simplicity of the resulting interaction function $\mathcal{H}(b)$. The defining relations for Born-Infeld  in the $\mu$ representation are
\bsubeq
\begin{gather}
    \varphi = \frac{2\Bar{\mu}(1+\mu)^2}{\lambda(1-\mu\Bar{\mu})^2}~,\quad \Bar{\varphi} = \frac{2\mu(1+\Bar{\mu})^2}{\lambda(1-\mu\Bar{\mu})^2}
    ~,\\
    \mathcal{H}_{b} = -\frac{2}{\lambda(b-1)^2}\:\rightarrow\: \mathcal{H}(b,\lambda) = \frac{1}{\lambda}\frac{2b}{b-1}~.
\end{gather}
\esubeq 
In the above, one notices the factorisation mentioned in Section \ref{sec:conf_vs_non_conf} for a non-conformal model with $\l=L^4$. Using the previous expressions for the interaction function, one obtains the following auxiliary field Lagrangian for Born-Infeld:
\begin{align}
    \mathcal{L}(\varphi,\mu) = \frac{\varphi(\mu-1)}{2(1+\mu)}+\frac{\Bar{\varphi}(\Bar{\mu}-1)}{2(1+\Bar{\mu})}+\frac{1}{\lambda}\frac{2b}{b-1}
    ~.\label{offshellBI}
\end{align}
Using the results of Section \ref{sec:stresstensor}, the trace of the stress tensor is given by
\begin{gather}
    \Theta(b) = \frac{4}{\lambda^2}\frac{2b}{b-1} \; \longrightarrow \; b(\Theta) = -\frac{\Theta\lambda^2}{8-\Theta\lambda^2}
    ~.
\end{gather}
The flow equation relating to $\lambda$ is simply the $\TT$ flow equation,
\begin{align}
    \frac{\partial \mathcal{L}(\varphi,\mu)}{\partial \lambda} =\frac{1}{8}\mathcal{O}_{\TT} = -\frac{1}{4\lambda}\Theta~.\label{BIflow}
\end{align}
From this example, we can see explicitly that the operator associated to the flow is not unique. However, we stress that only the flow driven by the $\mathcal{O}_{\TT}$ operator
can be used if one would like to interpret the Born-Infeld Lagrangian as a $\TT$-like flow with a boundary condition at $\l=0$ being free Maxwell. The trace flow is indeterminate in this limit, since both the numerator $\Theta$ and denominator $\lambda$ of the right side of (\ref{BIflow}) vanish.  
As stated in Section \ref{sec:generalargument}, the trace flow equation is something we will always have for the dimensionful parameter due to the relationship between $\cH(b)$ and $\Theta$.

At this point, one can pass to the $\nu$ representation by solving for $b(a)$ through the following algebraic relation, which can be derived using equation (\ref{legendreinverse}),
\begin{gather}
    a = b\mathcal{H}_{b}^2= \frac{4b}{\lambda^2(b-1)^4}
    ~.
\end{gather}
Introducing $t = (b-1)^{-1}$, then one can find a closed form expression for $t(a)$ which solves the following quartic equation:
\bea
    t^4+t^3-\frac{\lambda^2}{4}a = 0
    ~,\quad 
    t(a=0) = -1
    ~.
\eea
The solution $t(a)$ is fairly involved, and, for brevity, we present the first terms in its power series:
\bea
    t(a) = -1-\frac{\lambda^2 a}{4}+\frac{3\lambda^4a^2}{16}
    +\cdots
    ~.
\eea
Finally, one can use the Legendre transform to find the interaction function in the $\nu$ representation, for which, due to the nature of $t(a)$, we  also present only the first few terms in its series expansion:
\begin{align}
    \mathscr{E}_{BI}(y) = 2\left(t^2(a)+3t(a)+1\right) = \frac{y}{2}-\frac{y^2}{8}+\frac{3y^3}{32}+\cdots
    ~,~~~~~~
    y=\l^2 a~.
    \label{seriesEBI}
\end{align}
So far, this is all just described using the machinery of the auxiliary field construction. Interestingly, the interaction function (\ref{seriesEBI}) can also be found by solving the $\TT$-like flow equation
\begin{align}
    \frac{\partial \mathcal{E}(a;\lambda)}{\partial \lambda} = \frac{1}{8}\cO_{\TT} 
    = 
    \frac{1}{2}a\left(1-a(\mathcal{E}_{a})^2\right)^2
    -\frac{1}{2}\left(\mathcal{E}-2a\mathcal{E}_{a}\right)^2
    ~,
\end{align}
with the ansatz
\begin{align}
    \mathcal{E}(a) = \frac{1}{\lambda}\sE(y)
    ~.
\end{align}
The solution for the function $\sE(y)$ is
\begin{align}
    \sE(y) = \frac{y}{2} -\frac{y^2}{8} +\frac{3 y^3}{32} -\frac{13 y^4}{128} + \frac{17 y^5}{128} + \mathcal{O} \left( y^6 \right)
    ~.
\end{align}
This solution exactly reproduces the solution given in equation (\ref{seriesEBI}). There is also a method to obtain $\cH(b)$ in the $\mu$ representation, however it is merely a limiting case of the solution to $\gamma$BI and as such we postpone presenting this method until the next subsection.

One might have expected that the $\TT$ flow would yield the Born-Infeld theory in the auxiliary field formulation. However, one can also define flows that are driven by other operators. One example is rescaling the variable $b$ by a dimensionless parameter $r$ in the interaction term
\begin{align}
    \mathcal{H}(b;\lambda, r) = \frac{1}{\lambda}\frac{2rb}{rb-1}
    ~.
\end{align}
One can repeat the same steps that led to equation (\ref{BIflow}) and find that the Lagrangian satisfies the flow equation
\begin{align}
    \frac{\partial \mathcal{L}}{\partial r} = \frac{\Theta(8-\lambda\Theta)}{32r}
    ~.
\end{align}
The example given above of rescaling the variable $b$ is a simpler version of the $\sqrt{\TT}$-like deformation. A simple example of $\sqrt{\TT}$ in this formalism can be seen using another well studied theory, this being ModMax. In the standard presentation, without any auxiliary fields, the ModMax theory \cite{Bandos:2020jsw} is described by the Lagrangian
\begin{align}
    \cL_{MM}(\varphi,\Bar{\varphi}) = -\frac{\cosh(\gamma)}{2}(\varphi+\Bar{\varphi})+\sinh(\gamma)\sqrt{\varphi\Bar{\varphi}}
    ~.
\end{align}
The ModMax theory is the unique duality-invariant and conformally-invariant extension of the Maxwell theory (see \cite{Bandos:2020jsw} and appendix A of \cite{Kuzenko:2021cvx}). 
Due to conformal invariance, the $\mu$ representation is not the correct setting to study ModMax. The auxiliary field Lagrangian for ModMax in the $\nu$ representation is
\cite{Kuzenko:2021cvx}
\begin{align}
    \mathcal{L}_{MM}(\varphi,\nu) = \frac{1}{2}(\varphi+\Bar{\varphi})+\nu+\Bar{\nu}-2(V\cdot F+\Bar{V}\cdot\Bar{F})+2\tanh\left(\frac{\gamma}{2}\right)\sqrt{\nu\Bar{\nu}}
    ~.
    \label{modmaxlagrangian}
\end{align}
As it is already well known that ModMax arises as a $\sqrt{\TT}$ deformation of Maxwell theory \cite{Babaei-Aghbolagh:2022uij,Ferko:2022cix}, 
it is natural to check whether this remains true in the auxiliary field formulation. Indeed, the following flow equation is satisfied
\begin{align}
    \frac{\partial \mathcal{L}_{MM}}{\partial \gamma} = 
     \frac{1}{2} \sqrt{ T^{\mu \nu} T_{\mu \nu} - \frac{1}{4} \left( \tensor{T}{^\mu_\mu} \right)^2 }
    ~.
     \label{mmttbarflow}
\end{align}
Once again, initially, this was not found by using the auxiliary field machinery, but this can also be derived by solving the following flow equation
\begin{align}
    \frac{\partial \mathcal{E}(a;\gamma)}{\partial \gamma} = \sqrt{a}\left(1-a(\mathcal{E}_{a})^2\right)
    ~\label{mmgeneralttbarflow},
\end{align}
with the ansatz
\begin{align}
    \mathcal{E}(a;\gamma) = f(\gamma)\sqrt{a}
    ~,\quad f(0) = 0
    ~.
\end{align}
The factorisation of $\cE(a;\gamma)$ in the ansatz above is due to the conformal invariance of the model, which restricts $\cE(a;\gamma)$ to be homogeneous of degree $\frac{1}{2}$ in the variable $a$, meaning: $a\mathcal{E}_a=\frac{1}{2}\mathcal{E}$. This can be seen by setting the trace of the energy-momentum tensor to zero in equation \eqref{trace}.
Note that equation (\ref{mmgeneralttbarflow}) is simply equation (\ref{mmttbarflow}) without knowing the exact form of $\mathcal{E}(a;\gamma)$. Solving the above equation for $f(\gamma)$, one finds
\begin{align}
    f(\gamma) = 2\tanh\left(\frac{\gamma}{2}\right)
    ~,
\end{align}
as expected from equation (\ref{modmaxlagrangian}). 

We note in passing that the auxiliary field representation of the ModMax theory presented here, as well as its definition via a stress tensor flow equation, are well-defined for either sign of the deformation parameter $\gamma$. However, it was already pointed out in \cite{Bandos:2020jsw} that the ModMax theory allows for superluminal propagation when $\gamma < 0$ and only has physically sensible, causal plane wave solutions when $\gamma > 0$. This restriction on the sign of $\gamma$ is an additional physical input which is not visible at the level of the analysis that we are pursuing here. The asymmetrical behavior of the theory between the two sign choices for $\gamma$ is reminiscent of the $\TT$ deformation of a $2d$ CFT, which has a sensible spectrum for a range of positive deformation parameters $\lambda$, but for $\lambda < 0$ has infinitely many complex energy levels.\footnote{In some situations, these complex energies can be removed by performing sequential $\TT$ flows with a combination of negative and sufficiently large positive deformation parameters \cite{Ferko:2022dpg}.}

\subsection{$\gamma$-Born-Infeld}\label{subsec:gamma_BI}
We now turn our attention to the amalgamation of the previous two examples (ModMax and Born-Infeld). It is known in the literature that $\gamma$BI simultaneously obeys a $\TT$ and a $\sqrt{\TT}$ flow equation \cite{Babaei-Aghbolagh:2022uij,Ferko:2022iru,Ferko:2023ruw}. Furthermore, these two flows are commuting, which means that $\gamma$BI is connected with ModMax and Born-Infeld as per figure \ref{fig:electroflowdiagram}.
\begin{figure}[h]
    \centering
    \includegraphics[scale = 0.65]{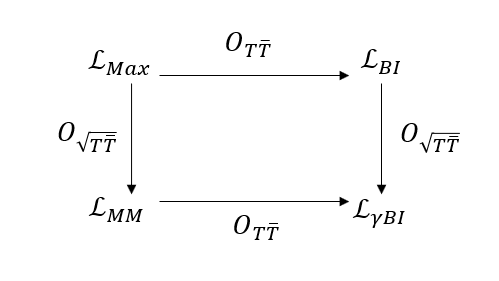}
    \caption{Flow diagram relating theories of electrodynamics}
    \label{fig:electroflowdiagram}
\end{figure}
Although the limiting theories of $\gamma$BI have been well understood in the auxiliary formalism for some time, $\gamma$BI itself had not yet been successfully elevated to an auxiliary field theory. Here, we will remedy this fact and fully explore the different facets of $\gamma$BI in the auxiliary formulations. The difficulty with this model is that it does not appear to be possible to get an explicit expression for $\mathcal{E}(a;\lambda,\gamma)$ or $\mathcal{H}(b;\lambda,\gamma)$ by trying to use the equations coming from the auxiliary field approaches. This means that one must resort to other methods of attacking the problem. 
One such method could be to take inspiration from the previous section and attempt to derive the interaction function via the $\TT$ and $\sqrt{\TT}$ flow equations. As we will now describe, this turns out to be a successful approach. Firstly, we look at the $\sqrt{\TT}$ flow equation
\begin{align}
    \frac{\pa \cH}{\pa\gamma} = \sqrt{b}\cH_{b}(1-b)
    ~.\label{rootflow}
\end{align}
Clearly, the only solution satisfying the initial condition $\mathcal{H}(b;0)=0$ is
\begin{align}
    \mathcal{H}(b;\gamma) = 0
    ~,\quad \forall \gamma,b~.
\end{align}
This is a signature of the fact that the $\mu$ representation is singular for conformal theories; if we begin with $\mathcal{H}(0;\gamma)=0$ then this is a fixed point of the $\sqrt{\TT}$ flow and one would never reach the ModMax theory. If instead, we assumed that $\mathcal{H}(0;\gamma)\neq 0$, then using the method of characteristics, the general solution to equation (\ref{rootflow}) is given by
\begin{align}
    \mathcal{H}(b;\gamma) = g(B)
    ~,\quad
B=\frac{(1+b)\cosh(\gamma)+2\sqrt{b}\sinh(\gamma)}{1-b} ~.
    \label{funcfamily}
\end{align}
Whilst this does restrict the functional form of $\mathcal{H}(b;\gamma)$, any function $g$ of the composite variable $B$ given above is a valid solution. This is all the mileage one can get from the $\sqrt{\TT}$ flow. We now turn our attention to the $\TT$ flow, under which $\mathcal{H}$ obeys the differential equation
\begin{align}
    \frac{\mathcal{H}(b;\lambda)}{\partial\lambda} = -\frac{1}{2}\mathcal{H}^2+\frac{1}{2}b\mathcal{H}^2_{b}(1-b)^2
    ~.
\end{align}
As we have seen previously, given that $\lambda$ is the only scale in the problem, one can factorise the interaction function in the following way
\begin{align}
    \mathcal{H}(b;\lambda) = \frac{1}{\lambda}\sH(b).
\end{align}
Substituting this into the previous differential equation, we find that the general solution is
\begin{align}\label{modmax_bi_mu_frame_answer}
    \sH(b) = \frac{b-1+(1+b)\cosh(c)\pm2\sqrt{b}\sinh(c)}{b-1}
    ~,
\end{align}
where $c$ is a constant of integration. Remarkably, by choosing the positive branch we obtain a candidate in the family of functions predicted by equation (\ref{funcfamily}). Specifically, 
identifying $c=\gamma$, this solution corresponds to the choice
\begin{align}\label{modmax_bi_mu_frame_answer_explicit}
    g(B)= \frac{1}{\lambda} \left( 1-B \right) \; \longrightarrow \; \mathcal{H}_{\gamma \text{BI}} (b;\gamma) = \frac{1}{\lambda} \frac{b-1+(1+b)\cosh( \gamma ) + 2\sqrt{b}\sinh(\gamma)}{b-1} \, .
\end{align}
The interaction function $\mathcal{H}_{\gamma \text{BI}}$ of equation (\ref{modmax_bi_mu_frame_answer_explicit}) is our final result for the novel $\mu$-frame representation of the ModMax-Born-Infeld theory. Note that the $\gamma = 0$ case correctly reproduces the Born-Infeld interaction function. With this, one can check that this solution indeed reproduces $\gamma$BI after integrating out the auxiliary field in the $\mu$ frame.

Of course, one is free to convert this to the $\nu$ representation, which can be done at least perturbatively. The defining relations in the $\nu$ representation are
\begin{gather}
    \mathcal{E} (  a ) = \frac{1}{\lambda} \frac{ ( b - 1 )^2 + ( b ( 4 + b ) - 1 ) \cosh ( \gamma ) + 4 b^{3/2} \sinh ( \gamma ) }{( b - 1 )^2} ~,
    \end{gather}
    where $b=b(a)$ is a solution of the following equation (which is no longer quartic):
    \begin{gather}
    \lambda^2 a = \frac{ \left( 2 \sqrt{b} \cosh ( \gamma ) + ( 1 + b ) \sinh ( \gamma ) \right)^2}{ ( b - 1 )^4 } ~.
\end{gather}
The perturbative solution for $\cE(a;\lambda,\gamma)$ is then given by
 \begin{align}
     \mathcal{E} (a;\lambda,\gamma) = 2 \sqrt{a} \tanh \left( \frac{\gamma}{2} \right) + \frac{\lambda}{2} a \sech^4 \left( \frac{\gamma}{2} \right) - \frac{1}{2} \lambda^2 a^{3/2} \sech^6 \left( \frac{\gamma}{2} \right) \tanh \left( \frac{\gamma}{2} \right) + \cdots 
     ~.
 \end{align}
Note that the initial condition for expansion is no longer $b = 0$ and $a = 0$. Maintaining that we want to obtain expressions for weak field strengths (small $a$), then one actually has to expand around $b = \tanh^2\left(\frac{\gamma}{2}\right)$ as it is at this point that $a = 0$.  For completeness, one can also obtain the above solution by solving the respective flow equations perturbatively with the interaction function $\mathcal{E}(a)$ instead of $\cH(b)$. Following the claim in Section \ref{sec:generalargument}, it is worth pointing out that one can again express the flow equations purely in terms of $\Theta$:
\bsubeq
\begin{gather}
    \frac{\partial \mathcal{H}(\Theta;\lambda,\gamma)}{\partial \lambda} = -\frac{1}{4\lambda}\Theta
    ~,
    \end{gather}
    and
    \begin{gather}
    \frac{\partial \mathcal{H}(b;\lambda,\gamma)}{\partial \gamma} 
    = 
    \frac{
    (4 \cosh (\gamma)-\lambda \Theta+4)^2 
    \left(\mathcal{K}(\Theta,\gamma)
    +2\coth(\gamma)\sqrt{\mathcal{K}(\Theta,\gamma)}
    +1\right)
    }
   {
   8 \lambda \left(-\sqrt{\lambda \Theta  (\lambda \Theta-8)}+\coth (\gamma) (\lambda \Theta-4)-4\right)
   }
    ~,
\end{gather}
   \begin{gather}
    \mathcal{K}(\Theta,\gamma)
 :=
 \frac{-8 \sqrt{\lambda \Theta \sinh ^2(\gamma) (\lambda \Theta-8)}+16 \sinh ^2(\gamma)+\lambda \Theta (\lambda \Theta-8)}{
 (4 \cosh (\gamma)-\lambda \Theta+4)^2
 }
    ~.
\end{gather}
\esubeq
Whilst the expression above is unwieldy, it provides yet another example that these flows can always be written in terms of stress tensor structures, and not necessarily in a unique way. One can obtain the flow equation for the rescaling parameter $b\rightarrow rb$ by merely making this variable replacement when solving for $b(\Theta)$.

\subsection{Some other examples}

The simplest interaction (SI) model first appeared in this context in \cite{Ivanov:2013prd} and is aptly named as both the interaction functions are merely linear in the interaction variable:
\begin{align}
    \mathcal{H}(b;L^4) = \frac{1}{L^4}b,\quad \mathcal{E}(y) = -\frac{1}{L^4}y,\quad y = L^8 a
    ~.
\end{align}
We have already discussed this model when we discussed a function $\cH$ up to quadratic in $b$. Here, we seek a ModMax extension of the case which is purely linear in $b$.
As we have already discussed before, despite having such a simple interaction function, it is not possible to obtain a closed form expression for the non-auxiliary model as this involves solving a fifth order polynomial. Naturally, we have the trace-flow equation for the parameter $L^4$:
\begin{align}
    \frac{\partial \mathcal{H}(b(\Theta);L^4)}{\partial L^4} = -\frac{\Theta}{4L^4}.
\end{align}
Interestingly, as an explicit non-auxiliary Lagrangian cannot be found for the SI model, it is only possible to formulate the $\sqrt{\TT}$ deformed or modified SI model using the auxiliary formulation. Solving the $\sqrt{\TT}$ flow equation with appropriate initial condition,
\begin{align}
    \frac{\partial \mathcal{H}(b;L^4,\gamma)}{\partial \gamma} = \sqrt{b}\mathcal{H}_{b}(1-b),\quad \mathcal{H}(b;L^4,\gamma) = \frac{1}{L^4}\mathscr{H}(b;\gamma),\quad \mathscr{H}(b,0) = b \, , 
\end{align}
yields the interaction function for the modified SI model,
\begin{gather}
    \mathcal{H}(b;L^4,\gamma) = \frac{b-1+\cosh^2\left(\frac{\gamma}{2}\right)+2\sqrt{b}\cosh\left(\frac{\gamma}{2}\right)\sinh\left(\frac{\gamma}{2}\right)+b\sinh^2\left(\frac{\gamma}{2}\right)}{L^4\left(\cosh\left(\frac{\gamma}{2}\right)+\sqrt{b}\sinh\left(\frac{\gamma}{2}\right)\right)^2} \, .
\end{gather}
One can of course consider rescaling the interaction variable as done in the previous section, however, due to the simple nature of the interaction function, this is simply equivalent to scaling the trace.

The natural progression from the previous example is to include higher degree polynomial terms in the interaction function. One can consider the same quadratic interaction function as in Section \ref{simple_instructive},
\begin{align}
    \mathcal{H}(b; L^4) = \frac{1}{L^4}(\gamma_{1} b+\gamma_{2}  b^2).
\end{align}
One can of course transition to the $\nu$ frame, however, the form does not carry over as transparently as in the linear case and hence is not very instructive. Unlike in Section \ref{simple_instructive}, we now consider rescaling the variable $b\rightarrow rb$ by a dimensionless parameter $r$. This gives the following interaction function
\begin{align}
    \mathcal{H}(b;L^4, r) = \frac{1}{L^4}(\gamma_{1} rb+\gamma_{2} r^2 b^2).
\end{align}
From this, one can obtain the following flow equation for the parameter $r$
\begin{align}
    \frac{\partial \mathcal{H}(b;L^4, r)}{\partial r} =\pm\frac{\gamma_{1} \sqrt{r^2 \left(\gamma_{1}^2+\gamma_{2} L^4 \Theta\right)}}{2 \gamma_{2} L^4 r^2}+\frac{\gamma_{1}^2}{2 \gamma_{2} L^4 r}+\frac{\Theta}{2 r}.
\end{align}
We note that in all cases thus far the flows can be written in terms of stress tensor structures as per the conclusion in Section \ref{sec:generalargument}.

Finally, one can consider an interaction function that is homogeneous of degree $n$ in the interaction variable
\begin{align}
    \mathcal{H}(b;L^4) = \frac{1}{L^4}b^n,\quad \mathcal{E}(a,L^4) = \frac{1}{L^4}(1-2n)\left(\frac{L^8a}{4n^2}\right)^{\frac{n}{2n-1}}.
\end{align}
Similarly to the linear case ($n=1$) we will ignore the rescaling flow, as this again amounts to a rescaling of the trace. Instead, we will look for the modified version of this theory by solving the $\sqrt{\TT}$ flow equation for a dimensionless parameter $\gamma$
\begin{align}
    \frac{\partial \mathcal{H}(b;L^4,\gamma)}{\partial \gamma} = \sqrt{b}\mathcal{H}_b(1-b),\quad \mathcal{H}(b;L^4,0) = \frac{1}{L^4}b^n.
\end{align}
The solution to this equation gives the modified homogeneous model
\begin{align}
    \mathcal{H}(b;L^4,\gamma) = \frac{1}{L^4}\tanh\left(\frac{1}{2}\left(-\gamma-2\text{arctanh}\left(\sqrt{b}\right)\right)\right)^{2n}.
\end{align}
The four solutions for $b(\Theta)$ are as follows
\begin{gather}
    b = \bigg\{\tanh ^2\left(\frac{\gamma}{2}\pm\tanh ^{-1}\left(\sqrt{4^{-1/n} \left(L^4 \Theta\right)^{1/n}}\right)\right),\nonumber \\
~~~~~~~~~~~~    \tanh ^2\left(-\frac{\gamma}{2}+ \tanh ^{-1}\left(\sqrt{4^{-1/n} \left(L^4 \Theta\right)^{1/n}}\right)\right)\bigg\},
\end{gather}
where the positive branch of the first solution occurs with multiplicity two. With this, the flow equation for the parameter $\gamma$ is given by
\bsubeq
\begin{align}
    \frac{\partial \mathcal{H}(b;L^4,\gamma)}{\partial \gamma} = \frac{2 n \tanh ^{2n}\left(z\right)\text{csch}\left(2z\right)}{L^4},\\
    z = \gamma+\tanh ^{-1}\left(\sqrt{4^{-1/n} \left(L^4 \Theta\right)^{1/n}}\right).
\end{align}
\esubeq
If not obvious from the preceding discussion of the linear case, obtaining on-shell models for these higher-order models is also not possible. However, it should by now be clear how the claim of Section \ref{sec:generalargument} is realized explicitly in several examples.

\section{Conclusion}\label{sec:conclusion}

In this work, we have investigated the relationship between duality invariance and stress tensor deformations in theories of electrodynamics in four spacetime dimensions. We have found that these two ideas are closely linked, and that one can think of \emph{any} family of duality-invariant theories as obeying some generalized $\TT$-like flow equation. 
A related fact, as we have seen, is that any duality-invariant function $f ( S, P )$, in a given theory $\mathcal{L} ( S, P )$ which enjoys electric magnetic duality-invariance, can be expressed as a function of the energy-momentum tensor for that fixed theory, which we express as $f ( S, P ) = f ( T_{\mu \nu} )$. Furthermore, the two Lorentz scalars that one may construct from the stress tensor are in fact dependent, so that such a function $f(S, P)$ is secretly a function of only one real variable.

Although we have made some arguments using only the differential equation obeyed by a Lagrangian $\mathcal{L} (S , P )$ for a duality-invariant theory, the hidden reduction to a univariate problem is made most transparent in an auxiliary field formulation, which was the focus of our Sections \ref{sec:auxreview} - \ref{sec:examples}. There are at least two other scenarios, in spacetime dimensions other than four, in which an auxiliary field formalism of this type might be useful to make a similar reduction to a one-variable problem manifest. Here, we will briefly describe these two scenarios below. We will then conclude by commenting about the interplay between $\TT$-like flows and their geometric engineering by means of coupling to auxiliary gravitational sectors.
We believe these are all interesting directions for future research.

%%%%%%

%%%%%%

\subsubsection*{\ul{\it Integrable sigma models in $d = 2$}}

Many of the structures which appear in $4d$ theories of electrodynamics also appear in certain two-dimensional models. Some of the overlap between these classes of theories was discussed in \cite{Ferko:2023ruw} following analysis in \cite{Borsato:2022tmu}, which we now very briefly review.

We consider a class of $2d$ theories which resemble the principal chiral model associated with a Lie group $G$ and its Lie algebra $\mathfrak{g}$. The fundamental degree of freedom is a group-valued field $g ( x^+, x^- ) \in G$ where $x^{\pm}$ are light-cone coordinates in the two-dimensional spacetime. It is convenient to define the left-invariant Maurer-Cartan form and its pull-back,
\begin{align}\label{j_def}
    j = g^{-1} d g \, , \qquad j_\mu = g^{-1} \partial_\mu g \, .
\end{align}
The Lagrangian of the usual principal chiral model can be written in light-cone coordinates as
\begin{align}
    \mathcal{L}_{\text{PCM}} = - \frac{1}{2} \tr \left( j_+ j_- \right) \, .
\end{align}
However, we can consider a larger class of models which depend on the two independent Lorentz invariants that can be constructed from the matrix $M_{\mu \nu} = \tr \left( j_\mu j_\nu \right)$. By analogy with the two real scalars that can be constructed from a field strength $F_{\mu \nu}$ in four dimensions, we define these invariants by the relations
\begin{align}
    S = - \frac{1}{2} \tr \left( j_+ j_- \right) \, , \qquad P^2 = \frac{1}{4} \left( \tr \left( j_+ j_+ \right) \tr \left( j_- j_- \right) - \left( \tr \left( j_+ j_- \right) \right)^2 \right) \, .
\end{align}
One can then consider a generic Lagrangian $\mathcal{L} (S , P)$ which depends on these two invariants, much as we have done for theories of electrodynamics in $4d$.

The ordinary principal chiral model, $\mathcal{L} = S$, is classically integrable; its equations of motion are equivalent to the flatness of a Lax connection for any value of a spectral parameter $z$. One might ask whether other models $\mathcal{L} ( S, P )$ share this property. If the Lagrangian satisfies
\begin{align}\label{integrable_cond}
    \mathcal{L}_S^2 - \frac{2 S}{P} \mathcal{L}_S \mathcal{L}_P - \mathcal{L}_P^2 = 1 \, ,
\end{align}
then the model is also classically integrable, and its equations of motion are equivalent to flatness of a Lax connection which can be written in light-cone coordinates $x^{\pm}$ as
\begin{align}\label{lax}
    \mathfrak{L}_{\pm} = \frac{j_{\pm} \pm z \mathfrak{J}_{\pm}}{1 - z^2} \, , 
\end{align}
for any $z \in \mathbb{C}$, where $\mathfrak{J}_\mu$ is the Noether current for invariance of the theory under right-multiplication of (\ref{j_def}) by an element $g \in G$.

We therefore see that the condition (\ref{integrable_cond}) for the model to be classically integrable, with Lax connection given by (\ref{lax}), is \emph{identical} to the differential equation obeyed by a Lagrangian for a $4d$ theory of duality-invariant electrodynamics.

It would be very interesting to construct auxiliary field formulations, much like the Ivanov-Zupnik $\mu$ and $\nu$ representations, for this class of $2d$ integrable sigma models. Because the structures are so similar, one might expect that many of the results in the present work would have analogues in the $2d$ setting. For instance, one can check that any deformation of a PCM-like model $\mathcal{L} ( S, P )$ obeying (\ref{integrable_cond}) by a function of the stress tensor preserves classical integrability; the case of root-$\TT$ deformations was investigated in \cite{Borsato:2022tmu}.

It is also known \cite{Costello:2019tri} that $2d$ integrable sigma models can be constructed from the $4d$ Chern-Simons theory which was initially studied in \cite{nekrasov_thesis,Costello:2013zra}; see \cite{Lacroix:2021iit} and references therein for a review. The relationship between the $\TT$ deformation and $4d$ Chern-Simons has been investigated in \cite{Py:2022hoa}. It would be interesting to see whether there is a $4d$ Chern-Simons construction of the $2d$ PCM-like models satisfying (\ref{integrable_cond}) and whether an auxiliary field representation exists in this setting. Perhaps one could use this to establish other properties of these sigma models, such as their one-loop structure and behavior under renormalization group flows \cite{Levine:2021mxi,Levine:2022hpv}.

\subsubsection*{\ul{\it Tensor theories in $d = 6$}}

Another setting in which many of the structures of $4d$ non-linear electrodynamics have natural analogues is among the class of six-dimensional theories of a two-form potential $A_2$ with a three-form field strength $F_3 = d A_2$. For instance, the two-parameter family of $4d$ ModMax-Born-Infeld theories -- which are relevant for the present context, in part, because they are duality invariant and satisfy commuting $\TT$-like and root-$\TT$-like flow equations -- lifts to a related family of $6d$ tensor theories \cite{Bandos:2020hgy}.

It is of particular interest to focus on \emph{chiral} theories of $2$-form electrodynamics, such as the one describing the M5-brane theory. Although one can formulate such theories using a Lorentz-invariant Lagrangian \cite{Pasti:1995tn,Pasti:1996vs,Pasti:1997gx}, it is convenient to use the Hamiltonian formalism. In the notation of \cite{Bandos:2023yat}, let us define the magnetic two-form $B^{ij} = \frac{1}{2} \epsilon^{ijklm} \partial_k A_{lm}$, where Latin indices like $i, j$ run over spatial directions $1, \cdots, 5$. Then a generic Hamiltonian density $\mathcal{H}$ for such a theory can be written as $\mathcal{H} ( s, p^2 )$,\footnote{Do not confuse the Hamiltonian $\cH$ here with the function $\cH(b)$ in the $4d$ auxiliary field $\mu$-frame.} where
\begin{align}
    s = \frac{1}{4} B^{ij} B_{ij} \, , \qquad p^2 = p^i p_i \, , \qquad p_i = \frac{1}{8} \epsilon_{ijklm} B^{jk} B^{lm} \, .
\end{align}
Lorentz invariance is not manifest in this formulation, but will be respected if the Hamiltonian density obeys
\begin{align}\label{tensor_eqn}
    \mathcal{H}_s^2 + 4 s \mathcal{H}_s \mathcal{H}_{p^2} + 4 p^2 \mathcal{H}_{p^2} = 1 \, .
\end{align}
Equation (\ref{tensor_eqn}) has the structure of the equation for duality invariance in $4d$ electrodynamics. Just as one can introduce an auxiliary field to make $4d$ duality invariance manifest, it is well-known that one can introduce an auxiliary field to make $6d$ Lorentz invariance manifest using the PST formalism. In this case, much like the $4d$ setting, theories which obey (\ref{tensor_eqn}) can also be described by interaction functions of a single variable rather than two variables $(s, p^2)$, and the energy-momentum tensor for such theories has been studied \cite{Bandos:1997gm}. By analogy with the $4d$ setting, one might expect that families of Lorentz-invariant $6d$ theories of chiral $2$-form electrodynamics may also be related by generalized stress tensor flows.\footnote{One of the main results of \cite{Bandos:2023yat} is a second condition on $\mathcal{H} ( s, p^2 )$ to guarantee that the theory exhibit zero trirefringence. It is natural to expect that, unlike the condition (\ref{tensor_eqn}) which should be preserved by all stress tensor deformations, the condition that a stress tensor flow preserve the zero-trirefringence condition should uniquely fix a single $\TT$-like flow, as in the $4d$ zero-birefringence context \cite{Ferko:2023ruw}.}

Another way to see that theories of a chiral $2$-form in six dimensions should be described by a Lagrangian that depends on one real variable, much like duality-invariant theories of $4d$ electrodynamics, is via the approach of \cite{Avetisyan:2022zza}. There the authors show that there exists only one functionally independent scalar that can be constructed from a self-dual three-form in six dimensions, which in their notation is called $I_4^{(6d)}$. A general interacting theory of a chiral $2$-form is therefore specified by an interaction term in the Lagrangian which depends on $I_4^{(6d)}$, much like the interaction functions $\mathcal{E} ( a )$ or $\mathcal{H} ( b )$ in the Ivanov-Zupnik formalism. In this language, one can describe the ModMax-like chiral tensor theory using an interaction function with the same schematic form as (\ref{modmaxlagrangian}) in the $4d$ electrodynamics setting, namely an interaction proportional to $\tanh \left( \frac{\gamma}{2} \right) \sqrt{ I_4^{(6d)} }$. It seems likely that one can also develop a $\mu$-frame version of this formalism and write an auxiliary field representation of the two-parameter family of ModMax-Born-Infeld like chiral tensors using an interaction function of the form (\ref{modmax_bi_mu_frame_answer}). We hope to revisit this direction in future work.

\subsubsection*{\ul{\it Geometric realisation of $\TT$-like deformations in $d>2$}}

Above, we have commented about two avenues where an auxiliary field sector is implemented to make manifest dynamical properties of interesting models.
In two space-time dimensions, the $\TT$ deformation has been proven in various works to possess different types of geometric interpretations.
Early in 2018 Cardy noticed that the $T\bar T$ deformation can be interpreted as coupling the original two-dimensional quantum field theory to a random geometry \cite{Cardy:2018sdv}. 
A related connection between $T\bar T$ deformations and $2d$ gravity was pushed forward in \cite{Dubovsky:2017cnj,Dubovsky:2018bmo} where it was proposed that $T\bar T$ deforming a $2d$ QFT is equivalent to coupling the theory to a Jackiw-Teitelbolm (JT) like gravity. In \cite{Tolley:2019nmm}, the deformation was interpreted as arising from a coupling to $2d$ massive gravity. See also  \cite{Caputa:2020lpa} and references within for further developments of these ideas. Geometric approaches were then used to implement algorithms to integrate several flow equations, including the Lagrangian flows \cite{Conti:2018tca,Conti:2018jho,Coleman:2019dvf,Conti:2022egv}. Moreover, in a series of papers, it was proven that $T\bar T$ flows can be derived by gauge fixing and TsT transformations of string actions \cite{Baggio:2018gct,Baggio:2018rpv,Frolov:2019nrr,Sfondrini:2019smd}. 
For the so-called ``good-sign'' (positive sign) of a $2d$ $\TT$ deformation, one can investigate the density of states at arbitrarily high energies obtaining an asymptotic Hagedorn behavior \cite{Dubovsky:2017cnj}. This indicates that the $\TT$ deformed theory is not a local QFT and is expected to describe the so-called little string theories that are dual to gravity theories on linear dilaton backgrounds \cite{Giveon:2017nie}. Interestingly, all these works indicate how the use of an auxiliary gravitational sector leads to $\TT$-like deformed quantum field theories.

To the best of our knowledge, geometric engineering of $\TT$-like flows in dimensions other than two has not been systematically pursued yet. An inspiring first analysis has been done in \cite{Conti:2022egv}, where the ModMax-Born-Infeld theory of electrodynamics was constructed in four dimensions as a geometric $\TT$-like flow. Another interesting recent reference \cite{Floss:2023nod} showed how the same models of non-linear electrodynamics that we have discussed in our paper result from integrating out massive gravitons. The known relationship between $2d$ $\TT$ and massive gravity \cite{Tolley:2019nmm}, and these recent papers, might indicate a link between general $\TT$-like flows and coupling to (massive) gravity in four dimensions which waits to be unravelled -- at least for general theories of non-linear electrodynamics. It is then an interesting avenue to explore more geometric formulations of general $\TT$-like flows in $d>2$.

\subsection*{Acknowledgements}
\noindent
C.F. is supported by U.S. Department of Energy grant DE-SC0009999 and by funds from the University of California. 
The work of S.M.K. is supported in part by the Australian Research Council, project No.~DP230101629.
L.S. is supported by a postgraduate scholarship at the University of Queensland.
The work of G.T.-M. is supported by the Australian Research Council (ARC) Future Fellowship FT180100353, and by the Capacity Building Package of the University of Queensland.
C.F. and G.T.-M. thank the participants of the Filicudi workshop on Integrability in lower-supersymmetry systems for stimulating discussions.

\appendix

\section{Details of computations for duality-invariant theories}
\label{Appendix-A}

In order to streamline the discussion in the main body of the paper, here we collect the details of a few calculations whose results were quoted without proof in Section \ref{sec:duality_and_flows}. These results all concern theories of duality-invariant electrodynamics whose Lagrangians $\mathcal{L} ( S, P )$ are written in the conventional form, in terms of the two Lorentz invariant $S$ and $P$ of equation (\ref{SandP}), rather than in one of the representations using auxiliary fields that are discussed in later sections of the paper. All of the observations in this Appendix follow from elementary considerations of the partial differential equation (\ref{EM_duality_pde}) obeyed by the Lagrangian of such self-dual theories.

\subsection{Proof that $\TT$-like flows preserve $U(1)$-duality invariance}\label{app:inductive_proof}

The goal of this Appendix is to review and expand upon the proof that any deformation of a duality-invariant Lagrangian by a function of the energy-momentum tensor preserves duality invariance. The idea of this proof was sketched in \cite{Ferko:2023ruw} which explained the result at leading order in the deformation parameter: if a duality-invariant seed theory $\mathcal{L}_0$ is deformed as
\begin{align}
    \mathcal{L}_0 \longrightarrow \mathcal{L}^1 = \mathcal{L}_0 + \l f \left( T_{\mu \nu}^{(0)} \right) \, , 
\end{align}
where $f \left( T_{\mu \nu}^{(0)} \right)$ is a function of the stress tensor for the seed theory, then the deformed theory $\mathcal{L}^1$ remains duality-invariant to order $\l$.

In fact, a similar conclusion holds to all orders in $\l$. Roughly speaking, this is because the leading-order argument can be iterated, since now the deformed theory $\mathcal{L}^1$ can be viewed as a new seed theory, and similar reasoning shows that a further deformation by a function of the first-order deformed stress tensor $T_{\mu \nu}^{(1)}$ will preserve duality-invariance at $\mathcal{O} ( \l^2 )$. This intuition that the duality invariance of the deformation can be ``bootstrapped up'' order-by-order will be made quantitative in equation (\ref{inductive_claim}) shortly. Continuing in this way, one concludes that the solution to the differential equation

\begin{align}
    \frac{\partial \mathcal{L}^{(\l)}}{\partial \l} = f \left( T_{\mu \nu}^{(\l)} \right) \, , 
\end{align}
yields a one-parameter family of functions $\mathcal{L}^{(\l)}$ which satisfy the duality-invariance condition to all orders in $\l$.

However, a more careful proof of this statement requires an inductive argument that the deformed Lagrangian $\mathcal{L}^{(\l)}$ satisfies the duality-invariance condition to all orders in $\l$. We now state this claim more precisely and spell out the reasoning in some detail. 
Note the following theorem is a particular case of Theorem \ref{theorem_preserve_invariance} and an alternative, and simpler, proof was given there. The reader should intend this subsection to be self-contained and extend on the original analysis of \cite{Ferko:2023ruw}.

\begin{theorem}\label{appendix_theorem}

Let $\mathcal{L}_0 ( S, P )$ be a Lagrangian which satisfies the duality-invariance condition given in equation (\ref{EM_duality_pde}). Suppose that there is a one-parameter family of Lagrangians $\mathcal{L}^{(\l)} ( S, P )$ which obey the flow equation
\begin{align}\label{appendix_proof_flow}
    \frac{\partial \mathcal{L}^{(\l)}}{\partial \l} = f \left( T_{\mu \nu}^{(\l)} \right) \, , 
\end{align}
where $f \left( T_{\mu \nu}^{(\l)} \right)$ is a Lorentz scalar constructed from the stress tensor of $\mathcal{L}^{(\l)}$, and with the initial condition $\mathcal{L}^{(\l)} = \mathcal{L}_0$ when $\l = 0$. Then the entire family of Lagrangians $\mathcal{L}^{(\l)}$ satisfy the same duality-invariance condition at any value of $\l$.

\end{theorem}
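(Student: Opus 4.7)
The plan is to prove Theorem \ref{appendix_theorem} by induction on the order of the flow parameter $\l$, making rigorous the bootstrapping strategy sketched in \cite{Ferko:2023ruw}. I would expand the solution of \eqref{appendix_proof_flow} as a formal power series
\begin{equation*}
\mathcal{L}^{(\l)}(S,P)=\sum_{n=0}^{\infty}\frac{\l^n}{n!}\mathcal{L}_n(S,P)\, ,\qquad \mathcal{L}_n:=\pa_\l^n\mathcal{L}^{(\l)}\big|_{\l=0}\, ,
\end{equation*}
and recast the duality-invariance condition \eqref{EM_duality_pde} as $D[\mathcal{L}^{(\l)}]=0$ with
\begin{equation*}
D[\mathcal{L}]:=\mathcal{L}_S^2-\frac{2S}{P}\mathcal{L}_S\mathcal{L}_P-\mathcal{L}_P^2-1\, .
\end{equation*}
The base case $D[\mathcal{L}_0]=0$ is the hypothesis on the seed theory. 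The inductive claim is that if $\mathcal{L}^{(\l)}$ satisfies the duality PDE through order $\l^n$, then it satisfies it through order $\l^{n+1}$ as well.

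For the inductive step I would differentiate $D[\mathcal{L}^{(\l)}]$ with respect to $\l$ and insert the flow equation in the form $\pa_\l\mathcal{L}^{(\l)}=\mathcal{O}^{(\l)}:=f(T^{(\l)}_{\mu\nu})$, obtaining
\begin{equation*}
\pa_\l D[\mathcal{L}^{(\l)}]=2\left[\mathcal{L}^{(\l)}_S\mathcal{O}^{(\l)}_S-\frac{S}{P}\left(\mathcal{L}^{(\l)}_P\mathcal{O}^{(\l)}_S+\mathcal{L}^{(\l)}_S\mathcal{O}^{(\l)}_P\right)-\mathcal{L}^{(\l)}_P\mathcal{O}^{(\l)}_P\right]\, .
\end{equation*}
The bracketed quantity is precisely the condition \eqref{invariant-function} that $\mathcal{O}^{(\l)}$ be invariant under the duality rotations associated with $\mathcal{L}^{(\l)}$. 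Supposing inductively that $\mathcal{L}^{(\l)}$ satisfies the duality PDE to order $\l^n$, the classical result of \cite{Gaillard:1981rj,Gibbons:1995cv,Gaillard:1997rt,Gaillard:1997zr} implies that $T^{(\l)}_{\mu\nu}$ is duality-invariant to the same order, and therefore so is any Lorentz scalar $f(T^{(\l)}_{\mu\nu})$ built from it. Consequently $\pa_\l D[\mathcal{L}^{(\l)}]$ vanishes through order $\l^n$; integrating from $\l=0$, where $D$ vanishes by hypothesis, promotes the constraint to one further order in $\l$ and closes the induction.

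The main subtlety --- and the step I expect to require the most care --- is the apparent circularity coming from the fact that the duality rotation \eqref{duality-rotation-0} itself depends on $\mathcal{L}^{(\l)}$ through $G^{(\l)}_{\mu\nu}=\widetilde{F}_{\mu\nu}\mathcal{L}^{(\l)}_S-F_{\mu\nu}\mathcal{L}^{(\l)}_P$. The statement ``$T^{(\l)}_{\mu\nu}$ is duality-invariant'' is only meaningful once the duality rotation is specified, and that rotation is itself only known perturbatively in $\l$. The bookkeeping to be verified is the chain
\begin{equation*}
\mathcal{L}^{(\l)}\text{ invariant at order }n\;\Longrightarrow\;T^{(\l)}_{\mu\nu}\text{ invariant at order }n\;\Longrightarrow\;\mathcal{O}^{(\l)}\text{ invariant at order }n\, ,
\end{equation*}
for a consistent notion of ``invariance at order $n$'' obtained by truncating both the rotation generator $G^{(\l)}_{\mu\nu}$ and the tensor under consideration to the same order in $\l$. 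Once this is made precise, the argument delivers an alternative proof of Theorem \ref{theorem_preserve_invariance} that exhibits order by order how duality invariance is transported along the $\TT$-like flow, complementing the integral-of-motion argument given in the main text.
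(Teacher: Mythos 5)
Your proposal is correct in outline and rests on the same underlying mechanism as the paper's proof, but it organizes the induction differently. The paper's Appendix \ref{app:inductive_proof} works with the truncations $\mathcal{L}^k=\sum_{i\le k}\l^i\mathcal{L}_i$ and the bookkeeping functionals $F^k$ and $G^k$, derives the recursion \eqref{Fn_recursive}, and then establishes the relation \eqref{inductive_claim} by induction, from which $F^k=\mathcal{O}(\l^{k+1})$ and hence duality invariance in the $k\to\infty$ limit follow. You instead differentiate the constraint $D[\mathcal{L}^{(\l)}]$ along the flow, which reproduces exactly the identity underlying the main-text proof of Theorem \ref{theorem_preserve_invariance}: $\pa_\l D$ equals twice the failure of $\mathcal{O}^{(\l)}$ to satisfy the invariance condition \eqref{invariant-function}. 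Your bootstrap ($D=\mathcal{O}(\l^{n+1})\Rightarrow\pa_\l D=\mathcal{O}(\l^{n+1})\Rightarrow D=\mathcal{O}(\l^{n+2})$ after integrating from $\l=0$, where $D$ vanishes by hypothesis) is cleaner than the paper's $F^k+\sum_i G^i$ accounting and closes for the same reason. The step you rightly flag as delicate --- that $D=\mathcal{O}(\l^{n+1})$ implies $\delta^{(\l)}_\alpha f\big(T^{(\l)}_{\mu\nu}\big)=\mathcal{O}(\l^{n+1})$ --- is precisely what the paper verifies explicitly in its base case via \eqref{f_derivs}, \eqref{first_deriv_constraints} and \eqref{big_eqn_base_case}: the failure of invariance of any $f(\Theta,\widehat{T}^2)$ is a linear combination of $D$, $D_S$ and $D_P$ with coefficients built from $\mathcal{L}^{(\l)}$ and its first two derivatives. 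To make your version complete you should add that $D=\mathcal{O}(\l^{n+1})$ holds as an identity in $(S,P)$, so its vanishing Taylor coefficients are functions of $(S,P)$ whose $S$- and $P$-derivatives also vanish, whence $D_S,D_P=\mathcal{O}(\l^{n+1})$ and the order-$n$ invariance of the stress tensor and of $f(T_{\mu\nu})$ follows; appealing only to the exact-invariance result of \cite{Gaillard:1981rj,Gibbons:1995cv,Gaillard:1997rt,Gaillard:1997zr} does not by itself cover this truncated statement.
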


We note that a stronger version of this theorem is also true: one may replace the deforming operator on the right side of equation (\ref{appendix_proof_flow}) with a more general function $f ( T_{\mu \nu}^{(\lambda)} ; \lambda )$. However, for simplicity we will restrict ourselves to the case where the function depends on $\lambda$ only implicitly through the stress tensor.

\begin{proof}

We assume that the Lagrangian has a convergent Taylor series expansion in $\l$,
\begin{align}
    \mathcal{L}^{(\l)} = \mathcal{L}_0 + \l \mathcal{L}_1 + \l^2 \mathcal{L}_2 + \cdots \, .
\end{align}
We use the symbols $\mathcal{L}_i$ with a lower index for the Taylor coefficients in the Lagrangian, in contrast to the variables $\mathcal{L}^{k}$ with an upper index, which we define as the approximation to $\mathcal{L}^{(\l)}$ which is accurate up to $\mathcal{O} ( \l^k )$,
\begin{align}
    \mathcal{L}^{k} = \sum_{i=0}^{k} \l^i \mathcal{L}_i \, .
\end{align}
Likewise, we let $T_{\mu \nu}^{k}$ be the energy-momentum tensor constructed from $\mathcal{L}^k$. By virtue of the differential equation (\ref{appendix_proof_flow}), the approximate Lagrangians $\mathcal{L}^k$ satisfy
\begin{align}\label{taylor_to_stress}
    \mathcal{L}^{k+1} = \mathcal{L}^{k} + \frac{\l^{k+1}}{k+1} \Big[ f \left( T_{\mu \nu}^{k} \right) \Big]_{\l^k} \, ,
\end{align}
where the notation $\left[ f \left( T_{\mu \nu}^{k} \right) \right]_{\l^k}$ means to extract the Taylor coefficient proportional to $\l^k$ in the series expansion of $f \left( T_{\mu \nu}^{k} \right) $. Explicitly,
\begin{align}
    \Big[ g ( \l ) \Big]_{\l^k} = \frac{1}{k!} \frac{d^k g}{d \l^k} \Big\vert_{\l = 0} \, , 
\end{align}
for any function $g(\l)$.

It is convenient to parameterize a general Lorentz scalar function $f^k = f \left( T_{\mu \nu}^{k} \right)$ in terms of the two variables
\begin{align}
    \Theta^k = \tensor{\left( T^k \right)}{^\mu_\mu} \, , \qquad \widehat{T}^{k,2} = \left( \widehat{T}^k \right)^{\mu \nu} \left( \widehat{T}^k \right)_{\mu \nu} \, , 
\end{align}
where $\widehat{T}^k_{\mu \nu}$ is the traceless part of $T^k_{\mu \nu}$. In this parameterization, we write
\begin{align}\label{appendix_invariants_defn}
    f ( T_{\mu \nu}^{k} ) &= f \left( \Theta^k , \widehat{T}^{k,2} \right) \, , \nonumber \\
    \Theta^k &= 4 \left( \mathcal{L}^{k}  - P \mathcal{L}^{k}_P - S \mathcal{L}^{k}_S \right) \, , \nonumber \\
    \widehat{T}^{k,2} &= 4\left( S^2 + P^2 \right) \left( \mathcal{L}^{k}_S\right)^2 \, .
\end{align}
We will also collect some formulas involving derivatives of $f^k = f ( T_{\mu \nu}^k )$:
\begin{align}\label{f_derivs}
    \frac{\partial f^k}{\partial S} &= \frac{\partial f^k}{\partial \Theta^k} \frac{\partial \Theta}{\partial S} + \frac{\partial f^k}{\partial \widehat{T}^{k,2}} \frac{\partial \widehat{T}^{k,2}}{\partial S} \nonumber \\
    &= - 4 \frac{\partial f^k}{\partial \Theta^k} \left( P \mathcal{L}_{SP}^k + S \mathcal{L}_{SS}^k \right) + 4 \frac{\partial f^k}{\partial \widehat{T}^{k,2}} \left( 2 S \left( \mathcal{L}_S^k \right)^2 + 2 \left( S^2 + P^2 \right) \mathcal{L}_S^k \mathcal{L}_{SS}^k \right) \, , \nonumber \\
    \frac{\partial f^k}{\partial P} &= \frac{\partial f^k}{\partial \Theta^k} \frac{\partial \Theta}{\partial P} + \frac{\partial f}{\partial \widehat{T}^{k,2}} \frac{\partial \widehat{T}^2}{\partial P} \nonumber \\
    &= - 4 \frac{\partial f^k}{\partial \Theta^k} \left( P \mathcal{L}_{PP}^k - S \mathcal{L}_{SP}^k \right) + 4 \frac{\partial f^k}{\partial \widehat{T}^{k,2}} \left( 2 P \left( \mathcal{L}_S^k \right)^2 + 2 \left( S^2 + P^2 \right) \mathcal{L}_S^k \mathcal{L}_{SP}^k \right) \, .
\end{align}
For any function $h ( S, P )$ and any non-negative integer $k$, we also define the functions\footnote{The functions $F$ and $G$ are not to be confused with the field strength $F_{\mu \nu}$ and the quantity $G_{\mu \nu}$ introduced in equation (\ref{duality-rotation-0}), both of which carry Lorentz indices.}
\begin{align}
    F ( h ) &= \left( h_S \right)^2 - \frac{2 S}{P} h_S h_P - \left( h_P \right)^2 - 1 \, , \nonumber \\
    F^k &= F \left( \mathcal{L}^k \right) \, , \nonumber \\
    G^k ( h ) &= 2 \mathcal{L}_S^k h_S - \frac{2 S}{P} \left( \mathcal{L}_S^k h_P + \mathcal{L}_P^k h_S \right) - 2 \mathcal{L}_P^k h_P \, .
\end{align}
The function $F^k$ measures the failure of the Lagrangian $\mathcal{L}^k$ to satisfy the duality invariance condition, whereas the function $G^k ( h )$ measures the failure of the function $h ( S, P )$ to be invariant under the duality transformation associated with the Lagrangian $\mathcal{L}^k$. It is easy to see that every $F^n$ obeys a recursion relation,
\begin{align}\label{Fn_recursive}
    F^n &= \left( \mathcal{L}^{n-1}_S + \l^n \mathcal{L}_{n, S} \right)^2 - \frac{2 S}{P} \left( \mathcal{L}^{n-1}_S + \l^n \mathcal{L}_{n, S}  \right) \left( \mathcal{L}^{n-1}_P + \l^n \mathcal{L}_{n, P}  \right) - \left( \mathcal{L}^{n-1}_P + \l^n \mathcal{L}_{n, P} \right)^2 - 1 \, \nonumber \\
    &= \Big[ \left( \mathcal{L}_S^{n-1} \right)^2 - \frac{2 S}{P} \mathcal{L}_S^{n-1} \mathcal{L}_P^{n-1} - \left( \mathcal{L}_P^{n-1} \right)^2 \Big] + \l^{2n} \Big[ \mathcal{L}_{n, S}^2 - \frac{2 S}{P} \mathcal{L}_{n, S} \mathcal{L}_{n, p} - \mathcal{L}_{n, P}^2 \Big] \nonumber \\
    &\quad + \l^n \Big[ 2 \mathcal{L}_S^{n-1} \mathcal{L}_{n, S} - \frac{2 S}{P}  \left( \mathcal{L}_S^{n-1} \mathcal{L}_{n, P} + \mathcal{L}_P^{n-1} \mathcal{L}_{n, S} \right) - 2 \mathcal{L}_{P}^{n-1} \mathcal{L}_{n, P} \Big] \nonumber \\
    &= F^{n-1} + F \left( \l^n \mathcal{L}_n \right) + G^{n-1} \left( \l^n \mathcal{L}_n \right) \, .
\end{align}
The key technical step in our proof is to establish the relation
\begin{align}\label{inductive_claim}
    F^k + \sum_{i = 0}^{ k - 1} G^i \left( \l^{2k - i } \mathcal{L}_{2 k - i} \right) = 0 \, .
\end{align}
We will prove this by induction on $k$. When $k = 0$, there are no terms in the sum, so the claim reduces to the statement that $F^0 = 0$, which is automatically true by the assumption that the undeformed theory is duality invariant.

\subsubsection*{\ul{\it Base Case: $k = 1$}}

Let us consider the first non-trivial case, which is $k = 1$. In this case the claim is
\begin{align}\label{base_case_improved}
    F^1 + G^0 \left( \l^2 \mathcal{L}_2 \right) = 0 \, .
\end{align}
Using the recursive relation (\ref{Fn_recursive}), we have $F^1 = F^0 + F \left( \l \mathcal{L}_1 \right) + G^0 \left( \l \mathcal{L}_1 \right)$, and again we have $F^0 = 0$ by assumption. So we would like to show that
\begin{align}
    F \left( \l \mathcal{L}_1 \right) + G^0 \left( \l \mathcal{L}_1 \right) + G^0 \left( \l^2 \mathcal{L}_2 \right) = 0 \, .
\end{align}
Note that $F \left( \l \mathcal{L}_1 \right)$ and $G^0 \left( \l^2 \mathcal{L}_2 \right)$ are both of order $\l^2$ while $G^0 \left( \l \mathcal{L}_1 \right)$ is linear in $\l$, so we will first show that $G^0 \left( \l \mathcal{L}_1 \right) = 0$. Using the expression (\ref{taylor_to_stress}) for $\mathcal{L}_1$ in terms of $f \left( T_{\mu \nu}^0 \right) = f^0$, this means that we must show
\begin{align}\label{base_case_duality}
    \mathcal{L}_S^0 f_S^0 - \frac{S}{P} \left( f_S^0 \mathcal{L}_P^0 + \mathcal{L}_S^0 f_P^0 \right) - \mathcal{L}_P^0 f_P^0 = 0 \, ,
\end{align}
where $f^0 = f \left( T_{\mu \nu}^0 \right)$. Equation (\ref{base_case_duality}) expresses the condition that the function $f \left( T_{\mu \nu}^0 \right)$ is duality invariant with respect to the duality transformation in the undeformed theory $\mathcal{L}_0$. To prove this, we use that $F^0 ( S, P ) = 0$ identically, which means that both the function and its derivatives with respect to $S$ and $P$ are equal to zero. The conditions $\partial_S F^0 = 0$ and $\partial_P F^0 = 0$ give the constraints
\begin{align}\label{first_deriv_constraints}
    \mathcal{L}_S^0 \mathcal{L}_{SS}^0 - \frac{1}{P} \mathcal{L}_S^0 \mathcal{L}_P^0 - \frac{S}{P} \left( \mathcal{L}_{SS}^0 \mathcal{L}_P^0 + \mathcal{L}_S^0 \mathcal{L}_{SP}^0 \right) - \mathcal{L}_P^0 \mathcal{L}_{SP}^0 &= 0 \, , \nonumber \\
    \mathcal{L}_S^0 \mathcal{L}_{SP}^0 + \frac{S}{P^2} \mathcal{L}_S^0 \mathcal{L}_P^0 - \frac{S}{P} \left( \mathcal{L}_{SP}^0 \mathcal{L}_P^0 + \mathcal{L}_S^0 \mathcal{L}_{PP}^0 \right) - 2 \mathcal{L}_P^0 \mathcal{L}_{PP}^0  &= 0 \, .
\end{align}
Equations (\ref{first_deriv_constraints}) give conditions which allow us to eliminate some of the second derivative terms which arise when substituting the expressions (\ref{f_derivs}) for $f^k$, with $k = 0$, into (\ref{base_case_duality}). Explicitly, we compute
\begin{align}\label{big_eqn_base_case}
    &\mathcal{L}_S^0 f_S^0 - \frac{S}{P} \left( f_S^0 \mathcal{L}_P^0 + \mathcal{L}_S^0 f_P^0 \right) - \mathcal{L}_P^0 f_P^0 \nonumber \\
    &\quad = \mathcal{L}_S^0 \left( - 4 \frac{\partial f^0}{\partial \Theta^0} \left( P \mathcal{L}_{SP}^0 + S \mathcal{L}_{SS}^0 \right) + 4 \frac{\partial f^0}{\partial \widehat{T}^{0,2}} \left( 2 S \left( \mathcal{L}_S^0 \right)^2 + 2 \left( S^2 + P^2 \right) \mathcal{L}_S^0 \mathcal{L}_{SS}^0 \right) \right) \nonumber \\
    &\quad - \frac{S}{P} \Bigg[ \left( \left( - 4 \frac{\partial f^0}{\partial \Theta^0} \left( P \mathcal{L}_{SP}^0 + S \mathcal{L}_{SS}^0 \right) + 4 \frac{\partial f^0}{\partial \widehat{T}^{0,2}} \left( 2 S \left( \mathcal{L}_S^0 \right)^2 + 2 \left( S^2 + P^2 \right) \mathcal{L}_S^0 \mathcal{L}_{SS}^0 \right) \right) \right) \mathcal{L}_P^0 \nonumber \\
    &\quad + \mathcal{L}_S^0  \left( - 4 \frac{\partial f^0}{\partial \Theta^0} \left( P \mathcal{L}_{PP}^0 - S \mathcal{L}_{SP}^0 \right) + 4 \frac{\partial f^0}{\partial \widehat{T}^{0,2}} \left( 2 P \left( \mathcal{L}_S^0 \right)^2 + 2 \left( S^2 + P^2 \right) \mathcal{L}_S^0 \mathcal{L}_{SP}^0 \right) \right) \Bigg] \nonumber \\
    &\quad - \mathcal{L}_P^0 \left( - 4 \frac{\partial f^0}{\partial \Theta^0} \left( P \mathcal{L}_{PP}^0 - S \mathcal{L}_{SP}^0 \right) + 4 \frac{\partial f^0}{\partial \widehat{T}^{0,2}} \left( 2 P \left( \mathcal{L}_S^0 \right)^2 + 2 \left( S^2 + P^2 \right) \mathcal{L}_S^0 \mathcal{L}_{SP}^0 \right) \right) \, .
\end{align}
After substituting the constraints (\ref{first_deriv_constraints}) into equation (\ref{big_eqn_base_case}), simplifying using the condition that $F^0 = 0$ due to the duality invariance of the seed theory $\mathcal{L}^0$, and performing some algebra, one finds that all dependence on the derivatives of $f^0$ drops out, and
\begin{align}
    \mathcal{L}_S^0 f_S^0 - \frac{S}{P} \left( f_S^0 \mathcal{L}_P^0 + \mathcal{L}_S^0 f_P^0 \right) - \mathcal{L}_P^0 f_P^0 = 0 
\end{align}
holds identically, regardless of the value of $\frac{\partial f^0}{\partial \Theta^0}$ and $\frac{\partial f^0}{\partial \widehat{T}^{0, 2}}$. This establishes that the terms of order $\l$ in (\ref{base_case_improved}) vanish.

Let us now consider the terms of order $\l^2$. We must now show that $F \left( \l \mathcal{L}_1 \right) + G^0 \left( \l^2 \mathcal{L}_2 \right) = 0$, or
\begin{align}
    0 &= \mathcal{L}_{1, S}^2 - \frac{2 S}{P} \mathcal{L}_{1, S} \mathcal{L}_{1, P} - \mathcal{L}_{1, P}^2 - 1 + 2 \left( \mathcal{L}^0_S \mathcal{L}_{2, S} - \frac{S}{P} \left( \mathcal{L}^0_S \mathcal{L}_{2, S} + \mathcal{L}^0_P \mathcal{L}_{2, S} \right) - 2 \mathcal{L}^0_P \mathcal{L}_{2, P} \right) \, .
\end{align}
To do this we must use the facts that
\begin{align}
    \mathcal{L}_1 = f \left( T_{\mu \nu}^0 \right) \, , \qquad \mathcal{L}_2 = \frac{1}{2} \Big[ f \left( T_{\mu \nu}^1 \right) \Big]_{\l} \, , 
\end{align}
along with our formulas (\ref{f_derivs}) for derivatives of the function $f$. In particular, it is important that the argument $T_{\mu \nu}^{1}$ of the function $f$ in $\mathcal{L}_2$ is itself determined in terms of the same function $f$:
\begin{align}
     f \left( T_{\mu \nu}^1 \right) &= f \left[ T_{\mu \nu}^{0} + \lambda T_{\mu \nu} \left( \mathcal{L}_1 \right) \right] \nonumber \\
     &= f \left[ T_{\mu \nu}^{0} + \lambda T_{\mu \nu} \left( f \left( T_{\mu \nu}^{(0)} \right) \right) \right] \, .
\end{align}
This is because the Hilbert stress tensor is a linear function of the Lagrangian, so in general for a sum $\mathcal{L} = \mathcal{L}_A + \mathcal{L}_B$, the total stress tensor is $T_{\mu \nu} ( \mathcal{L} ) = T_{\mu \nu} ( \mathcal{L}_A ) + T_{\mu \nu} ( \mathcal{L}_B )$.

Our calculation only requires us to extract the term in $f \left( T_{\mu \nu}^1 \right)$ which is proportional to $\lambda^1$, or
\begin{align}\label{L2_recursive}
    \mathcal{L}_2 = \frac{1}{2} \frac{d}{d \lambda} \left\{ f \left[ T_{\mu \nu}^{0} + \lambda T_{\mu \nu} \left( f \left( T_{\mu \nu}^{(0)} \right) \right) \right] \right\} \Big\vert_{\lambda = 0 } \, .
\end{align}
We note that all of these quantities are ultimately determined in terms of $\mathcal{L}_0$, which satisfies the exact duality-invariance condition.

We may therefore evaluate derivatives of $\mathcal{L}_2$ with respect to $S$ and $P$ using the expression (\ref{L2_recursive}) along with our previous results (\ref{f_derivs}). After doing this and simplifying using the duality invariance of $\mathcal{L}^0$, one finds that
\begin{align}
    F \left( \l \mathcal{L}_1 \right) + G^0 \left( \l^2 \mathcal{L}_2 \right) = 0 \, , 
\end{align}
which completes the proof that our claim (\ref{inductive_claim}) holds in the case $k=1$.

\subsubsection*{\ul{\it Inductive step}}

We now suppose that equation (\ref{inductive_claim}) holds for $k = 1 , \cdots , n-1$ and show that it also holds when $k = n$. Using the recursion relation (\ref{Fn_recursive}) for the $F^k$ and our induction hypothesis, we have
\begin{align}\label{inductive_step_intermediate}
    F^n = - \sum_{i = 0}^{n - 2} G^i \left( \lambda^{2 ( n - 1 )  - i } \mathcal{L}_{2 ( n - 1 ) - i} \right) + F \left( \lambda^n \mathcal{L}_n \right) + G^{n - 1} \left( \lambda^n \mathcal{L}_n \right) \, .
\end{align}
We would like to eliminate the last two terms in (\ref{inductive_step_intermediate}) and express the result entirely in terms of a sum of $G^i$ with various arguments. To do this, we must again rely on the recursive definition of the Taylor coefficients $\mathcal{L}_i$ in the Lagrangian:
\begin{align}\label{inductive_intermediate_expansion}
    \mathcal{L}_j &= \frac{1}{j} \left[ f \left( T_{\mu \nu}^{j-1} \right) \right]_{\lambda^{j-1}} \, , \nonumber \\
    T_{\mu \nu}^j &= T_{\mu \nu} \left( \mathcal{L}_0 \right) + \lambda T_{\mu \nu} \left( \mathcal{L}_1 \right) + \cdots + \lambda^j T_{\mu \nu} \left( \mathcal{L}_j \right) \, .
\end{align}
Extracting the term of order $\lambda^{j-1}$ in an expression (\ref{inductive_intermediate_expansion}),
\begin{align}
    \left[ f \left( T_{\mu \nu}^{j-1} \right) \right]_{\lambda^{j-1}} = \frac{1}{ ( j - 1 )!} \frac{d^{j-1}}{d \lambda^{j-1}} \left[ f \left( T_{\mu \nu}^{j-1} \right) \right]_{\lambda = 0} \, , 
\end{align}
then generates a series of terms involving lower $\mathcal{L}_i$ which are all defined in terms of the same expansions (\ref{inductive_intermediate_expansion}). It turns out that this recursive definition, along with the duality invariance condition for the undeformed Lagrangian $\mathcal{L}_0$, implies the relation
\begin{align}
    \left[ \sum_{i = 0}^{n - 1} G^i \left( \lambda^{2n - i } \mathcal{L}_{2 n - i} \right) \right] + F \left( \lambda^n \mathcal{L}_n \right) + G^{n - 1} \left( \lambda^n \mathcal{L}_n \right) = \sum_{i = 0}^{n - 2} G^i \left( \lambda^{2 ( n - 1 )  - i } \mathcal{L}_{2 ( n - 1 ) - i} \right) \, .
\end{align}
Combining this formula with the result (\ref{inductive_step_intermediate}) of our inductive hypothesis and the recursion relation for $F^n$, we find
\begin{align}
    F^n = - \sum_{i = 0}^{n - 1} G^i \left( \lambda^{2n - i } \mathcal{L}_{2 n - i} \right) \, , 
\end{align}
which establishes that (\ref{inductive_claim}) also holds when $k = n$. This formula therefore holds for all integers $k \geq 0$ by induction.

\subsubsection*{\ul{\it Proof of original claim}}

Now that we have established equation (\ref{inductive_claim}) by induction, let us return to the proof of the main theorem. We would like to show that the full solution $\mathcal{L}^{(\l)}$ to the flow equation is duality invariant, which in the notation developed above is expressed by the statement
\begin{align}
    F \left( \mathcal{L}^{(\l)} \right) = 0  \, .
\end{align}
Using the Taylor series expansion for $\mathcal{L}^{(\l)}$, we may write
\begin{align}\label{Fgamma_to_Fk}
    F \left( \mathcal{L}^{(\l)} \right) = \lim_{k \to \infty} F^k \, .
\end{align}
However, from (\ref{inductive_claim}) we see that
\begin{align}\label{Fk_asymptotic}
    F^k = \mathcal{O} \left( \l^{k+1} \right) \, .
\end{align}
This expresses the fact that, at each order $k$ in the Taylor series expansion $\mathcal{L}^k$ of $\mathcal{L}^{(\l)}$, the theory is duality invariant to order $\l^k$, and the failure of duality invariance begins only at order $\l^{k+1}$. Therefore, taking the $k \to \infty$ limit in (\ref{Fk_asymptotic}), we conclude that
\begin{align}
    F \left( \mathcal{L}^{(\l)} \right) = 0 \, , 
\end{align}
which proves Theorem \ref{appendix_theorem}.
\end{proof}

\subsection{Method of characteristics and $U(1)$-duality invariance}\label{app:method_char}

In this Appendix we will prove that, in a duality-invariant theory described by a Lagrangian $\mathcal{L} ( S, P )$, any function $f ( S, P )$ which is invariant under duality transformations can be expressed as a function of a single variable. This single variable can be chosen to be any non-trivial Lorentz scalar constructed from the stress tensor $T_{\mu \nu}$. Our proof will rely on the method of characteristics, which is a standard technique for solving first-order partial differential equations. See also \cite{Hou:2022csf} for another application of this method to study $\TT$-like flows.

In order for a function $f ( S, P )$ to be invariant under the duality transformations associated with the Lagrangian $\mathcal{L} ( S, P )$, this function must satisfy the differential equation
\begin{align}\label{duality_PDE}
    \left( P \mathcal{L}_S - S \mathcal{L}_P \right) f_S - \left( S \mathcal{L}_S + P \mathcal{L}_P \right) f_P = 0 \, .
\end{align}
We will first seek characteristic curves for this differential equations, which are one-parameter families of points
\begin{align}
    \left( S ( t ) , P ( t ) ,  f ( S ( t ) , P ( t ) ) \right)
\end{align}
described by a parameter $t$ which labels points along the curve. The characteristic curves satisfy the system of ordinary differential equations
\begin{align}\label{characteristic_system}
    \frac{dS}{dt} &= P \mathcal{L}_S - S \mathcal{L}_P \, , \nonumber \\
    \frac{dP}{dt} &= - S \mathcal{L}_S - P \mathcal{L}_P \, , \nonumber \\
    \frac{df}{dt} &= 0 \, ,
\end{align}
which guarantees that the tangent vector to the curve is also a tangent vector to the plane of solutions to the differential equation (\ref{duality_PDE}). Clearly solutions to the system (\ref{characteristic_system}) have the property that
\begin{align}\label{ft_constant}
    f ( t ) = u
\end{align}
for some constant $u$ which is independent of the parameter $t$. It will be helpful to look for other functions $v ( S ( t ) , P ( t ) )$ which are independent of $t$, so that
\begin{align}\label{const_char}
    0 &= \frac{d v}{dt} \nonumber \\
    &= \frac{\partial v}{\partial S} \frac{d S}{d t} + \frac{\partial v}{\partial P} \frac{d S}{d t} \nonumber \\
    &= \frac{\partial v}{\partial S} \left( P \mathcal{L}_S - S \mathcal{L}_P \right)  + \frac{\partial v}{\partial P} \left( - S \mathcal{L}_S - P \mathcal{L}_P \right) \, , 
\end{align}
where in the last step we have substituted (\ref{characteristic_system}) for $S'(t)$ and $P'(t)$. Any such function $v$ will be constant along the characteristics curves for which (\ref{characteristic_system}) holds.

We first claim that any function of the energy-momentum tensor associated with $\mathcal{L} ( S, P )$ provides us with such a function $v ( S,  P )$, assuming that the theory enjoys duality invariance. As we mentioned in Section \ref{sec:generalities}, any function of the stress tensor can be written as a function of the two Lorentz scalars $\Theta$ and $T^2$,
\begin{align}
    f ( T_{\mu \nu} ) &= f ( \Theta, T^2 ) \, , \nonumber \\
    \Theta &= 4 \left( \mathcal{L}  - P\mathcal{L}_P - S\mathcal{L}_S \right) \, , \nonumber \\
    T^2 &= 4\left( S^2 + P^2 \right) \mathcal{L}_S^2 + 4\left( \mathcal{L} - P \mathcal{L}_P - S \mathcal{L}_S \right)^2 \, .
\end{align}
Therefore, it suffices to show that the two functions $\Theta ( S, P )$ and $T^2(S,P)$ satisfy the condition (\ref{const_char}) which means that they are constant along characteristic curves. We first compute the derivatives $\frac{d \Theta}{d t}$ and $\frac{d T^2}{dt}$, assuming that $S(t)$ and $P(t)$ satisfy (\ref{characteristic_system}):
\begin{align}\label{du_dt_and_dv_dt}
    \frac{d \Theta}{d t} &= 4 \Bigg( P^2 \left( \mathcal{L}_P \mathcal{L}_{PP} - \mathcal{L}_S \mathcal{L}_{SP} \right) + P S \left( 2 \mathcal{L}_P \mathcal{L}_{SP} + \mathcal{L}_S \left( \mathcal{L}_{PP} - \mathcal{L}_{SS} \right) \right) + S^2 \left( \mathcal{L}_S \mathcal{L}_{SP} + \mathcal{L}_P \mathcal{L}_{SS} \right) \Bigg) \, , \nonumber \\
    \frac{d T^2}{dt } &= - 8 \left( S^2 + P^2 \right) \mathcal{L}_S^2 \mathcal{L}_P + 8 \left( S^2 + P^2 \right) \mathcal{L}_S \left( \mathcal{L}_{SS} \left( P \mathcal{L}_S - S \mathcal{L}_P \right) - \mathcal{L}_{SP} \left( S \mathcal{L}_S + P \mathcal{L}_{P} \right) \right) \nonumber \\
    &\quad + 8 \left( \mathcal{L} - P \mathcal{L}_P - S \mathcal{L}_S \right) \cdot \Bigg( P^2 \left( \mathcal{L}_P \mathcal{L}_{PP} - \mathcal{L}_S \mathcal{L}_{SP} \right) + P S \left( 2 \mathcal{L}_P \mathcal{L}_{SP} + \mathcal{L}_S \left( \mathcal{L}_{PP} - \mathcal{L}_{SS} \right) \right) \nonumber \\
    &\quad + S^2 \left( \mathcal{L}_S \mathcal{L}_{SP} + \mathcal{L}_P \mathcal{L}_{SS} \right) \Bigg) \, . 
\end{align}
After imposing the duality invariance condition (\ref{EM_duality_pde}), as well as the derivatives of this equation with respect to $S$ and $P$, the combinations appearing in (\ref{du_dt_and_dv_dt}) collapse to
\begin{align}
    \frac{d \Theta}{dt} = 0 \, , \qquad \frac{d T^2}{dt} = 0 \, .
\end{align}
This means that $\Theta$ and $T^2$, and therefore a general Lorentz scalar function of the stress tensor, is constant along the characteristic curves. Such functions are said to be integrals of the characteristic system.

It is a general theorem that, if two integrals $u, v$ of the characteristic system are known for a first-order linear partial differential equation for a function $f ( S, P )$ of two variables $S, P$, then the general solution to this differential equation is described implicitly by
\begin{align}
    g ( u, v ) = 0 \, , 
\end{align}
where $g$ is an arbitrary function of two independent variables. We have already seen in equation (\ref{ft_constant}) that, since $\frac{df}{dt} = 0$, the function $f(t) = u$ is one such integral of the characteristic system. In order to write down the general solution to the differential equation (\ref{duality_PDE}), we therefore only need to identify one other integral of the characteristic system -- and indeed, we are guaranteed that at most one other functionally independent quantity of this type exists. In particular, this implies that for any duality-invariant Lagrangian and any two quantities $v_1 ( T_{\mu \nu} )$, $v_2 ( T_{\mu \nu} )$ which are constructed from the stress tensor, one of the two quantities $v_1, v_2$ can be locally expressed as a function of the other. For instance, if the trace $\Theta$ is a non-trivial function of $S$ and $P$ (i.e. if $\Theta$ is not a constant), then it must be possible to express it as a function of $T^2$. We already expected that this should be true from the arguments around equation (\ref{functional_relation_theta_Tsq}) which demonstrate that there exists some functional relation of the form $h ( \Theta, T^2 ) = 0$ in any duality-invariant model.

Therefore, let us choose $v$ to be \emph{any} function of the energy-momentum tensor which is a non-trivial function of $S$ and $P$. To be concrete, we can choose $v = T^2$ since this combination $T^{\mu \nu} T_{\mu \nu}$ is non-trivial in all of the models which we will consider (unlike the trace $\Theta$, which vanishes in conformal models such as the Maxwell and ModMax theories). The general solution to the (\ref{duality_PDE}) is therefore
\begin{align}
    g \left( u , v \right) = 0 \, , 
\end{align}
for some function $g$ of two variables. By the inverse function theorem, this means that $u = f ( S, P )$ can locally be expressed as a function of $v$, which means that
\begin{align}
    f ( S, P ) = h ( v )
\end{align}
for some function $h$. For the choice $v = T^2$, we conclude that any duality-invariant function can be written as a function of the single variable $T^2 = T^{\mu \nu} T_{\mu \nu}$.

A simple example is the Maxwell Lagrangian $\mathcal{L} = S$, for which one has
\begin{align}
    \Theta = 0 \, , \qquad T^2 = 4 \left( S^2 + P^2 \right) \, .
\end{align}
In this case, our general argument shows that any duality-invariant function can be written as $f ( T^2 )$ or equivalently $f ( S^2 + P^2 )$. Note that the trace $\Theta$ is indeed functionally dependent on the other invariant $T^2$, albeit in a trivial way because it vanishes.

\subsubsection*{\ul{\it Solution to differential equation for the Lagrangian}}

The preceding argument shows that any duality-invariant function $f ( S, P )$ can be written as a function of a single variable; for instance, this variable can be taken to be $T^2$. A similar statement holds for the Lagrangian of a theory of duality-invariant electrodynamics. As we have mentioned, a Lagrangian $\mathcal{L} ( S, P )$ which described a duality-invariant theory must satisfy the partial differential equation (\ref{EM_duality_pde}).
This differential equation is similar, but not identical, to the condition (\ref{duality_PDE}) satisfied by a duality-invariant function. This reflects the fact that the Lagrangian itself need not be invariant under duality rotations in order for the equations of motion to be duality-invariant; the Maxwell theory $\mathcal{L} = S$ is a counter-example.

Another difference between (\ref{EM_duality_pde}) and (\ref{duality_PDE}) is that the differential equation for $f ( S, P )$ is linear, which allows one to solve it using the method of characteristics, whereas the equation for $\mathcal{L} ( S, P )$ is non-linear. Nonetheless, this equation can be solved and the general solution to this duality-invariance condition for $\mathcal{L}$ is also described by a function of a single variable -- see also previous discussions in \cite{Gaillard:1997rt,Hatsuda:1999ys,Ivanov:2003uj}. 

For completeness, we now briefly review the standard argument for this conclusion. It is first convenient to rewrite equation (\ref{EM_duality_pde}) in new variables. Recalling the definitions
\begin{align}
    \varphi = F^{\alpha\beta}F_{\alpha\beta},\quad \overbar{\varphi}=\overbar{F}_{\dot\alpha\dot\beta}\overbar{F}^{\dot\alpha\dot\beta} \, , 
\end{align}
introduced in Section \ref{sec:auxreview}, let us define
\begin{align}
    p = \frac{1}{4} \left( \varphi + \overbar{\varphi} \right) + \frac{1}{2} \sqrt{ \varphi \overbar{\varphi}} \, , \qquad q = \frac{1}{4} \left( \varphi + \overbar{\varphi} \right) - \frac{1}{2} \sqrt{ \varphi \overbar{\varphi}} \, .
\end{align}
In terms of these variables, the differential equation for the Lagrangian becomes
\begin{align}\label{courant_hilbert}
    \mathcal{L}_p \mathcal{L}_q = 1 \, , 
\end{align}
which is known as the Courant-Hilbert equation \cite{courant_hilbert}. The general solution to this differential equation is
\begin{align}\label{courant_hilbert_solution}
    \mathcal{L} ( p, q ) = v ( s ) + \frac{2 p}{v'(s)} \, , 
\end{align}
where $v(s)$ is an arbitrary function of one variable, and the auxiliary variable $s$ is related to the dynamical quantities $p, q$ by
\begin{align}
    q = s + \frac{p}{ \left( v' ( s ) \right)^2 } \, .
\end{align}
This makes it clear that theories of duality-invariant electrodynamics, without higher derivative interactions -- so that the Lagrangian depends on $S, P$ but not invariants involving $\partial_\rho F_{\mu \nu}$ and so forth -- are in one-to-one correspondence with functions of a single real variable $v(s)$.

One might have expected this fact from the discussion of the auxiliary field representations of Section \ref{sec:auxreview}. Indeed, \emph{any} solution to the duality-invariance condition (\ref{courant_hilbert}) also admits an auxiliary field description in the $\nu$ frame in terms of an interaction function $\cE ( a )$ where $a = \nu \overbar{\nu}$, as mentioned in \cite{Ivanov:2003uj}. Therefore, duality-invariant theories (again, without higher-derivative terms) may be viewed as being in one-to-one correspondence with univariate functions in two ways: each such theory is described by either a function $v(s)$ as in (\ref{courant_hilbert_solution}) or by a function $\cE( a )$ in the $\nu$ representation.

\subsection{Vanishing of Jacobian determinant}\label{app:determinant}

In this Appendix we will explain the brief computation which leads to the vanishing of the Jacobian determinant (\ref{jacobian_zero}) for theories of self-dual electrodynamics. We aim to compute the determinant of the matrix
\begin{align}\label{jacobian_appendix}
    J = \begin{bmatrix} \frac{\partial \Theta}{\partial S} & \frac{\partial \Theta}{\partial P} \\ \frac{\partial T^2}{\partial S} & \frac{\partial T^2}{\partial P} \end{bmatrix} \, ,
\end{align}
where we repeat the expressions for $\Theta$ and $T^2 = T^{\mu \nu} T_{\mu \nu}$ that were given in Section \ref{sec:generalities},
\begin{align}\label{theta_Tsq_last_appendix}
    \Theta &= 4 \left( \mathcal{L}  - P\mathcal{L}_P - S\mathcal{L}_S \right) \, , \nonumber \\
    T^2 &= 4\left( S^2 + P^2 \right) \mathcal{L}_S^2 + 4\left( \mathcal{L} - P \mathcal{L}_P - S \mathcal{L}_S \right)^2 \, .
\end{align}
It is straightforward to compute the four elements of the Jacobian matrix $J$ by taking derivatives of (\ref{theta_Tsq_last_appendix}) with respect to $S$ and $P$,
\begin{align}
    \tensor{J}{^\Theta_S} &= \frac{d \Theta}{d S} = - 4 \left( P \mathcal{L}_{SP} + S \mathcal{L}_{SS} \right) \, , \nonumber \\
    \tensor{J}{^\Theta_P} &= \frac{d \Theta}{d P} = - 4 \left( P \mathcal{L}_{PP} + S \mathcal{L}_{SP} \right) \, , \nonumber \\
    \tensor{J}{^{T^2}_S} &= \frac{d T^2}{d S} = 8 S \mathcal{L}_S^2 + 8 \left( S^2 + P^2 \right) \mathcal{L}_S \mathcal{L}_{SS} - 8 \left( \mathcal{L} - P \mathcal{L}_P - S \mathcal{L}_S \right) \left( P \mathcal{L}_{SP} + S \mathcal{L}_{SS} \right) \, , \nonumber \\
    \tensor{J}{^{T^2}_P} &= \frac{d T^2}{d P} = 8 P \mathcal{L}_S^2 + 8 \left( S^2 + P^2 \right) \mathcal{L}_S \mathcal{L}_{SP} - 8 \left( \mathcal{L} - P \mathcal{L}_P - S \mathcal{L}_S \right) \left( P \mathcal{L}_{PP} + S \mathcal{L}_{SP} \right) \, .
\end{align}
We can then write out the Jacobian determinant explicitly:
\begin{align}\label{jacobian_general_appendix}
    &\det \left( J \right) = \tensor{J}{^\Theta_S} \tensor{J}{^{T^2}_P} - \tensor{J}{^\Theta_P} \tensor{J}{^{T^2}_S} \nonumber \\
    &\quad = - 32 \left( P \mathcal{L}_{SP} + S \mathcal{L}_{SS} \right) \left( P \mathcal{L}_S^2 + \left( S^2 + P^2 \right) \mathcal{L}_S \mathcal{L}_{SP} - \left( \mathcal{L} - P \mathcal{L}_P - S \mathcal{L}_S \right) \left( P \mathcal{L}_{PP} + S \mathcal{L}_{SP} \right) \right) \nonumber \\
    &\qquad + 32 \left( P \mathcal{L}_{PP} + S \mathcal{L}_{SP} \right) \left( S \mathcal{L}_S^2 + \left( S^2 + P^2 \right) \mathcal{L}_S \mathcal{L}_{SS} - \left( \mathcal{L} - P \mathcal{L}_P - S \mathcal{L}_S \right) \left( P \mathcal{L}_{SP} + S \mathcal{L}_{SS} \right) \right) \, . 
\end{align}
For a generic theory of non-linear electrodynamics, the function $\mathcal{L} ( S, P )$ will not satisfy any particular differential equation relating its derivatives with respect to $S$ and $P$, and the Jacobian determinant (\ref{jacobian_general_appendix}) will be non-vanishing.

However, for a theory of non-linear electrodynamics, the Lagrangian satisfies the partial differential equation (\ref{EM_duality_pde}). As we used in Appendix \ref{app:inductive_proof} above, this duality-invariance condition also implies constraints on the second derivatives of the Lagrangian, which are obtained by differentiating the constraint (\ref{EM_duality_pde}) with respect to $S$ and $P$. These additional relations were presented in equation (\ref{first_deriv_constraints}), which we repeat for convenience:
\begin{align}\label{deriv_constraints_last_appendix}
    \mathcal{L}_S \mathcal{L}_{SS} - \frac{1}{P} \mathcal{L}_S \mathcal{L}_P - \frac{S}{P} \left( \mathcal{L}_{SS} \mathcal{L}_P + \mathcal{L}_S \mathcal{L}_{SP} \right) - \mathcal{L}_P \mathcal{L}_{SP} &= 0 \, , \nonumber \\
    \mathcal{L}_S \mathcal{L}_{SP} + \frac{S}{P^2} \mathcal{L}_S \mathcal{L}_P - \frac{S}{P} \left( \mathcal{L}_{SP} \mathcal{L}_P + \mathcal{L}_S \mathcal{L}_{PP} \right) - 2 \mathcal{L}_P \mathcal{L}_{PP}  &= 0 \, .
\end{align}
After substituting the constraints (\ref{EM_duality_pde}) and (\ref{deriv_constraints_last_appendix}) into the expression (\ref{jacobian_general_appendix}) for the determinant and simplifying, one finds
\begin{align}
    \det \left( J \right) = 0 \, , 
\end{align}
which means that this change of variables is singular.

We have therefore shown that, in a theory of non-linear electrodynamics, there is a functional relation between the two invariants $\Theta$ and $T^2$ that can be constructed from the stress tensor. Of course, it immediately follows that any other pair of independent Lorentz scalars constructed from the stress tensor will also be dependent in such theories. For instance, in the main text of this paper we have sometimes parameterized functions of the stress tensor in terms of the two scalars $\left( \Theta, \widehat{T}^2 \right)$, where $\widehat{T}$ is the traceless part of the stress tensor, rather than in terms of $\left( \Theta, T^2 \right)$. The same conclusion $\det \left( J \right) = 0$ applies to the change of variables from $(S, P)$ to $( \Theta, \widehat{T}^2 )$, or indeed to any other two variables
\begin{align}
    X_1 \left( T_{\mu \nu} \right) \, , \quad X_2 \left( T_{\mu \nu} \right) \, .
\end{align}
To see this, we can simply enact a change of variables from $(S, P)$ to $(X_1, X_2)$ in two steps,
\begin{align}\label{combined_transformation}
    ( S , P ) \to \left( \Theta (S, P) , T^2 ( S, P ) \right) \to \left( X_1 ( \Theta, T^2 ) , X_2 ( \Theta, T^2 ) \right) \, .
\end{align}
The Jacobian for the combined transformation (\ref{combined_transformation}) is then given by the product
\begin{align}
    J \Big[ (S, P) \to (X_1, X_2) \Big] = J \Big[ (S, P) \to (\Theta, T^2) \Big] \cdot J \Big[  (\Theta, T^2) \to ( X_1 , X_2 ) \Big] \, ,
\end{align}
and by the property $\det \left( A B \right) = \det \left( A \right) \det \left( B \right)$ of determinants,
\begin{align}\label{two_dets_product}
    \det \left\{ J \Big[ (S, P) \to (X_1, X_2) \Big] \right\} = \det \left\{ J \Big[ (S, P) \to (\Theta, T^2) \Big] \right\} \cdot \det \left\{ J \Big[  (\Theta, T^2) \to ( X_1 , X_2 ) \Big] \right\} \, .
\end{align}
But we have already seen that the first determinant on the right side of equation (\ref{two_dets_product}) vanishes, so the Jacobian determinant for the combined change of variables also vanishes. Therefore any two Lorentz scalars $X_1, X_2$ constructed from the energy-momentum tensor of a theory of duality-invariant electrodynamics are functionally dependent.

\begin{footnotesize}

\end{footnotesize}


\begin{thebibliography}{66}

\bibitem{Schwarz:1993vs}
J.~H.~Schwarz and A.~Sen,
``Duality symmetric actions,''
Nucl. Phys. B \textbf{411}, 35-63 (1994),
\href{https://arxiv.org/abs/hep-th/9304154}{{\tt hep-th/9304154}}.

\bibitem{Sen:1994yi}
A.~Sen,
``Dyon - monopole bound states, selfdual harmonic forms on the multi - monopole moduli space, and SL(2,Z) invariance in string theory,''
Phys. Lett. B \textbf{329}, 217-221 (1994),
\href{https://arxiv.org/abs/hep-th/9402032}{{\tt hep-th/9402032}}.


\bibitem{Schwarz:1995dk}
J.~H.~Schwarz,
``An SL(2,Z) multiplet of type IIB superstrings,''
Phys. Lett. B \textbf{360}, 13-18 (1995)
[erratum: Phys. Lett. B \textbf{364}, 252 (1995)],
\href{https://arxiv.org/abs/hep-th/9508143}{{\tt hep-th/9508143}}.

\bibitem{Montonen:1977sn}
C.~Montonen and D.~I.~Olive,
``Magnetic Monopoles as Gauge Particles?,''
Phys. Lett. B \textbf{72}, 117-120 (1977)
\href{https://doi.org/10.1016/0370-2693(77)90076-4}{doi:10.1016/0370-2693(77)90076-4}.



\bibitem{Witten:1978mh}
E.~Witten and D.~I.~Olive,
``Supersymmetry Algebras That Include Topological Charges,''
Phys. Lett. B \textbf{78}, 97-101 (1978)
doi:10.1016/0370-2693(78)90357-X

\bibitem{Osborn:1979tq}
H.~Osborn,
``Topological Charges for N=4 Supersymmetric Gauge Theories and Monopoles of Spin 1,''
Phys. Lett. B \textbf{83}, 321-326 (1979)
doi:10.1016/0370-2693(79)91118-3

\bibitem{Calkin}
M.~G. Calkin, ``{An Invariance Property of the Free Electromagnetic Field}'',  American Journal of
Physics 33.11 (Nov. 1965), pp. 958–960, doi: \href{https://pubs.aip.org/aapt/ajp/article-abstract/33/11/958/1047246/An-Invariance-Property-of-the-Free-Electromagnetic?redirectedFrom=fulltext}{10.1119/1.1971089}.

\bibitem{Gaillard:1981rj}
M.~K. Gaillard and B.~Zumino, ``{Duality Rotations for Interacting Fields},''
  {\em Nucl. Phys. B} {\bf 193} (1981) 221--244, \href{https://doi.org/10.1016/0550-3213(81)90527-7}{doi: 10.1016/0550-3213(81)90527-7}.

\bibitem{Gibbons:1995cv}
G.~W. Gibbons and D.~A. Rasheed, ``{Electric - magnetic duality rotations in
  non-linear electrodynamics},'' {\em Nucl. Phys. B} {\bf 454} (1995) 185--206,
  \href{http://www.arXiv.org/abs/hep-th/9506035}{{\tt hep-th/9506035}}.


\bibitem{Gaillard:1997zr}
M.~K. Gaillard and B.~Zumino, ``{Selfduality in non-linear electromagnetism},''
  {\em Lect. Notes Phys.} {\bf 509} (1998) 121,
  \href{http://www.arXiv.org/abs/hep-th/9705226}{{\tt hep-th/9705226}}.
  
  \bibitem{Gaillard:1997rt}
M.~K. Gaillard and B.~Zumino, ``{Non-linear electromagnetic selfduality and
  Legendre transformations},'' in {\em {A Newton Institute Euroconference on
  Duality and Supersymmetric Theories}}, pp.~33--48.
\newblock 12, 1997.
\newblock \href{http://www.arXiv.org/abs/hep-th/9712103}{{\tt hep-th/9712103}}.


\bibitem{Hatsuda:1999ys}
M.~Hatsuda, K.~Kamimura, and S.~Sekiya, ``{Electric magnetic duality invariant
  Lagrangians},'' {\em Nucl. Phys. B} {\bf 561} (1999) 341--353,
  \href{http://www.arXiv.org/abs/hep-th/9906103}{{\tt hep-th/9906103}}.


\bibitem{Kuzenko:2000uh}
S.~M.~Kuzenko and S.~Theisen,
``Nonlinear self-duality and supersymmetry,''
Fortsch. Phys. \textbf{49} (2001), 273-309,
\href{https://arxiv.org/abs/hep-th/0007231}{{\tt hep-th/0007231}}.

\bibitem{Ivanov:2002ab}
E.~A. Ivanov and B.~M. Zupnik, ``{New representation for Lagrangians of
  selfdual non-linear electrodynamics},'' in {\em {Supersymmetries and Quantum
  Symmetries. Proceedings, 16th Max Born Symposium, SQS'01: Karpacz, Poland,
  September 21-25, 2001}}, pp.~235--250.
\newblock 2002.
\newblock
\href{http://www.arXiv.org/abs/hep-th/0202203}{{\tt hep-th/0202203}}.
\newblock

\bibitem{Ivanov:2003uj}
E.~A. Ivanov and B.~M. Zupnik, ``{New approach to non-linear electrodynamics:
  Dualities as symmetries of interaction},'' {\em Phys. Atom. Nucl.} {\bf 67}
  (2004) 2188--2199, \href{http://www.arXiv.org/abs/hep-th/0303192}{{\tt
  hep-th/0303192}}.


\bibitem{B-B} I. Bialynicki-Birula, ``Nonlinear electrodynamics: Variations on a theme by Born and Infeld,'' in {\it Quantum Theory of Particles and Fields}, 
B. Jancewicz and J. Lukierski (Eds.), 
World Scientific, 1983, pp. 31--48. 


\bibitem{AFZ}
P.~Aschieri, S.~Ferrara and B.~Zumino,
``Duality rotations in nonlinear electrodynamics and in extended supergravity,''
  Riv.\ Nuovo Cim.\  {\bf 31}, 625 (2008),
  \href{https://arxiv.org/abs/0807.4039}{{\tt 0807.4039}} .


\bibitem{Chemissany:2011yv} 
  W.~Chemissany, R.~Kallosh and T.~Ortin,
  ``Born-Infeld with higher derivatives,''
  Phys.\ Rev.\ D {\bf 85}, 046002 (2012), \href{https://arxiv.org/abs/1112.0332}{{\tt 1112.0332}}.

\bibitem{AF} 
  P.~Aschieri and S.~Ferrara,
  ``Constitutive relations and Schroedinger's formulation of nonlinear electromagnetic theories,''
  JHEP {\bf 1305}, 087 (2013),
  \href{https://arxiv.org/abs/1302.4737}{{\tt 1302.4737}}.

\bibitem{AFT} 
  P.~Aschieri, S.~Ferrara and S.~Theisen,
  ``Constitutive relations, off shell duality rotations and the hypergeometric form of Born-Infeld theory,''
  Springer Proc.\ Phys.\  {\bf 153}, 23 (2014), \href{https://arxiv.org/abs/1310.2803}{{\tt 1310.2803}}.

\bibitem{Kuzenko:2000tg}
S.~M. Kuzenko and S.~Theisen, ``{Supersymmetric duality rotations},'' {\em
  JHEP} {\bf 03} (2000) 034,
  \href{http://www.arXiv.org/abs/hep-th/0001068}{{\tt hep-th/0001068}}.


\bibitem{Zamolodchikov:2004ce}
A.~B. Zamolodchikov, ``{Expectation value of composite field T anti-T in
  two-dimensional quantum field theory},''
\href{http://www.arXiv.org/abs/hep-th/0401146}{{\tt hep-th/0401146}}.

\bibitem{Smirnov:2016lqw}
F.~A. Smirnov and A.~B. Zamolodchikov, ``{On space of integrable quantum field
  theories},'' {\em Nucl. Phys.} {\bf B915} (2017) 363--383,
\href{http://www.arXiv.org/abs/1608.05499}{{\tt 1608.05499}}.

\bibitem{Cavaglia:2016oda}
A.~Cavagli\`a, S.~Negro, I.~M. Sz\'ecs\'enyi, and R.~Tateo, ``{$T
  \bar{T}$-deformed 2D Quantum Field Theories},'' {\em JHEP} {\bf 10} (2016)
  112,
\href{http://www.arXiv.org/abs/1608.05534}{{\tt 1608.05534}}.

\bibitem{Dubovsky:2017cnj}
S.~Dubovsky, V.~Gorbenko, and M.~Mirbabayi, ``{Asymptotic fragility, near
  AdS$_{2}$ holography and $ T\overline{T} $},'' {\em JHEP} {\bf 09} (2017)
  136,
\href{http://www.arXiv.org/abs/1706.06604}{{\tt 1706.06604}}.

\bibitem{Cardy:2018sdv}
J.~Cardy, ``{The $ T\overline{T} $ deformation of quantum field theory as
  random geometry},'' {\em JHEP} {\bf 10} (2018) 186,
\href{http://www.arXiv.org/abs/1801.06895}{{\tt 1801.06895}}.

\bibitem{Datta:2018thy}
S.~Datta and Y.~Jiang, ``{$T\bar{T}$ deformed partition functions},'' {\em
  JHEP} {\bf 08} (2018) 106,
\href{http://www.arXiv.org/abs/1806.07426}{{\tt 1806.07426}}.

\bibitem{Aharony:2018bad}
O.~Aharony, S.~Datta, A.~Giveon, Y.~Jiang, and D.~Kutasov, ``{Modular
  invariance and uniqueness of $T\bar{T}$ deformed CFT},'' {\em JHEP} {\bf 01}
  (2019) 086,
\href{http://www.arXiv.org/abs/1808.02492}{{\tt 1808.02492}}.

\bibitem{Chen:2021aid}
B.~Chen, J.~Hou, and J.~Tian, ``{Lax connections in $T\bar{T}$-deformed
  integrable field theories},'' {\em Chin. Phys. C} {\bf 45} (2021), no.~9,
  093112, \href{http://www.arXiv.org/abs/2102.01470}{{\tt 2102.01470}}.

\bibitem{Baggio:2018rpv}
M.~Baggio, A.~Sfondrini, G.~Tartaglino-Mazzucchelli, and H.~Walsh, ``{On $
  T\overline{T} $ deformations and supersymmetry},'' {\em JHEP} {\bf 06} (2019)
  063,
\href{http://www.arXiv.org/abs/1811.00533}{{\tt 1811.00533}}.

\bibitem{Chang:2018dge}
C.-K. Chang, C.~Ferko, and S.~Sethi, ``{Supersymmetry and $ T\overline{T} $
  deformations},'' {\em JHEP} {\bf 04} (2019) 131,
\href{http://www.arXiv.org/abs/1811.01895}{{\tt 1811.01895}}.

\bibitem{Jiang:2019hux}
H.~Jiang, A.~Sfondrini, and G.~Tartaglino-Mazzucchelli, ``{$T\bar{T}$
  deformations with $\mathcal{N}=(0,2)$ supersymmetry},'' {\em Phys. Rev.} {\bf
  D100} (2019), no.~4, 046017,
\href{http://www.arXiv.org/abs/1904.04760}{{\tt 1904.04760}}.

\bibitem{Chang:2019kiu}
C.-K. Chang, C.~Ferko, S.~Sethi, A.~Sfondrini, and G.~Tartaglino-Mazzucchelli,
  ``{$T\bar{T}$ flows and (2,2) supersymmetry},'' {\em Phys. Rev. D} {\bf 101}
  (2020), no.~2, 026008, \href{http://www.arXiv.org/abs/1906.00467}{{\tt
  1906.00467}}.

\bibitem{Coleman:2019dvf}
E.~A.~Coleman, J.~Aguilera-Damia, D.~Z.~Freedman and R.~M.~Soni,
``$ T\overline{T} $ -deformed actions and (1,1) supersymmetry,''
JHEP \textbf{10}, 080 (2019),
\href{https://arxiv.org/abs/1906.05439}{{\tt 1906.05439}}.

\bibitem{Ferko:2019oyv}
C.~Ferko, H.~Jiang, S.~Sethi, and G.~Tartaglino-Mazzucchelli, ``{Non-linear
  supersymmetry and $ T\overline{T} $-like flows},'' {\em JHEP} {\bf 02} (2020)
  016, \href{http://www.arXiv.org/abs/1910.01599}{{\tt 1910.01599}}.

\bibitem{Ebert:2022xfh}
S.~Ebert, C.~Ferko, H.-Y. Sun, and Z.~Sun, ``{$ T\overline{T} $ deformations of
  supersymmetric quantum mechanics},'' {\em JHEP} {\bf 08} (2022) 121,
  \href{http://www.arXiv.org/abs/2204.05897}{{\tt 2204.05897}}.

\bibitem{Lee:2021iut}
K.~S.~Lee, P.~Yi and J.~Yoon,
``$ T\overline{T} $-deformed fermionic theories revisited,''
JHEP \textbf{07}, 217 (2021),
\href{https://arxiv.org/abs/2104.09529}{{\tt 2104.09529}}.

\bibitem{Lee:2023uxj}
K.~S.~Lee and J.~Yoon,
``$T\overline{T}$ Deformation of $\mathcal{N}=(1,1)$ Off-Shell Supersymmetry and Partially Broken Supersymmetry,''
\href{https://arxiv.org/abs/2306.08030}{{\tt 2306.08030}}.


\bibitem{Jiang:2019epa}
Y.~Jiang,
``A pedagogical review on solvable irrelevant deformations of 2D quantum field theory,''
Commun. Theor. Phys. \textbf{73}, no.5, 057201 (2021),
\href{https://arxiv.org/abs/1904.13376}{{\tt 1904.13376}}.

\bibitem{Taylor:2018xcy}
M.~Taylor, ``{TT deformations in general dimensions},''
\href{http://www.arXiv.org/abs/1805.10287}{{\tt 1805.10287}}.

\bibitem{Bonelli:2018kik}
G.~Bonelli, N.~Doroud and M.~Zhu,
``$T \bar{T}$-deformations in closed form,''
JHEP \textbf{06}, 149 (2018),
\href{https://arxiv.org/abs/1804.10967}{{\tt 1804.10967}}.

\bibitem{Conti:2018jho}
R.~Conti, L.~Iannella, S.~Negro and R.~Tateo,
``Generalised Born-Infeld models, Lax operators and the $ \mathrm{T}\overline{\mathrm{T}} $ perturbation,''
JHEP \textbf{11}, 007 (2018),
\href{https://arxiv.org/pdf/1806.11515.pdf}{{\tt 1806.11515}}.

\bibitem{Giveon:2017nie}
A.~Giveon, N.~Itzhaki and D.~Kutasov,
``$ \mathrm{T}\overline{\mathrm{T}} $ and LST,''
JHEP \textbf{07}, 122 (2017),
\href{https://arxiv.org/abs/1701.05576}{{\tt 1701.05576}}.

\bibitem{Giveon:2017myj}
A.~Giveon, N.~Itzhaki and D.~Kutasov,
``A solvable irrelevant deformation of AdS$_{3}$/CFT$_{2}$,''
JHEP \textbf{12}, 155 (2017),
\href{https://arxiv.org/abs/1707.05800}{{\tt 1707.05800}}.

\bibitem{Asrat:2017tzd}
M.~Asrat, A.~Giveon, N.~Itzhaki and D.~Kutasov,
``Holography Beyond AdS,''
Nucl. Phys. B \textbf{932}, 241-253 (2018),
\href{https://arxiv.org/abs/1711.02690}{{\tt 1711.02690}}.

\bibitem{Apolo:2019zai}
L.~Apolo, S.~Detournay and W.~Song,
``TsT, $T\bar{T}$ and black strings,''
JHEP \textbf{06}, 109 (2020),
\href{https://arxiv.org/abs/1911.12359}{{\tt 1911.12359}}.

\bibitem{Chang:2023kkq}
C.~K.~Chang, C.~Ferko and S.~Sethi,
``Holography and irrelevant operators,''
Phys. Rev. D \textbf{107}, no.12, 126021 (2023), \href{https://arxiv.org/abs/2302.03041}{{\tt 2302.03041}}.

\bibitem{Ferko:2023sps}
C.~Ferko, Y.~Hu, Z.~Huang, K.~Koutrolikos, and G.~Tartaglino-Mazzucchelli,
  ``{$T \overline{T}$-Like Flows and $3d$ Non-linear Supersymmetry},''
  \href{http://www.arXiv.org/abs/2302.10410}{{\tt 2302.10410}}.

\bibitem{Ferko:2022cix}
C.~Ferko, A.~Sfondrini, L.~Smith, and G.~Tartaglino-Mazzucchelli, ``{Root-$T
  \bar T$ Deformations in Two-Dimensional Quantum Field Theories},'' {\em Phys.
  Rev. Lett.} {\bf 129} (2022), no.~20, 201604,
  \href{http://www.arXiv.org/abs/2206.10515}{{\tt 2206.10515}}.

  \bibitem{Rodriguez:2021tcz}
P.~Rodroguez, D.~Tempo and R.~Troncoso,
``Mapping relativistic to ultra/non-relativistic conformal symmetries in 2D and finite $ \sqrt{T\overline{T}} $ deformations,''
JHEP \textbf{11}, 133 (2021),
\href{https://arxiv.org/abs/2106.09750}{{\tt 2106.09750}}.

\bibitem{Bagchi:2022nvj}
A.~Bagchi, A.~Banerjee and H.~Muraki,
``Boosting to BMS,''
JHEP \textbf{09}, 251 (2022), \href{https://arxiv.org/abs/2205.05094}{{\tt 2205.05094}}.

\bibitem{Tempo:2022ndz}
D.~Tempo and R.~Troncoso,
``Non-linear automorphism of the conformal algebra in 2D and continuous $ \sqrt{T\overline{T}} $ deformations,''
JHEP \textbf{12}, 129 (2022),
\href{https://arxiv.org/abs/2210.00059}{{\tt 2210.00059}}.

\bibitem{Borsato:2022tmu}
R.~Borsato, C.~Ferko, and A.~Sfondrini, ``{On the Classical Integrability of
  Root-$T \overline{T}$ Flows},''
  Phys. Rev. D \textbf{107} (2023) no.8, 086011,1
  \href{http://www.arXiv.org/abs/2209.14274}{{\tt 2209.14274}}.

  
\bibitem{Ebert:2023tih}
S.~Ebert, C.~Ferko and Z.~Sun,
``Root-$T \overline{T}$ deformed boundary conditions in holography,''
Phys. Rev. D \textbf{107}, no.12, 12 (2023), \href{https://arxiv.org/abs/2304.08723}{{\tt 2304.08723}}.

\bibitem{Guica:2019nzm}
M.~Guica and R.~Monten,
``$T\bar T$ and the mirage of a bulk cutoff,''
SciPost Phys. \textbf{10}, no.2, 024 (2021), \href{https://arxiv.org/abs/1906.11251}{{\tt 1906.11251}}.

\bibitem{Ebert:2022ehb}
S.~Ebert, C.~Ferko, H.~Y.~Sun and Z.~Sun,
``$T\bar{T}$ in JT Gravity and BF Gauge Theory,''
SciPost Phys. \textbf{13}, no.4, 096 (2022), \href{https://arxiv.org/abs/2205.07817}{{\tt 2205.07817}}.

\bibitem{Bandos:2020jsw}
I.~Bandos, K.~Lechner, D.~Sorokin, and P.~K. Townsend, ``{A non-linear
  duality-invariant conformal extension of Maxwell's equations},'' {\em Phys.
  Rev. D} {\bf 102} (2020) 121703,
  \href{http://www.arXiv.org/abs/2007.09092}{{\tt 2007.09092}}.

\bibitem{Kuzenko:2021cvx}
S.~M.~Kuzenko,
``Superconformal duality-invariant models and $ \mathcal{N} $ = 4 SYM effective action,''
JHEP \textbf{09}, 180 (2021),
\href{https://arxiv.org/abs/2106.07173}{{\tt 2106.07173}}.

\bibitem{Sorokin:2021tge}
D.~P.~Sorokin,
``Introductory Notes on Non-linear Electrodynamics and its Applications,''
Fortsch. Phys. \textbf{70}, no.7-8, 2200092 (2022), \href{https://arxiv.org/abs/2112.12118}{{\tt 2112.12118}}.

  \bibitem{Bandos:2021rqy}
I.~Bandos, K.~Lechner, D.~Sorokin and P.~K.~Townsend,
``ModMax meets Susy,''
JHEP \textbf{10}, 031 (2021),
\href{https://arxiv.org/abs/2106.07547}{{\tt 2106.07547}}.

\bibitem{Bandos:2020hgy}
I.~Bandos, K.~Lechner, D.~Sorokin, and P.~K. Townsend, ``{On p-form gauge
  theories and their conformal limits},'' {\em JHEP} {\bf 03} (2021) 022,
  \href{http://www.arXiv.org/abs/2012.09286}{{\tt 2012.09286}}.


\bibitem{Garcia:2022wad}
J.~A.~Garcia and R.~A.~Sanchez-Isidro,
``$\sqrt{T\overline{T}}$-deformed oscillator inspired by ModMax,''
Eur. Phys. J. Plus \textbf{138}, no.2, 114 (2023),
\href{https://arxiv.org/abs/2209.06296}{{\tt 2209.06296}}.


\bibitem{Ferko:2023ozb}
C.~Ferko and A.~Gupta,
``ModMax oscillators and root-$T\overline{T}$-like flows in supersymmetric quantum mechanics,''
Phys. Rev. D \textbf{108}, no.4, 4 (2023),
\href{https://arxiv.org/abs/2306.14575}{{\tt 2306.14575}}.

\bibitem{Ferko:2023iha}
C.~Ferko, A.~Gupta and E.~Iyer,
``Quantization of the ModMax Oscillator,''
\href{https://arxiv.org/abs/2310.06015}{{\tt 2310.06015}}.

\bibitem{Kuzenko:2023ysh}
S.~M.~Kuzenko and I.~N.~McArthur,
``A supersymmetric non-linear sigma model analogue of the ModMax theory,''
JHEP \textbf{05}, 127 (2023),
\href{https://arxiv.org/abs/2303.15139}{{\tt 2303.15139}}.


\bibitem{Babaei-Aghbolagh:2020kjg}
H.~Babaei-Aghbolagh, K.~Babaei Velni, D.~M.~Yekta and H.~Mohammadzadeh,
``$ T\overline{T} $-like flows in non-linear electrodynamic theories and S-duality,''
JHEP \textbf{04}, 187 (2021),
\href{https://arxiv.org/abs/2012.13636}{{\tt 2012.13636}}.

\bibitem{Babaei-Aghbolagh:2022itg}
H.~Babaei-Aghbolagh, K.~Babaei Velni, D.~M.~Yekta and H.~Mohammadzadeh,
``Manifestly SL(2, R) Duality-Symmetric Forms in ModMax Theory,''
JHEP \textbf{12}, 147 (2022),
\href{https://arxiv.org/abs/2210.13196}{{\tt 2210.13196}}.

\bibitem{Ferko:2023ruw}
C.~Ferko, L.~Smith, and G.~Tartaglino-Mazzucchelli, ``{Stress Tensor Flows,
  Birefringence in Non-Linear Electrodynamics, and Supersymmetry},''
  \href{http://www.arXiv.org/abs/2301.10411}{{\tt 2301.10411}}.

\bibitem{Babaei-Aghbolagh:2022uij}
H.~Babaei-Aghbolagh, K.~B. Velni, D.~M. Yekta, and H.~Mohammadzadeh,
  ``{Emergence of non-linear electrodynamic theories from $T\bar{T}$-like
  deformations},'' \href{http://www.arXiv.org/abs/2202.11156}{{\tt
  2202.11156}}.

\bibitem{Ferko:2022iru}
C.~Ferko, L.~Smith, and G.~Tartaglino-Mazzucchelli, ``{On Current-Squared Flows
  and ModMax Theories},'' {\em SciPost Phys.} {\bf 13} (2022), no.~2, 012,
  \href{http://www.arXiv.org/abs/2203.01085}{{\tt 2203.01085}}.

\bibitem{Ivanov:2001ec}
E.~A. Ivanov and B.~M. Zupnik, ``{N=3 supersymmetric Born-Infeld theory},''
  {\em Nucl. Phys. B} {\bf 618} (2001) 3--20,
  \href{http://www.arXiv.org/abs/hep-th/0110074}{{\tt hep-th/0110074}}.

  \bibitem{Schwinger:1951nm}
J.~S.~Schwinger,
``On gauge invariance and vacuum polarization,''
Phys. Rev. \textbf{82} (1951), 664-679
doi:10.1103/PhysRev.82.664

\bibitem{Ferko:2021loo}
C.~Ferko, {\em {Supersymmetry and Irrelevant Deformations}}.
\newblock PhD thesis, Chicago U., 2021.
\newblock \href{http://www.arXiv.org/abs/2112.14647}{{\tt 2112.14647}}.

\bibitem{Buchbinder:1998qv}
I.~L. Buchbinder and S.~M. Kuzenko, {\em {Ideas and methods of supersymmetry
  and supergravity: Or a walk through superspace}}.
\newblock Bristol, UK: IOP (1998) 656 p,
1998.

\bibitem{Ferko:2022dpg}
C.~Ferko and S.~Sethi,
``Sequential flows by irrelevant operators,''
SciPost Phys. \textbf{14}, no.5, 098 (2023),
\href{https://arxiv.org/abs/2206.04787}{{\tt 2206.04787}}.

\bibitem{Ivanov:2013prd}
E.~A. Ivanov and B.~M. Zupnik, ``Bispinor auxiliary fields in duality-invariant
  electrodynamics revisited,'' {\em Phys. Rev. D} {\bf 87} (Mar, 2013) 065023,   \href{https://arxiv.org/abs/1212.6637}{{\tt hep-th/1212.6637}}.

\bibitem{Costello:2019tri}
K.~Costello and M.~Yamazaki,
``Gauge Theory And Integrability, III,''
\href{https://arxiv.org/abs/1908.02289}{{\tt 1908.02289}}.

\bibitem{nekrasov_thesis}
N.~Nekrasov, ``Four Dimensional Holomorphic Theories,'' PhD thesis, Princeton University, 1996,
\href{http://media.scgp.stonybrook.edu/papers/prdiss96.pdf}{http://media.scgp.stonybrook.edu/papers/prdiss96.pdf}.

\bibitem{Costello:2013zra}
K.~Costello,
``Supersymmetric gauge theory and the Yangian,''
\href{https://arxiv.org/abs/1303.2632}{{\tt 1303.2632}}.

\bibitem{Lacroix:2021iit}
S.~Lacroix,
``Four-dimensional Chern\textendash{}Simons theory and integrable field theories,''
J. Phys. A \textbf{55}, no.8, 083001 (2022),
\href{https://arxiv.org/abs/2109.14278}{{\tt 2109.14278}}.


\bibitem{Py:2022hoa}
V.~Py,
``$ T\overline{T} $ deformations in curved space from 4D Chern-Simons theory,''
JHEP \textbf{08}, 101 (2022),
\href{https://arxiv.org/abs/2202.08841}{{\tt 2202.08841}}.


\bibitem{Levine:2021mxi}
N.~Levine,
``Integrability and RG flow in 2d sigma models,''
\href{https://arxiv.org/abs/2112.03928}{{\tt arXiv:2112.03928}}.

\bibitem{Levine:2022hpv}
N.~Levine,
``Universal 1-loop divergences for integrable sigma models,''
JHEP \textbf{03}, 003 (2023),
\href{https://arxiv.org/abs/2209.05502}{{\tt arXiv:2209.05502}}.

\bibitem{Pasti:1995tn}
P.~Pasti, D.~P. Sorokin, and M.~Tonin, ``{Duality symmetric actions with
  manifest space-time symmetries},'' {\em Phys. Rev. D} {\bf 52} (1995)
  R4277--R4281, \href{http://www.arXiv.org/abs/hep-th/9506109}{{\tt
  hep-th/9506109}}.

\bibitem{Pasti:1996vs}
P.~Pasti, D.~P. Sorokin, and M.~Tonin, ``{On Lorentz invariant actions for
  chiral p forms},'' {\em Phys. Rev. D} {\bf 55} (1997) 6292--6298,
  \href{http://www.arXiv.org/abs/hep-th/9611100}{{\tt hep-th/9611100}}.

\bibitem{Pasti:1997gx}
P.~Pasti, D.~P. Sorokin, and M.~Tonin, ``{Covariant action for a D = 11
  five-brane with the chiral field},'' {\em Phys. Lett. B} {\bf 398} (1997)
  41--46, \href{http://www.arXiv.org/abs/hep-th/9701037}{{\tt hep-th/9701037}}.

\bibitem{Bandos:2023yat}
I.~Bandos, K.~Lechner, D.~Sorokin, and P.~K. Townsend, ``{Trirefringence and
  the M5-brane},'' {\em JHEP} {\bf 06} (2023) 171,
  \href{http://www.arXiv.org/abs/2303.11485}{{\tt 2303.11485}}.

  \bibitem{Bandos:1997gm}
I.~A.~Bandos, K.~Lechner, A.~Nurmagambetov, P.~Pasti, D.~P.~Sorokin and M.~Tonin,
``On the equivalence of different formulations of the M theory five-brane,''
Phys. Lett. B \textbf{408}, 135-141 (1997)
doi:10.1016/S0370-2693(97)00784-3
[arXiv:hep-th/9703127 [hep-th]].

\bibitem{Avetisyan:2022zza}
Z.~Avetisyan, O.~Evnin, and K.~Mkrtchyan, ``{Non-linear (chiral) p-form
  electrodynamics},'' {\em JHEP} {\bf 08} (2022) 112,
  \href{http://www.arXiv.org/abs/2205.02522}{{\tt 2205.02522}}.

\bibitem{Dubovsky:2018bmo}
S.~Dubovsky, V.~Gorbenko and G.~Hern\'andez-Chifflet,
``{$ T\overline{T} $ partition function from topological gravity},''
{\emph JHEP} {\bf 09}, (2018) 158,
\href{http://www.arXiv.org/abs/1805.07386}{{\tt 1805.07386}}.

\bibitem{Tolley:2019nmm}
A.~J.~Tolley,
``{$ T\overline{T} $ deformations, massive gravity and non-critical strings},''
{\emph JHEP} {\bf 06}, (2020) 050,
\href{http://www.arXiv.org/abs/1911.06142}{{\tt 1911.06142}}.

\bibitem{Caputa:2020lpa}
P.~Caputa, P.~Caputa, S.~Datta, S.~Datta, Y.~Jiang, Y.~Jiang, P.~Kraus and P.~Kraus,
``Geometrizing $ T\overline{T} $,''
{\emph JHEP} {\bf 03},  (2021) 140,
[erratum: {\emph JHEP} {\bf 09},  (2022) 110],
\href{http://www.arXiv.org/abs/2011.04664}{{\tt 2011.04664}}.

\bibitem{Conti:2018tca}
R.~Conti, S.~Negro and R.~Tateo,
``{The $ \mathrm{T}\overline{\mathrm{T}} $ perturbation and its geometric interpretation},''
{\em JHEP} {\bf 02} (2019) 085,
\href{http://www.arXiv.org/abs/1809.09593}{{\tt 1809.09593}}.

\bibitem{Conti:2022egv}
R.~Conti, J.~Romano and R.~Tateo,
``{Metric approach to a $ \mathrm{T}\overline{\mathrm{T}} $-like deformation in arbitrary dimensions},''
{\em JHEP} {\bf 09} (2022) 085,
\href{http://www.arXiv.org/abs/2206.03415}{{\tt 2206.03415}}.

\bibitem{Baggio:2018gct}
M.~Baggio and A.~Sfondrini,
``{Strings on NS-NS Backgrounds as Integrable Deformations},''
{\em Phys. Rev. D} {\bf 98} (2018) 021902,
\href{http://www.arXiv.org/abs/1804.01998}{{\tt 1804.01998}}.

\bibitem{Frolov:2019nrr}
S.~Frolov,
``{$T\overline{T}$ Deformation and the Light-Cone Gauge},''
{\em Proc. Steklov Inst. Math.} {\bf 309}, (2020) 107-126,
\href{http://www.arXiv.org/abs/1905.07946}{{\tt 1905.07946}}.

\bibitem{Sfondrini:2019smd}
A.~Sfondrini and S.~J.~van Tongeren,
``{$T\bar{T}$ deformations as $TsT$ transformations},''
{\emph Phys. Rev. D} {\bf 101}, (2020) 066022,
\href{http://www.arXiv.org/abs/1908.09299}{{\tt 1908.09299}}.

\bibitem{Floss:2023nod}
T.~Fl\"oss, D.~Roest and T.~Westerdijk,
``{Non-linear Electrodynamics from Massive Gravity},''
\href{http://www.arXiv.org/abs/2308.04349}{{\tt 2308.04349}}.

\bibitem{Hou:2022csf}
J.~Hou, ``{$T\bar{T}$ flow as characteristic flows},''
  \href{http://www.arXiv.org/abs/2208.05391}{{\tt 2208.05391}}.

\bibitem{courant_hilbert}
R.~Courant and D. Hilbert. ``Methods of mathematical physics'' \textbf{Vol. II}, Interscience (1962).

\end{thebibliography}
\end{document}